\documentclass[aps,prl,twocolumn,superscriptaddress,nofootinbib,longbibliography,floatfix]{revtex4-2}

\usepackage{amsmath,amssymb,mathtools,bm}
\usepackage{amsthm}
\usepackage{graphicx}
\usepackage{xcolor}
\usepackage{booktabs}
\usepackage{hyperref}
\usepackage{mathrsfs}

\graphicspath{{figures/}}

\hypersetup{
  colorlinks=true,
  linkcolor=blue,
  citecolor=blue,
  urlcolor=blue
}

\newcommand{\ii}{\mathrm{i}}
\newcommand{\dd}{\mathrm d}

\newcommand{\1}{\mathbb{I}}
\newcommand{\prlhead}[1]{\emph{#1.---}}

\newtheorem{lemma}{Lemma}
\newtheorem{proposition}{Proposition}
\newtheorem{theorem}{Theorem}

\begin{document}

\title{Exact solutions of hidden free fermion models}

\author{Mingchen Zheng}
\email{zhengmc@iphy.ac.cn}
\affiliation{Beijing National Laboratory for Condensed Matter Physics,
Institute of Physics, Chinese Academy of Sciences, Beijing 100190, China}

\author{Bal\'azs Pozsgay}
\email{pozsgay.balazs@ttk.elte.hu}
\affiliation{MTA-ELTE ``Momentum'' Integrable Quantum Dynamics Research Group,
ELTE E\"otv\"os Lor\'and University, Budapest, Hungary}

\author{Shu Chen}
\email{schen@iphy.ac.cn}
\affiliation{Beijing National Laboratory for Condensed Matter Physics,
Institute of Physics, Chinese Academy of Sciences, Beijing 100190, China}
\affiliation{School of Physical Sciences, University of Chinese Academy of Sciences,
Beijing 100049, China}
\date{\today}

\begin{abstract}  
Hidden-free-fermion (HFF) Hamiltonians exhibit free-fermion spectra
despite lacking an explicit quadratic representation, yet their algebraic
provenance and periodic-boundary behavior remain open questions.  Here we
establish that a broad class of HFF models arises from distinct Lax
operator realizations of a twisted multistate Perk--Schultz \(R\) matrix.
Its reflected degeneration generates a closed finite hierarchy of
mutually commuting transfer matrices, whose functional relations
determine the complete many-body spectrum.   As
a decisive example, we solve the inhomogeneous
free-fermions-in-disguise (FFD) chain, which exhibits HFF,
nested HFF, or interacting spectra, depending on its coupling.  
The homogeneous FFD chain with both one-wrap and periodic
boundary conditions exhibits an emergent cubic finite-size gap
while retaining the exact \(z=3/2\) bulk behavior. 
The construction extends to higher-state and face-type
realizations, providing a common algebraic framework for
coupling-dependent spectral organization and multiscale
behavior in interacting quantum systems. 
\end{abstract}

\maketitle

\prlhead{Introduction}
Exactly solvable free-fermion systems, such as the transverse-field Ising
chain \cite{Pfeuty1970}, the XY model \cite{LiebSchultzMattis1961}, and the
Kitaev chain \cite{Kitaev2001}, provide fundamental insights into quantum
phase transitions \cite{Sachdev2011}, nonequilibrium dynamics
\cite{CalabreseCardy2006,Polkovnikov2011}, and topological phases
\cite{HasanKane2010,QiZhang2011}.  Recently, increasing attention has
focused on a class of interacting systems known as hidden free fermions
(HFFs), whose many-body spectra retain a free-fermion organization despite
the absence of any apparent quadratic representation.  Fendley introduced
the canonical HFF construction in a chain with local four-Majorana
interactions \cite{Fendley2019}, and subsequent studies extended HFF
solvability to broad families of interacting lattice models
\cite{AlcarazPimenta2020,Elman2021,ChapmanElmanMann2023,
FendleyPozsgay2024, FukaiPozsgayVona2026}.  Studies of these systems have revealed anomalous criticality with dynamical exponent \(z=3/2\) \cite{Fendley2019}, 
exponentially large spectral degeneracies \cite{Fendley2019,VernierPiroli2026}, 
lattice supersymmetry \cite{Fendley2019,FendleyPozsgay2024}, 
and exactly tractable dynamics\cite{VonaMestyanPozsgay2025,FukaiPozsgay2025,
SzaszSchagrinEtAl2026Circuits,SzaszSchagrinEtAl2026Dynamics}.

Despite these advances, existing exact constructions of HFF models often
depend either on model-specific algebraic constructions or on cancellation
patterns among local interaction terms.  Neither a common algebraic origin
of HFF solvability nor a complete framework encompassing its distinct
representations and boundary conditions has yet been established.  The
gap is sharpest under periodic closure: integrability survives, but the known
HFF constructions no longer determine the spectrum, leaving the periodic spectra
inaccessible \cite{Fendley2019,FendleyPozsgay2024}.  Two long-standing challenges therefore remain:
identifying the common integrable structure underlying HFF
solvability and determining the exact spectra under periodic
boundary conditions~\cite{Fendley2019,FendleyPozsgay2024}. 

In this Letter, we establish a unified algebraic and analytic
framework for HFF solvability through distinct Lax realizations
of a twisted multistate Perk--Schultz \(R\) matrix
\cite{Baxter1982,Faddeev1996,PerkSchultz1981}.
Its reflected degeneration generates a finite hierarchy of
mutually commuting transfer matrices, whose functional relations
yield closed equations for the many-body spectrum.
Different Lax representations specify the microscopic
Hamiltonians, while the couplings determine which fusion
levels decouple and which remain coupled.
The resulting hierarchy provides a common description of
HFF, nested HFF, and interacting spectra.
We demonstrate this framework through the exact solution
of the inhomogeneous free-fermions-in-disguise (FFD) chain.
Its different couplings realize all three spectral
structures, with the open, one-wrap, and periodic chains
as representative cases.
The homogeneous FFD chain exhibits an emergent cubic finite-size gap
under both one-wrap and periodic boundary conditions. 
In both cases, these cubic gaps coexist with the exact
\(z=3/2\) bulk behavior, yielding a multiscale excitation
spectrum.
The construction extends to higher-state and face-type
realizations, providing a unified approach to the exact
solution of HFF systems.

\prlhead{Common \(R\)   matrix framework} The Yang--Baxter equation provides the algebraic foundation for quantum
integrability in one dimension.  In lattice systems, its local realization
is encoded by the RLL relation
\begin{equation}
R_{0,\bar 0}(u,v)L_{0j}(u)L_{\bar 0j}(v)
=
L_{\bar 0j}(v)L_{0j}(u)R_{0,\bar 0}(u,v).
\label{eq:RLL-framework}
\end{equation}
Here \(R_{0,\bar 0}(u,v)\in
\operatorname{End}(V_0\otimes V_{\bar 0})\) is the \(R\) matrix,
\(L_{0j}(u)\in\operatorname{End}(V_0\otimes\mathcal H_j)\) is the local Lax
operator, \(V_0\) and \(V_{\bar 0}\) are \(k\)-dimensional auxiliary spaces,
and \(\mathcal H_j\) is the physical space at site \(j\). 

For a chain of \(N\) sites, the local Lax operators generate the monodromy
matrix
\begin{equation}
T_0(u)
=
L_{0N}(u)L_{0,N-1}(u)\cdots L_{01}(u).
\label{eq:monodromy-framework}
\end{equation}
The local RLL relation implies the corresponding RTT relation for
\(T_0(u)\).  Appropriate auxiliary closures of the monodromy matrix
therefore generate commuting transfer families, whose regular
expansions produce the Hamiltonian and its conserved charges. 
In this construction, the \(R\) matrix fixes the integrable structure,
whereas the choice of Lax operator selects the physical chain.

Our central observation is that a multistate \(R\) matrix can serve as
the common integrable parent of many HFF systems.
Writing \(E_{ab}=|a\rangle\langle b|\), the \(R\) matrix takes the form
\begin{align}
R_{0,\bar 0}(u,v)
&={}
(u+v)\sum_{a=1}^{k}E_{aa}\otimes E_{aa}
\nonumber\\
&+\sum_{a<b}\Big[
(v-u)E_{aa}\otimes E_{bb}
+(u-v)E_{bb}\otimes E_{aa}
\Big]
\nonumber\\
&+2u\sum_{a<b}E_{ab}\otimes E_{ba}
+2v\sum_{a<b}E_{ba}\otimes E_{ab}.
\label{eq:HFF-R}
\end{align}
This is a twisted \(k\)-state Perk--Schultz \(R\) matrix at the
root-of-unity point \(q=\ii\), and it satisfies the Yang--Baxter
equation~\cite{PerkSchultz1981}. 
For the \(R\) matrix~\eqref{eq:HFF-R}, let
\(L_{0j}^{[1]}(u)\equiv L_{0j}(u)\) denote a local Lax operator
satisfying the RLL relation~\eqref{eq:RLL-framework}, and assume
that it acts irreducibly on \(\mathcal H_j\).  The corresponding
monodromy and transfer matrices are denoted by
\(T_0^{[1]}(u)\equiv T_0(u)\) and
\(t^{[1]}(u)=\operatorname{tr}_0T_0^{[1]}(u)\), respectively.
As we now show, the reflected degeneration of \(R\) produces a
closed finite hierarchy of exact functional relations that solves
the transfer matrix problem. 

\prlhead{Finite transfer matrix hierarchy}
Our method is based on fusion
\cite{KulishReshetikhinSklyanin1981,KirillovReshetikhin1987,
KunibaNakanishiSuzuki1994,WangYangCaoShi2015}.
At the reflected point \(v=-u\), the \(R\) matrix factorizes as
\begin{equation}
R_{0\bar0}(u,-u)
=
-4u\sum_{a<b}|s_{ab}\rangle\langle\bar s_{ab}|,
\label{eq:R-reflected}
\end{equation}
where
\(s_{ab}=(|ab\rangle+|ba\rangle)/\sqrt2\) and
\(\bar s_{ab}=(|ab\rangle-|ba\rangle)/\sqrt2\).
Together with the diagonal states \(|aa\rangle\), these vectors form
an orthonormal basis of \(V_0\otimes V_{\bar0}\).
$R_{0\bar0}(u,-u)$~\eqref{eq:R-reflected} maps each \(\bar s_{ab}\) onto
\(s_{ab}\), while annihilating all \(s_{ab}\) and \(|aa\rangle\).
The doubled auxiliary space therefore decomposes into three
orthogonal layers of dimensions
\(\binom{k}{2}\), \(k\), and \(\binom{k}{2}\).

Let \(U_{0\bar0}\) denote the unitary transformation whose columns
are ordered as
\begin{equation}
U_{0\bar0}
=
\Bigl(
(s_{ab})_{a<b};
(|aa\rangle)_{a=1}^{k};
(\bar s_{ab})_{a<b}
\Bigr).
\label{eq:three-layer-basis}
\end{equation}
In this basis, the RLL relation~\eqref{eq:RLL-framework}
implies that the reflected local product takes the universal upper
block-triangular form 
\begin{equation}
\begin{aligned}
&\left(U_{0\bar0}^{\dagger}\otimes\mathbf1_j\right)
L_{0j}^{[1]}(u)L_{\bar0j}^{[1]}(-u)
\left(U_{0\bar0}\otimes\mathbf1_j\right)
\\
&\qquad=
\begin{pmatrix}
L_j^{[2]}(u) & * & *\\
0 & \mathsf S_{1,j}(u^2)\otimes\mathbf1_j & *\\
0 & 0 & L_j^{[2]}(-u)
\end{pmatrix}.
\end{aligned}
\label{eq:local-fusion-block}
\end{equation} 
Here,
\(L_j^{[2]}(u)\in
\operatorname{End}(V^{[2]}\otimes\mathcal H_j)\)
is the first fused Lax operator, with
\(\dim V^{[2]}=\binom{k}{2}\), defined by
\begin{equation}
L_j^{[2]}(u)
=
\left(F_+^\dagger\otimes\mathbf1_j\right)
L_{0j}^{[1]}(u)L_{\bar0j}^{[1]}(-u)
\left(F_+\otimes\mathbf1_j\right),
\label{eq:first-fused-Lax}
\end{equation}
where
\(F_+=\sum_{a<b}|s_{ab}\rangle\,{}_{[2]}\langle ab|\).
The middle block is described by
\(\mathsf S_{1,j}(u^2)\).  Writing
\(L_{0j}^{[1]}(u)=\sum_{a,b}E_{ab}^{(0)}
\otimes L_{ab,j}^{[1]}(u)\),
the RLL relation~\eqref{eq:RLL-framework} gives 
\begin{equation}
\left[
L_{ab,j}^{[1]}(u)L_{ab,j}^{[1]}(-u),
L_{cd,j}^{[1]}(v)
\right]
=0
\label{eq:middle-layer-centrality}
\end{equation}
for all \(a,b,c,d,u,v\).  Since the local \(L^{[1]}\)
representation on \(\mathcal H_j\) is irreducible, Schur's lemma
then gives
\begin{equation}
L_{ab,j}^{[1]}(u)L_{ab,j}^{[1]}(-u)
=
\left[\mathsf S_{1,j}(u^2)\right]_{ab}\mathbf1_j .
\label{eq:first-scalar-matrix-elements}
\end{equation}
Thus \(\mathsf S_{1,j}(u^2)\) is a \(k\times k\) c-number matrix,
and the middle block is scalar on the physical space.  A detailed
derivation of
Eqs.~\eqref{eq:local-fusion-block}--\eqref{eq:first-scalar-matrix-elements}
is given in the Supplemental Material~\cite{SM}.

The local three-layer decomposition lifts directly to the full
monodromy matrix.  Define the first fused transfer matrix by
$
t^{[2]}(u)
=
\operatorname{tr}_{V^{[2]}}
L_N^{[2]}(u)\cdots L_1^{[2]}(u).
$
The doubled auxiliary trace of
Eq.~\eqref{eq:local-fusion-block} then gives
\begin{equation}
t^{[1]}(u)t^{[1]}(-u)
=
\Phi_{1,N}(u^2)\mathbf1+ 
t^{[2]}(u)
+
t^{[2]}(-u),
\label{eq:first-fusion-relation}
\end{equation}
where
$
\Phi_{1,N}(u^2)
=
\operatorname{tr}
\left[
\mathsf S_{1,N}(u^2)\cdots
\mathsf S_{1,1}(u^2)
\right]. 
$ 

The same reflected analysis applied to
\(L_j^{[2]}(u)L_j^{[2]}(-u)\) introduces the next fused transfer
matrix \(t^{[3]}(u)\).  Iterating this construction generates
\(t^{[1]}(u),\ldots,t^{[k]}(u)\)~\cite{SM}.  The fused RLL relations place all transfer matrices in a single
commuting family,
\begin{equation}
\left[
t^{[a]}(u),t^{[b]}(v)
\right]
=0,
\qquad
a,b=1,\ldots,k .
\label{eq:fused-transfer-commutativity}
\end{equation}
The resulting transfer matrices satisfy
\begin{equation}
\begin{aligned}
t^{[a]}(u)t^{[a]}&(-u)
={}
\Phi_{a,N}(u^2)\mathbf1
\\
&+
\sum_{b=1}^{\min(a,k-a)}
(-1)^{b-1}
\Big[
t^{[a-b]}(u)t^{[a+b]}(-u)
\\
&\hspace{28mm}
+t^{[a-b]}(-u)t^{[a+b]}(u)
\Big],
\end{aligned}
\label{eq:general-fusion-hierarchy}
\end{equation}
for \(a=1,\ldots,k-1\), with \(t^{[0]}(u)=\mathbf1\).  Here
\begin{equation}
\Phi_{a,N}(u^2)
=
\operatorname{tr}_{V^{[a]}}
\left[
\mathsf S_{a,N}(u^2)\cdots
\mathsf S_{a,1}(u^2)
\right],
\label{eq:general-scalar-source}
\end{equation}
where \(\mathsf S_{a,j}(u^2)\) is the scalar middle block at fusion
level \(a\). 
The auxiliary space of \(t^{[a]}(u)\) has dimension
\(
\dim V^{[a]}=\binom{k}{a}.
\)
At the top level \(a=k\), \(V^{[k]}\) is one dimensional, so
\(t^{[k]}(u)\) is the known quantum determinant in each symmetry
sector.  The hierarchy consequently closes on the \(k-1\) nontrivial transfer matrices
\(t^{[1]}(u),\ldots,t^{[k-1]}(u)\) through \(k-1\) functional relations.
Its 
derivation requires no explicit form of the fundamental Lax operator.
Together with the polynomial and symmetry constraints fixed by the specific Lax operator, these relations
reduce the exact spectrum to a finite algebraic problem. 

\prlhead{Hidden free fermions from fusion relations}
The fusion relations~\eqref{eq:first-fusion-relation}--\eqref{eq:general-fusion-hierarchy} naturally generate HFF
systems.  Projecting Eq.~\eqref{eq:local-fusion-block} onto the
diagonal auxiliary states \(\lvert aa\rangle\) and \(\lvert bb\rangle\) of
its middle layer, we define
\(
\tau_{ab}^{[1]}(u):=\langle a|T^{[1]}(u)|b\rangle .
\)
For this projected transfer operator, we have
\begin{equation}
\tau_{ab}^{[1]}(u)\tau_{ab}^{[1]}(-u)
=
\left[
\mathsf S_{1,N}(u^2)\cdots
\mathsf S_{1,1}(u^2)
\right]_{ab}\mathbf1 .
\label{eq:HFF-inversion}
\end{equation}
This is the scalar inversion identity underlying HFF spectra
\cite{Fendley2019,FendleyPozsgay2024}. 
Thus any irreducible Lax
realization of the \(R\) matrix~\eqref{eq:HFF-R} that admits such
scalar-channel projections generates a family of HFF models.

The same scalar reduction is also realized in special Lax representations and
coupling patterns, where the corresponding higher fused transfer contributions
vanish identically.  At higher fusion levels, decoupled channels give nested
HFF spectra, whereas coupled channels produce interacting spectra.  

\prlhead{Exact solution of the inhomogeneous FFD chain}
We illustrate these mechanisms with the FFD chain~\cite{Fendley2019}. Its fundamental Lax operator is
\begin{equation}
L_{\mathrm F,j}^{[1]}(u)
=
\begin{pmatrix}
\mathbf1_j & 0 & -h_ju\sigma_j^x\\
\sigma_j^z & 0 & 0\\
0 & \sigma_j^z & 0
\end{pmatrix}.
\label{eq:L-FFD}
\end{equation}
Here \(\sigma^a_j\), \(a=x,y,z\), denotes the Pauli matrix acting on
site \(j\), and the \(h_j\) are site-dependent constants.  This Lax
operator satisfies the RLL relation~\eqref{eq:RLL-framework} with the
\(k=3\) specialization of Eq.~\eqref{eq:HFF-R}.  At the fundamental level, $t_{\mathrm F}^{[1]}(u)=
\mathbf1-uH_{\mathrm F}+O(u^2)$, where
\begin{equation}
\begin{aligned}
H_{\mathrm F}
&=
\sum_{j=1}^{N-2}
h_{j+2}\,\sigma_j^z\sigma_{j+1}^z\sigma_{j+2}^x
\\
&\quad
+h_2\,\sigma_{N}^z\sigma_{1}^z\sigma_{2}^x
+h_1\,\sigma_{N-1}^z\sigma_{N}^z\sigma_{1}^x . 
\end{aligned}
\label{eq:FFD-general-Hamiltonian}
\end{equation} 
Fusion procedure~\eqref{eq:local-fusion-block}--\eqref{eq:general-fusion-hierarchy}
gives the second level
\begin{equation}
L_{\mathrm F,j}^{[2]}(u)
=
\begin{pmatrix}
0 & \ii h_ju\sigma_j^y & 0\\
\sigma_j^z & 0 & \ii h_ju\sigma_j^y\\
\mathbf1_j & 0 & 0
\end{pmatrix},
\label{eq:FFD-fused-Lax}
\end{equation}
and the top level
\(L_{\mathrm F,j}^{[3]}(u)=-h_ju\,\sigma_j^x\), giving
\(t_{\mathrm F}^{[3]}(u)
=\mathcal X_N(-u)^N\prod_{j=1}^{N}h_j \) with
\(\mathcal X_N=\prod_{j=1}^{N}\sigma_j^x\). 
The hierarchy therefore closes on \(t_{\mathrm F}^{[1]}(u)\) and
\(t_{\mathrm F}^{[2]}(u)\).  In a sector \(\mathcal X_N=\chi=\pm1\),
denote their eigenvalues by \(\Lambda_\chi^{[1]}(u)\) and
\(\Lambda_\chi^{[2]}(u)\).  They obey
\begin{align}
\Lambda_\chi^{[1]}(u)\Lambda_\chi^{[1]}(-u)
={}&
\Phi_{1,N}^{\mathrm F}(u^2)
+\Lambda_\chi^{[2]}(u)
+\Lambda_\chi^{[2]}(-u),
\label{eq:FFD-eigenvalue-1}\\
\Lambda_\chi^{[2]}(u)\Lambda_\chi^{[2]}(-u)
={}&
\Phi_{2,N}^{\mathrm F}(u^2)
+\Big(\prod_{j}h_j\Big)\chi u^N
\nonumber\\
&\times
\left[
\Lambda_\chi^{[1]}(u)
+(-1)^N\Lambda_\chi^{[1]}(-u)
\right],
\label{eq:FFD-eigenvalue-2}
\end{align}
where the scalar functions are
\begin{equation}
\Phi_{a,N}^{\mathrm F}(u^2)
=
\operatorname{tr}
\big[
\mathsf S_a(h_N^2u^2)\cdots
\mathsf S_a(h_1^2u^2)
\big],
\qquad a=1,2,
\label{eq:FFD-scalar-functions}
\end{equation}
with
\begin{equation}
\mathsf S_1(x)=
\begin{pmatrix}
1&0&-x\\
1&0&0\\
0&1&0
\end{pmatrix},
\qquad
\mathsf S_2(x)=
\begin{pmatrix}
0&x&0\\
1&0&x\\
1&0&0
\end{pmatrix}.
\label{eq:FFD-scalar-matrices}
\end{equation} 

\begin{figure*}[t]
\centering
\includegraphics[width=\textwidth]{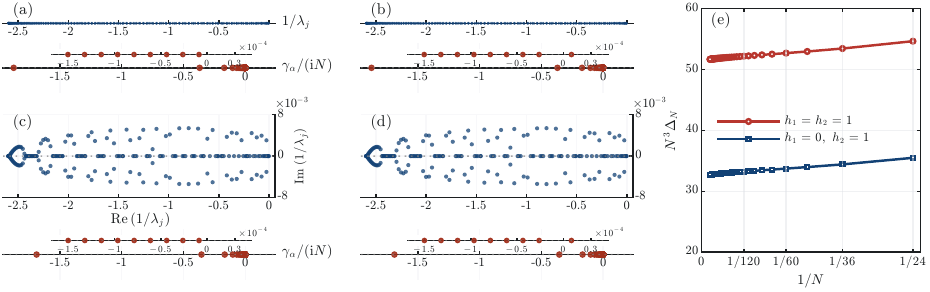}
\caption{Zero-root configurations at \(N=600\).
Panels (a) and (b) show the ground state and first excited state
of the one-wrap FFD chain,
\(h_1=0\) and \(h_2=\cdots=h_N=1\).
Panels (c) and (d) show the corresponding states of the
periodic chain, \(h_1=\cdots=h_N=1\).
Blue points show \(1/\lambda_j\), and red points show
\(\gamma_\alpha/(\ii N)\).
The enlarged windows resolve the \(\gamma\) roots closest
to the origin.
The coefficient \(\kappa\) is \(1\), \(-1\), \(2\), and \(-2\)
in panels (a)--(d), respectively.
Panel (e) compares \(N^3\Delta_N\) versus \(1/N\) for the
two closures at \(N=24,36,\ldots,600\), where
\(\Delta_N=E_1(N)-E_0(N)\).}
\label{fig:main} 
\end{figure*}

The couplings \(h_j\) determine whether the two fusion
relations scalarize.  Denote the non-scalar terms in
Eqs.~\eqref{eq:FFD-eigenvalue-1} and
\eqref{eq:FFD-eigenvalue-2} by
\begin{align}
 \mathcal C_{1}(u)
 &=
 \Lambda_\chi^{[2]}(u)+\Lambda_\chi^{[2]}(-u),\\
 \mathcal C_{2}(u)
 &=
 \Big(\prod_j h_j\Big)\chi u^N
 \left[
 \Lambda_\chi^{[1]}(u)+(-1)^N\Lambda_\chi^{[1]}(-u)
 \right].
\end{align}
When $\mathcal C_{a}(u)=0$, $a=$ 1,2, we have,
\begin{equation}
\Lambda_\chi^{[a]}(u)\Lambda_\chi^{[a]}(-u)
=
\Phi_{a,N}^{\mathrm F}(u^2),
\qquad a=1,2,
\end{equation}
and the spectrum is HFF.  For example, the open-boundary FFD chain with $h_1= 0$, $h_2 =0$ and $h_j =1$, $j>2$ realizes this case.
When \(\mathcal C_{1}(u)\neq0\) but
\(\mathcal C_{2}(u)=0\), the second level remains HFF.  Since
\(t_{\mathrm F}^{[2]}(u)\) is conserved, fixing its eigenvalue sector
scalarizes the first relation, while the resulting first-level energies
depend on that sector, giving nested HFF spectra. The system contains two distinct exact hidden-free-fermion
structures. For example, the one-wrap FFD chain
with \(h_1=0\) and $h_j =1$, $j>1$ realizes this case.  In the remaining
cases, Eqs.~\eqref{eq:FFD-eigenvalue-1} and
\eqref{eq:FFD-eigenvalue-2} remain coupled and the resulting spectrum
is interacting.

In the general case, both \(t_{\mathrm F}^{[1]}(u)\) and \(t_{\mathrm F}^{[2]}(u)\) are
polynomials in \(u\), whose explicit form follows from
Eqs.~\eqref{eq:L-FFD} and \eqref{eq:FFD-fused-Lax}~\cite{SM}.  Their
eigenvalues therefore take the form
\begin{align}
\Lambda_\chi^{[1]}(u)
&=
\prod_{j=1}^{\left\lfloor N/3\right\rfloor}
\left(1-\frac{u}{\lambda_j}\right),
\label{eq:FFD-fundamental-roots}\\
\Lambda_\chi^{[2]}(u)
&=
u^{\left\lfloor (N+1)/2\right\rfloor}
\kappa  \prod_{\alpha=1}^{
\left\lfloor 2N/3\right\rfloor
-\left\lfloor (N+1)/2\right\rfloor}
\left(1-\frac{u}{\gamma_\alpha}\right),
\label{eq:FFD-fused-roots}
\end{align}
Here \(\lambda_j,\gamma_\alpha\) are state-dependent zero roots and \(\kappa\) is a state-dependent prefactor. Substituting
Eqs.~\eqref{eq:FFD-fundamental-roots} and \eqref{eq:FFD-fused-roots}
into Eqs.~\eqref{eq:FFD-eigenvalue-1} and
\eqref{eq:FFD-eigenvalue-2} and matching powers of \(u\) gives a closed
finite system of zero root equations for \(\{\lambda_j\}\),
\(\{\gamma_\alpha\}\) and \(\kappa\).  Their solutions determine
\(\Lambda_\chi^{[1]}(u)\), from which the energy follows,
\begin{equation}
E = -\Lambda_\chi^{[1]\prime}(0)
=
\sum_{j=1}^{\left\lfloor N/3\right\rfloor}
\frac{1}{\lambda_j}.
\label{eq:FFD-energy}
\end{equation}
We have verified that the zero root solutions reproduce the complete exact diagonalization spectra at small \(N\)~\cite{SM}.

\prlhead{Thermodynamic properties of the nested structure}
We investigate the physical consequences of the nested structure
of the FFD chain. 
Setting \(h_1=0\) and \(h_j=1\) for \(j\geq2\) gives an open chain
with one wrap-around bond.
Fig.~\ref{fig:main}(a) and (b) show its ground
state and first excited state zero root patterns.
The \(\lambda_j\) are negative and real, whereas the
\(\gamma_\alpha\) are purely imaginary.
The first excited state is obtained by reversing the fused root closest to the
origin, \(\ii\gamma_{1,N}\) with \(\gamma_{1,N}>0\), together with the overall
coefficient, \((\gamma_{1,N},\kappa)\mapsto(-\gamma_{1,N},-\kappa)\), while the
\(\lambda\) roots rearrange collectively and the remaining \(\gamma\) roots
stay fixed.  The Eqs.~\eqref{eq:FFD-eigenvalue-1} and \eqref{eq:FFD-eigenvalue-2} give
\(\gamma_{1,N}\sim N^{-1}\), while the first-level energy response carries
an additional \(N^{-2}\) suppression~\cite{SM}.  In the thermodynamic limit, 
\[
\begin{aligned}
\Delta_N
&=
-\frac{9\pi^2}{2N^3}
\,\operatorname{Li}_2\!\left(-\frac{\sqrt3}{2}\right)
+o(N^{-3})\\
&=
\frac{32.32450299\ldots}{N^3}
+o(N^{-3}),
\end{aligned}
\]
for even \(N\), with twice the leading amplitude for odd \(N\). Here \(\operatorname{Li}_2(z)=\sum_{n\geq1}z^n/n^2\) is the
dilogarithm. 
Successive \(\gamma\)-root reflections generate a low-energy
tower with the same \(N^{-3}\) scaling at fixed excitation
index.

In the periodic chain, \(h_1=h_2=1\), Figs.~\ref{fig:main}(c) and (d)
show the ground state and first excited state zero root patterns, with the latter again
obtained by the excitation
\((\gamma_{1,N},\kappa)\mapsto(-\gamma_{1,N},-\kappa)\).  The \(\gamma\) and
\(\lambda\) roots now rearrange self-consistently, with the \(\lambda\) roots
forming complex-conjugate configurations.  The non-scalar term of Eq.\eqref{eq:FFD-eigenvalue-2}, however, is
weak at the scale of \(\gamma_{1,N}\): the factor \(u^N\) is negligible there,
while the induced rearrangement of the \(\lambda\) roots changes only the
response amplitude and does not alter its finite-size exponent.  The cubic
gap structure therefore persists in the periodic chain. 
Fig.~\ref{fig:main}(e) compares the rescaled gaps of the one-wrap and periodic chains
along the same sequence of chain lengths.  \(N^3\Delta_N\) approaches
\(32.3245\ldots\) for the one-wrap chain and approximately \(51.2\) for the
periodic chain, confirming the common cubic scaling.

Both closures also retain the \(z=3/2\) free-fermion
scale of the open FFD chain~\cite{Fendley2019}.
In the one-wrap chain, it is realized by the exact,
sector-resolved \(\lambda\) excitations.
For even \(N\), the periodic spectral equations likewise
admit an explicit free-fermion family built from paired
\(\lambda\) roots, whose lowest excitation energies scale
as \(N^{-3/2}\)~\cite{SM}.
Thus the smaller \(\gamma\)-induced splittings coexist
with the \(z=3/2\) free branch, rather than replacing it.

The boundary terms generate an anomalous thermodynamic response that is
distinct from an ordinary \(O(1)\) boundary correction.  In the one-wrap
chain, a single boundary bond partially lifts the exponentially large
open-chain degeneracy.  Although the lowest splittings scale as \(N^{-3}\),
occupying \(O(N)\) fused modes raises the energy by only \(O(N^{-1})\).
Their competition with the extensive entropy produces a collective
heat-capacity peak at
\(
T_{\rm p}\sim N^{-2},\qquad C_N(T_{\rm p})\sim N ,
\)
while the nonzero \(\lambda\) excitations remain frozen in this window
\cite{SM}.  Thus a local boundary modification produces an extensive
thermodynamic response.  With both wrap-around terms present, finite-size
continuation and entropy counting support a periodic collective band on the
same \(T\sim N^{-2}\) scale.  The periodic coupling dresses its energy
functional and modifies the peak amplitude.  At fixed positive temperature,
both closures retain the bulk behavior
\(c_{\mathrm{bulk}}(T)\propto T^{2/3}\), while the collective response is
confined to a window that shrinks towards zero as \(N\) grows.

\prlhead{Scope of the framework}
The Lax operator~\eqref{eq:L-FFD} is one realization of the \(R\)
matrix~\eqref{eq:HFF-R}.  Other irreducible Lax
representations, including higher-\(k\) realizations and inequivalent local
representations, generate integrable transfer families governed by the analogous
fusion hierarchy.  The chosen representation and its couplings
determine which fusion levels decouple and which remain coupled, thereby
producing HFF, nested HFF, or interacting 
spectra.  A single commuting transfer family may therefore contain several
distinct Hamiltonians with different spectral organizations.  A range of
such realizations is worked out in~\cite{SM}. 

The same \(R\) matrix also admits face-type realizations, in which the
Lax operator carries a face label in addition to the
auxiliary color index and satisfies a dynamical Yang--Baxter equation~\cite{Felder1994,EtingofVarchenko1998}.
A representative example is the periodic Fendley--Pozsgay
model~\cite{FendleyPozsgay2024}, 
\begin{equation}
H_{\rm FP}
=
\sum_j
\left(
\sigma_j^x\sigma_{j+1}^x\sigma_{j+2}^z
+b\,\sigma_{j-1}^z\sigma_{j+1}^z
+b^2\,\sigma_{j-1}^z\sigma_j^y\sigma_{j+1}^y
\right),
\label{eq:FP-Hamiltonian}
\end{equation} 
with \(b\neq0\) and all site indices understood periodically.  It
generates a closed two-level hierarchy, the face-type deformation of
 Eq.~\eqref{eq:general-fusion-hierarchy}.  Its exact solution is derived in~\cite{SM}.

\prlhead{Discussion and outlook}
We have identified the multistate Perk--Schultz \(R\) matrix as the
common integrable parent of a broad class of hidden-free-fermion
models.  The natural object of classification is therefore a Lax
operator realization together with its auxiliary closure, rather
than an isolated HFF Hamiltonian.  Under periodic closure the scalar
inversion relation of open chains is not lost but completed into a
finite fusion hierarchy fixed by the \(R\) matrix.

The FFD chain shows that closing a hidden-free-fermion chain is not a
small correction to its open spectrum.  The boundary terms reorganize
the spectrum into nested and interacting structures, in which an
emergent exact cubic gap coexists with the free \(z=3/2\) bulk behavior. 
Our result provides an analytic example of
\(N^{-3}\) finite-size scaling, complementing studies of cubic
low-energy dispersions in interacting Majorana and
lattice-supersymmetric chains 
~\cite{OBrienFendley2018,Rahmani2015,Chepiga2023}.  
Preliminary results indicate that periodic multispin FFD chains with
\(k>3\) retain an \(N^{-3}\) finite-size gap despite their
\(k\)-dependent bulk dispersion $k/2$~\cite{AlcarazPimenta2020,HigherKUnpublished}.
Boundary-dependent dynamics and further face-type realizations are left for future work.

\prlhead{Acknowledgments}
We are especially grateful to Yupeng Wang for illuminating discussions,
particularly on the fusion construction and the analysis of the polynomial
eigenvalue identities.  We also thank Wen-li Yang for helpful suggestions.
We thank Kohei Fukai and Istv\'an Vona for useful discussions.
M.Z. acknowledges financial support from the China Postdoctoral
Science Foundation (Grant No.~2025M783457) and the Postdoctoral Fellowship
Program of CPSF (Grant No.~GZC20252261).

\clearpage
\onecolumngrid
\setcounter{page}{1}
\setcounter{section}{0}
\setcounter{subsection}{0}
\setcounter{equation}{0}
\setcounter{figure}{0}
\setcounter{table}{0}
\setcounter{secnumdepth}{2}
\renewcommand{\theequation}{S\arabic{equation}}
\renewcommand{\thefigure}{S\arabic{figure}}
\renewcommand{\thetable}{S\arabic{table}}
\renewcommand{\thesection}{\arabic{section}}
\renewcommand{\thesubsection}{\arabic{section}.\arabic{subsection}}
\renewcommand{\theHequation}{S.\arabic{equation}}
\renewcommand{\theHfigure}{S.\arabic{figure}}
\renewcommand{\theHtable}{S.\arabic{table}}
\renewcommand{\theHsection}{SM.\arabic{section}}
\renewcommand{\theHsubsection}{SM.\arabic{section}.\arabic{subsection}}
\hypersetup{pageanchor=false}

\begin{center}
{\large\bf Supplemental Material for
\emph{Exact solutions of hidden free fermion models}}
\end{center}

\medskip
\begin{center}
\begin{minipage}{0.92\textwidth}
\textbf{Contents}\par\medskip
\begingroup
\small
\setlength{\parskip}{1pt}
\noindent
1.\ Universal Reflected Fusion for Irreducible \(L\)-Operators
\dotfill \pageref{sec:supp-kall-fusion}\par
\hspace*{1.5em}
1.1.\ First Reflected Fusion and Scalar Closure
\dotfill \pageref{subsec:supp-first-fusion}\par
\hspace*{1.5em}
1.2.\ Finite Reflected Hierarchy
\dotfill \pageref{subsec:supp-finite-hierarchy}\par
\noindent
2.\ Closure of the Inhomogeneous FFD Chain
\dotfill \pageref{sec:supp-twisted-closure}\par
\hspace*{1.5em}
2.1.\ Commuting Family and Hamiltonian
\dotfill \pageref{subsec:supp-ffd-commuting}\par
\hspace*{1.5em}
2.2.\ The Two Fusion Relations
\dotfill \pageref{subsec:supp-ffd-fusion-relations}\par
\hspace*{1.5em}
2.3.\ Eigenvalue Form
\dotfill \pageref{subsec:supp-ffd-eigenvalue-form}\par
\hspace*{1.5em}
2.4.\ Zero-Root Parametrization
\dotfill \pageref{subsec:supp-ffd-zero-roots}\par
\noindent
3.\ Cubic Gap of the FFD Chain
\dotfill \pageref{sec:supp-cubic-gap}\par
\hspace*{1.5em}
3.1.\ Exact Cubic Gap of the One-Wrap Closure
\dotfill \pageref{subsec:supp-gap-onewrap}\par
\hspace*{1.5em}
3.2.\ Arbitrary Chain Length
\dotfill \pageref{subsec:supp-gap-alllength}\par
\hspace*{1.5em}
3.3.\ Periodic Chain: the Response with the Inter-Level Coupling Frozen
\dotfill \pageref{subsec:supp-gap-periodic-scalar}\par
\hspace*{1.5em}
3.4.\ Dressing by the Inter-Level Coupling
\dotfill \pageref{subsec:supp-gap-dressing}\par
\hspace*{1.5em}
3.5.\ The \(z=3/2\) Branch in the Two Closures
\dotfill \pageref{subsec:supp-gap-freebranch}\par
\hspace*{1.5em}
3.6.\ Low-Temperature Thermodynamics of the One-Wrap Chain
\dotfill \pageref{subsec:supp-thermo-onewrap}\par
\hspace*{1.5em}
3.7.\ Arbitrary Chain Length
\dotfill \pageref{subsec:supp-thermo-alllength}\par
\hspace*{1.5em}
3.8.\ The Periodic Chain
\dotfill \pageref{subsec:supp-thermo-periodic}\par
\noindent
4.\ Further Examples
\dotfill \pageref{sec:supp-examples}\par
\hspace*{1.5em}
4.1.\ The Arbitrary-\(k\) Non-Hermitian Perk--Schultz Chain
\dotfill \pageref{subsec:supp-PSk}\par
\hspace*{1.5em}
4.2.\ Periodic FFD Logarithmic Charge with Claws
\dotfill \pageref{subsec:supp-claw}\par
\hspace*{1.5em}
4.3.\ An Irreducible \(D=4\) Range-Five Model
\dotfill \pageref{subsec:supp-D4}\par
\hspace*{1.5em}
4.4.\ The Fendley--Pozsgay Face Model
\dotfill \pageref{subsec:supp-FP}
\endgroup
\end{minipage}
\end{center}
\medskip

\section{Universal Reflected Fusion for Irreducible \texorpdfstring{$L$}{L}-Operators}
\label{sec:supp-kall-fusion}

This section derives the fusion structure without choosing a particular
physical realization of the Yang--Baxter algebra.  The argument has four
steps.  First, the reflected Perk--Schultz matrix fixes three explicit
auxiliary layers.  Second, the RLL relation makes the reflected local Lax
product upper block triangular in these layers and identifies the two outer
blocks as $L^{[2]}(u)$ and $L^{[2]}(-u)$.  Third, an additional auxiliary line shows
that every operator-valued matrix element of the middle block commutes with
the local RLL representation.  Schur's lemma then makes this layer scalar on an
irreducible physical space.  Finally, the same construction is iterated to
obtain the finite reflected hierarchy.

\subsection{First reflected fusion and scalar closure}
\label{subsec:supp-first-fusion}
Let \(V=\operatorname{span}\{|1\rangle,\ldots,|k\rangle\}\),
\(E_{ab}=|a\rangle\langle b|\), and let \(u,v,w\) denote spectral
parameters.  The \(R\) matrix acting on \(V_0\otimes V_{\bar 0}\) is
\begin{equation}
\begin{aligned}
R_{0\bar 0}(u,v)
={}&
(u+v)\sum_{a=1}^{k}E_{aa}\otimes E_{aa}
+(v-u)\sum_{a<b}E_{aa}\otimes E_{bb}
\\
&+
(u-v)\sum_{a>b}E_{aa}\otimes E_{bb}
+2u\sum_{a<b}E_{ab}\otimes E_{ba}
+2v\sum_{a<b}E_{ba}\otimes E_{ab}.
\end{aligned}
\label{eq:supp-kall-R}
\end{equation}
For arbitrary \(u,v,w\), it satisfies the Yang--Baxter equation
\begin{equation}
R_{12}(u,v)R_{13}(u,w)R_{23}(v,w)
=
R_{23}(v,w)R_{13}(u,w)R_{12}(u,v)
\label{eq:supp-YBE}
\end{equation}
on \(V_1\otimes V_2\otimes V_3\). For the \(R\) matrix above, let
\(L_{0j}^{[1]}(u)\) denote a local Lax operator satisfying
\begin{equation}
R_{0\bar 0}(u,v)
L_{0j}^{[1]}(u)L_{\bar 0j}^{[1]}(v)
=
L_{\bar 0j}^{[1]}(v)L_{0j}^{[1]}(u)
R_{0\bar 0}(u,v),
\label{eq:supp-framework-RLL}
\end{equation}
and assume that the local \(L^{[1]}\)-operator is irreducible on the
finite-dimensional space \(\mathcal H_j\).  Here irreducible means that no
nonzero proper subspace of \(\mathcal H_j\) is preserved by every matrix
element \(L_{ab,j}^{[1]}(v)\), for all \(a,b\) and all regular \(v\).  We also
assume that these matrix elements depend continuously on the spectral
parameter and remain finite at the values used below.

The fusion technique uses spectral points where the \(R\) matrix loses rank.
At such points, products of Lax operators acquire a block-triangular structure
in the doubled auxiliary space, generating fused transfer matrices and their
functional relations.
For the present Perk--Schultz \(R\) matrix, this occurs at the reflected point
\(v=-u\), where Eq.~\eqref{eq:supp-kall-R} becomes
\begin{equation}
 R_{0 \bar{0}}(u,-u)
 =
 2u\sum_{a<b}
 \left(
 -E_{aa}\otimes E_{bb}
 +E_{bb}\otimes E_{aa}
 +E_{ab}\otimes E_{ba}
 -E_{ba}\otimes E_{ab}
 \right).
 \label{eq:supp-kall-R-reflected}
\end{equation}
In each two-color sector
\(\operatorname{span}\{|ab\rangle,|ba\rangle\}\), \(a<b\), its matrix is
\begin{equation}
 R_{0\bar{0}}(u,-u)\big|_{\{|ab\rangle,|ba\rangle\}}
 =
 2u
 \begin{pmatrix}
 -1&1\\
 -1&1
 \end{pmatrix}.
 \label{eq:supp-kall-R-pair-block}
\end{equation}
Introducing
\(s_{ab}=(|ab\rangle+|ba\rangle)/\sqrt2\) and
\(\bar s_{ab}=(|ab\rangle-|ba\rangle)/\sqrt2\), this result can be written as
\begin{equation}
 R_{0\bar{0}}(u,-u)
 =
 -4u\sum_{a<b}|s_{ab}\rangle\langle\bar s_{ab}|.
 \label{eq:supp-kall-R-factorization}
\end{equation}
Thus, for \(u\neq0\), \(R_{0\bar{0}}(u,-u)\) has rank
\(k(k-1)/2\): it maps each \(\bar s_{ab}\) into \(s_{ab}\), while
annihilating all \(s_{ab}\) and \(|aa\rangle\).

The local RLL relation turns this rank degeneration into a block-triangular
structure in the doubled auxiliary space.  Setting \(v=-u\) in
Eq.~\eqref{eq:supp-framework-RLL} gives 
\begin{align}
 R_{0\bar{0}}(u,-u)
 L^{[1]}_{0j}(u)L^{[1]}_{\bar{0}j}(-u)
 &=
 L^{[1]}_{\bar{0}j}(-u)L^{[1]}_{0j}(u)
 R_{0\bar{0}}(u,-u),
 \label{eq:supp-kall-degenerate-RLL-left}
 \\
 L^{[1]}_{0j}(u)L^{[1]}_{\bar{0}j}(-u)
 R_{0\bar{0}}(u,-u)
 &=
 R_{0\bar{0}}(u,-u)
 L^{[1]}_{\bar{0}j}(-u)L^{[1]}_{0j}(u).
 \label{eq:supp-kall-degenerate-RLL-right}
\end{align} 
Here we use
\(R_{\bar{0}0}(-u,u)=R_{0\bar{0}}(u,-u)\). 

The reflected local product
\(
L^{[1]}_{0j}(u)L^{[1]}_{\bar{0}j}(-u)
\)
acts on
\(
(V_0\otimes V_{\bar 0})\otimes\mathcal H_j
\).
For \(u\neq0\), Eqs.~\eqref{eq:supp-kall-degenerate-RLL-left}
and \eqref{eq:supp-kall-R-factorization} give
\begin{equation}
\begin{aligned}
R_{0\bar 0}(u,-u)
L_{0j}^{[1]}(u)L_{\bar 0j}^{[1]}(-u)s_{ab}&=0,\\
R_{0\bar 0}(u,-u)
L_{0j}^{[1]}(u)L_{\bar 0j}^{[1]}(-u)|aa\rangle&=0.
\end{aligned}
\end{equation}
Hence both transformed states lie in the combined symmetric and
diagonal channels.  Moreover,
Eq.~\eqref{eq:supp-kall-degenerate-RLL-right} gives
\begin{equation}
\begin{aligned}
-4u\,
L_{0j}^{[1]}(u)L_{\bar 0j}^{[1]}(-u)s_{ab}
={}&
R_{0\bar 0}(u,-u)
L_{\bar 0j}^{[1]}(-u)L_{0j}^{[1]}(u)\bar s_{ab},
\end{aligned}
\end{equation}
so the symmetric channel is preserved separately. 
The allowed channel transitions are therefore
\begin{equation}
\begin{aligned}
 (s_{ab})_{a<b}
 &\longrightarrow (s_{ab})_{a<b},\\
 (|aa\rangle)_{a=1}^{k}
 &\longrightarrow
 (s_{ab})_{a<b}\oplus(|aa\rangle)_{a=1}^{k},\\
 (\bar s_{ab})_{a<b}
 &\longrightarrow
 (s_{ab})_{a<b}\oplus(|aa\rangle)_{a=1}^{k}
 \oplus(\bar s_{ab})_{a<b},
\end{aligned}
\label{eq:supp-kall-channel-rules}
\end{equation}
where each parenthesis denotes the span of the indicated states.
Consequently, the reflected local product is upper block triangular in this
ordered doubled-auxiliary basis.

Let \(V^{[2]}\) be the \(\binom{k}{2}\)-dimensional auxiliary space
with basis \(|ab\rangle_{[2]}\), \(a<b\).  The upper diagonal block
defines the first fused Lax operator through
\begin{equation}
{}_{[2]}\langle ab|L_j^{[2]}(u)|cd\rangle_{[2]}
=
\langle s_{ab}|
L_{0j}^{[1]}(u)L_{\bar 0j}^{[1]}(-u)
|s_{cd}\rangle .
\label{eq:supp-first-L2-definition}
\end{equation}
All matrix elements in this equation are operators on
\(\mathcal H_j\).

For \(u\neq0\), the lower diagonal block is fixed by the same fused
operator at \(-u\).  Indeed, taking matrix elements of
Eq.~\eqref{eq:supp-kall-degenerate-RLL-left} between
\(\langle s_{ab}|\) and \(|\bar s_{cd}\rangle\), and using
Eq.~\eqref{eq:supp-kall-R-factorization}, gives
\begin{align}
\langle\bar s_{ab}|
L_{0j}^{[1]}(u)L_{\bar 0j}^{[1]}(-u)
|\bar s_{cd}\rangle
&=
\langle s_{ab}|
L_{\bar 0j}^{[1]}(-u)L_{0j}^{[1]}(u)
|s_{cd}\rangle
\nonumber\\
&=
\langle s_{ab}|
L_{0j}^{[1]}(-u)L_{\bar 0j}^{[1]}(u)
|s_{cd}\rangle
\nonumber\\
&=
{}_{[2]}\langle ab|L_j^{[2]}(-u)|cd\rangle_{[2]} .
\label{eq:supp-first-lower-L2}
\end{align}
In the second equality we exchanged the two auxiliary spaces and
used the symmetry of \(s_{ab}\).
The equality extends to \(u=0\) whenever the Lax operator is regular
there.

It remains to determine the middle diagonal block.  Expanding
\begin{equation}
L_{0j}^{[1]}(u)
=
\sum_{a,b=1}^{k}
E_{ab}^{(0)}\otimes L_{ab,j}^{[1]}(u),
\end{equation}
its matrix elements are
\begin{equation}
\begin{aligned}
\langle aa|
L_{0j}^{[1]}(u)L_{\bar 0j}^{[1]}(-u)
|bb\rangle
=
L_{ab,j}^{[1]}(u)L_{ab,j}^{[1]}(-u).
\end{aligned}
\label{eq:supp-middle-elements}
\end{equation}
We therefore denote the middle block by
\begin{equation}
\mathcal C_j^{[1]}(u)
=
\sum_{a,b=1}^{k}
E_{ab}\otimes
L_{ab,j}^{[1]}(u)L_{ab,j}^{[1]}(-u).
\label{eq:supp-middle-block}
\end{equation}
Let \(U_{0\bar 0}\) denote the unitary change-of-basis matrix whose
columns are ordered as
\begin{equation}
U_{0\bar 0}
=
\Bigl(
(s_{ab})_{a<b};
(|aa\rangle)_{a=1}^{k};
(\bar s_{ab})_{a<b}
\Bigr).
\label{eq:supp-kall-U}
\end{equation} 
At this stage, the triangular structure derived above gives
\begin{equation}
\begin{aligned}
\left(U_{0\bar 0}^{\dagger}\otimes\mathbf1_j\right)
L_{0j}^{[1]}(u)L_{\bar 0j}^{[1]}(-u)
\left(U_{0\bar 0}\otimes\mathbf1_j\right)
=
\begin{pmatrix}
L_j^{[2]}(u)&*&*\\
0&\mathcal C_j^{[1]}(u)&*\\
0&0&L_j^{[2]}(-u)
\end{pmatrix}.
\end{aligned}
\label{eq:supp-first-ULLU}
\end{equation}

We now show that the middle block is scalar on the physical space.
Introducing a third auxiliary line, two applications of the RLL
relation give
\begin{align}
R_{03}(u,v)R_{\bar 0 3}(-u,v)
L_{0j}^{[1]}(u)L_{\bar 0j}^{[1]}(-u)L_{3j}^{[1]}(v)=
L_{3j}^{[1]}(v)
L_{0j}^{[1]}(u)L_{\bar 0j}^{[1]}(-u)
R_{03}(u,v)R_{\bar 0 3}(-u,v).
\label{eq:supp-three-line-RLL}
\end{align}
At the reflected point, the Yang--Baxter equation gives
\begin{equation}
\begin{aligned}
R_{0\bar 0}(u,-u)
R_{03}(u,v)R_{\bar 0 3}(-u,v) =
R_{\bar 0 3}(-u,v)R_{03}(u,v)
R_{0\bar 0}(u,-u).
\end{aligned}
\label{eq:supp-reflected-YBE-train}
\end{equation}
Together with its auxiliary-exchanged form, this relation gives the
same upper-triangular channel structure as the reflected RLL
relations above.   Its action on a middle-channel state is
\begin{equation}
\begin{aligned}
R_{03}(u,v)R_{\bar 0 3}(-u,v)|aac\rangle
=
(v^2-u^2)|aac\rangle
+
\begin{cases}
2v(u+v)(|caa\rangle+|aca\rangle),&a<c,\\
-2u(u+v)(|caa\rangle+|aca\rangle),&a>c,\\
0,&a=c.
\end{cases}
\end{aligned}
\label{eq:supp-middle-train-action}
\end{equation}
The additional terms lie in the upper symmetric channel.  Hence its
middle diagonal block is simply
\begin{equation}
(v^2-u^2)\mathbf1.
\label{eq:supp-middle-train-scalar}
\end{equation}
For \(v^2\neq u^2\), comparing the middle diagonal blocks of
Eq.~\eqref{eq:supp-three-line-RLL} and canceling the common factor
\(v^2-u^2\) gives
\begin{equation}
\mathcal C_j^{[1]}(u)L_{3j}^{[1]}(v)
=
L_{3j}^{[1]}(v)\mathcal C_j^{[1]}(u).
\label{eq:supp-middle-commutes}
\end{equation} 
By the continuity assumed above, the relation extends to the remaining regular
values of \(u\) and \(v\).  Using Eq.~\eqref{eq:supp-middle-block}, this is equivalent to
\begin{equation}
\left[
L_{ab,j}^{[1]}(u)L_{ab,j}^{[1]}(-u),
L_{cd,j}^{[1]}(v)
\right]
=0
\label{eq:supp-middle-entry-central}
\end{equation}
for all \(a,b,c,d,u,v\).  By irreducibility in the sense stated above,
Schur's lemma implies
\begin{equation}
L_{ab,j}^{[1]}(u)L_{ab,j}^{[1]}(-u)
=
\sigma_{ab,j}(u)\mathbf1_j .
\end{equation}
Taking the physical trace of this equation at \(u\) and \(-u\), and
using cyclicity of the trace, gives
\(\sigma_{ab,j}(u)=\sigma_{ab,j}(-u)\).  We may therefore define the
\(k\times k\) matrix
\begin{equation}
\left[\mathsf S_{1,j}(u^2)\right]_{ab}
=
\sigma_{ab,j}(u),
\end{equation}
so that
\begin{equation}
\mathcal C_j^{[1]}(u)
=
\mathsf S_{1,j}(u^2)\otimes\mathbf1_j .
\label{eq:supp-middle-Schur}
\end{equation}

Substituting Eq.~\eqref{eq:supp-middle-Schur} into
Eq.~\eqref{eq:supp-first-ULLU} finally gives
\begin{equation}
\begin{aligned}
\left(U_{0\bar 0}^{\dagger}\otimes\mathbf1_j\right)
L_{0j}^{[1]}(u)L_{\bar 0j}^{[1]}(-u)
\left(U_{0\bar 0}\otimes\mathbf1_j\right)
=
\begin{pmatrix}
L_j^{[2]}(u)&*&*\\
0&\mathsf S_{1,j}(u^2)\otimes\mathbf1_j&*\\
0&0&L_j^{[2]}(-u)
\end{pmatrix}.
\end{aligned}
\label{eq:supp-kall-first-block}
\end{equation}
This proves the first reflected local fusion relation and shows that its
middle block is scalar on the local physical space.  The full relation also
extends to \(u=0\) whenever the Lax operator is regular there.

\subsection{Finite reflected hierarchy}
\label{subsec:supp-finite-hierarchy}

For comparison, let \(t_m^{[a]}(x)\) denote the standard
\(A_{k-1}\) transfer matrix whose auxiliary representation is the
rectangle with \(a\) rows and \(m\) columns.  With the conventional
scalar normalization, the two-direction \(T\)-system reads
\begin{equation}
t_m^{[a]}(xq^{-1})t_m^{[a]}(xq)
=t_{m-1}^{[a]}(x)t_{m+1}^{[a]}(x)
+t_m^{[a-1]}(x)t_m^{[a+1]}(x)
\label{eq:supp-standard-T-system}
\end{equation}
\cite{KunibaNakanishiSuzuki1994}.
In particular, \(m=1\) gives
\begin{equation}
t_1^{[a]}(xq^{-1})t_1^{[a]}(xq)
=t_1^{[a-1]}(x)t_1^{[a+1]}(x)+t_2^{[a]}(x).
\label{eq:supp-standard-T-system-m1}
\end{equation}
At \(q=\ii\), set \(u=xq^{-1}=-\ii x\), so that
\(xq=-u\).  Equation~\eqref{eq:supp-standard-T-system-m1}
then compares \(t^{[a]}(u)t^{[a]}(-u)\) with the neighboring
single-column levels and the reparametrized width-two term
\(\mathcal T_2^{[a]}(u)\equiv t_2^{[a]}(x)\).  We will derive the
reflected relation directly at \(q=\ii\). The comparison with the
width-two term is made after the derivation.

\paragraph{Recursive single-column fusion.}
Put \(V^{[1]}=V\), \(F_1=\1_V\), and
\begin{equation}
u_\ell=(-1)^{\ell-1}u.
\label{eq:supp-alternating-string}
\end{equation}
Suppose that the level-\(a\) auxiliary space and its embedding
\(F_a:V^{[a]}\to V^{\otimes a}\) have already been obtained.  The
unrenormalized scattering of one fundamental auxiliary space through
this fused space is
\begin{equation}
\widetilde R^{[a,1]}(u,v)
=
(F_a^\dagger\otimes\1)
R_{1,a+1}(u_1,v)\cdots R_{a,a+1}(u_a,v)
(F_a\otimes\1).
\label{eq:supp-raw-Ra1}
\end{equation}
The factors are written in their operator order, so the rightmost one
acts first.  Projection onto the already fused space produces the
common scalar factor
\begin{equation}
\chi_a(u,v)
=\prod_{\ell=2}^{a}(v-u_\ell)
=(v+u)^{\lfloor a/2\rfloor}
(v-u)^{\lfloor(a-1)/2\rfloor}.
\label{eq:supp-Ra1-common-factor}
\end{equation}
We remove it and define
\begin{equation}
R^{[a,1]}(u,v)
=\frac{\widetilde R^{[a,1]}(u,v)}{\chi_a(u,v)}.
\label{eq:supp-normalized-Ra1}
\end{equation}
The quotient is taken after cancelling the common polynomial factor.
At \(q=\ii\), substituting a fusion value into
Eq.~\eqref{eq:supp-raw-Ra1} before this cancellation can make the
whole product vanish.  For \(a=3\) the cancellation amounts to taking
the first nonzero coefficient, whereas at higher levels the common
zero can have higher order.  Equation~\eqref{eq:supp-Ra1-common-factor} is
the general prescription.

We label an ordered color configuration by
\(I=(i_1<\cdots<i_a)\).  If \(x\) occurs in \(I\) and \(c\) does
not, \(I-x+c\) denotes the ordered list obtained by replacing \(x\)
with \(c\).  Similarly, \(K-y\) means that \(y\) is deleted from
\(K\), and \(I+c\) means that \(c\) is added and the result reordered.
Let \(n_I(c)\) be the number of colors in \(I\) smaller than \(c\),
and write \(|I,c\rangle=|I\rangle_{[a]}\otimes|c\rangle\).

\begin{lemma}[Fundamental fused \(R\) matrix]
\label{lem:supp-explicit-Ra1}
The recursive construction above gives
\begin{align}
R^{[a,1]}(u,v)|I,c\rangle
={}&(-1)^{a+n_I(c)+1}(u+v)|I,c\rangle,
&&c\in I,
\label{eq:supp-Ra1-repeated}\\
R^{[a,1]}(u,v)|I,c\rangle
={}&(-1)^{a+n_I(c)+1}
\bigg[
(u-v)|I,c\rangle
\nonumber\\[-1mm]
&\hspace{10mm}
-2v\sum_{\substack{x\in I\\x<c}}|I-x+c,x\rangle
+2u\sum_{\substack{x\in I\\x>c}}|I-x+c,x\rangle
\bigg],
&&c\notin I.
\label{eq:supp-Ra1-distinct}
\end{align}
The next fusion point is
\begin{equation}
v_*=u_{a+1}=(-1)^a u.
\label{eq:supp-next-fusion-point}
\end{equation}
At this point the two equations select precisely one state for every
choice of \(a+1\) distinct colors.
\end{lemma}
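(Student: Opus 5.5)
We argue by induction on the level \(a\), carrying along an explicit description of the embedding \(F_a\). The induction hypothesis has two parts. First, \(F_a\) sends \(|I\rangle_{[a]}\), with \(I=(i_1<\cdots<i_a)\), to a definite signed superposition \(\sum_{\pi}\epsilon_a(\pi)|i_{\pi(1)}\cdots i_{\pi(a)}\rangle\) of the permutations of \(I\), with coefficients fixed by the alternating string \(u_\ell=(-1)^{\ell-1}u\). Second, Eqs.~\eqref{eq:supp-Ra1-repeated}--\eqref{eq:supp-Ra1-distinct} hold at level \(a\).

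\emph{Base case \(a=1\).} Here \(F_1=\1_V\), \(\chi_1=1\) and \(R^{[1,1]}=R\). The formulas are read off from Eq.~\eqref{eq:supp-kall-R}:
\begin{itemize}
\item For \(c=i\), the sign \((-1)^{1+0+1}=+1\) reproduces the diagonal term \((u+v)E_{aa}\otimes E_{aa}\).
\item For \(i<c\) we have \(n_I(c)=1\), so the prefactor is \(-1\). This gives \((v-u)|ic\rangle+2v|ci\rangle\), which matches \(E_{ii}\otimes E_{cc}\) and \(E_{ci}\otimes E_{ic}\).
\item The case \(i>c\) is checked in the same way.
\end{itemize}

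\emph{Fusion point.} Assuming the lemma at level \(a\), we fix a set \(J\) of \(a+1\) distinct colors. We restrict \(R^{[a,1]}(u,v)\) to the \((a+1)\)-dimensional space spanned by \(|J-y,y\rangle\), \(y\in J\); Eq.~\eqref{eq:supp-Ra1-distinct} shows that this space is invariant. On it, \(R^{[a,1]}\) equals a signed diagonal term \((u-v)\) plus a rank-one-type exchange part with weights \(-2v\) (for \(x<c\)) and \(2u\) (for \(x>c\)). We show that at \(v=v_*=(-1)^a u\) the restricted matrix has a one-dimensional image (equivalently, a kernel of codimension one). The computation is a determinant-type check using \(u-v_*\in\{0,2u\}\) together with the two exchange weights; it mirrors the \(2\times2\) block \eqref{eq:supp-kall-R-pair-block} at \(a=1\). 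The spanning vector of this image is declared to be \(F_{a+1}|J\rangle_{[a+1]}\), which gives the ``exactly one state per \((a+1)\)-color set'' claim. By the same computation, the repeated-color states \(|I,c\rangle\) with \(c\in I\) are annihilated or sent into other channels at \(v_*\), so no further states are selected.

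\emph{Inductive step for \(R^{[a+1,1]}\).} Using the operator ordering in Eq.~\eqref{eq:supp-raw-Ra1}, we write
\[
\widetilde R^{[a+1,1]}(u,v)=(F_{a+1}^\dagger\otimes\1)\,\bigl[\chi_a\,R^{[a,1]}_{(1..a),a+2}(u,v)\bigr]\,R_{a+1,a+2}(u_{a+1},v)\,(F_{a+1}\otimes\1).
\]
We then evaluate this on \(|J,d\rangle\) in the two cases \(d\in J\) and \(d\notin J\). In each case we first apply the explicit level-one action of \(R_{a+1,a+2}\), and then apply the level-\(a\) formulas to every component of \(F_{a+1}|J\rangle\). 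Finally we project back with \(F_{a+1}^\dagger\). The required identity is that every coefficient contains the common factor \(v-u_{a+1}\); this is consistent with \(\chi_{a+1}=\chi_a\,(v-u_{a+1})\). After dividing it out, the coefficients should reduce to \((u+v)\), \((u-v)\), \(-2v\) and \(2u\), multiplied by the sign \((-1)^{a+1+n_J(d)+1}\). The change of sign from level \(a\) to level \(a+1\) comes from moving \(d\) past the colors in \(J\), which is exactly what \(n_J(d)\) counts.

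\emph{Main obstacle.} We expect the hard part to be the sign and cancellation bookkeeping in this last step. Specifically, one must show that contributions where the exchanged color \(x\) lies in the ``old'' block \(J-y\) combine with those where it equals the newly fused color \(y\), and that together they factor as \((v-u_{a+1})\) times the stated amplitude. We would first settle \(a=2\to3\) by hand, since there the cancellation is just taking the first nonzero coefficient. That computation should reveal the pattern of \(\epsilon_{a}\) and \(n_I(c)\), which we would then encode as a general telescoping identity over the ordered positions of \(x\) relative to \(c\).
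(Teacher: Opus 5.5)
Your overall structure is the paper's: induction on $a$, base case read off from Eq.~\eqref{eq:supp-kall-R}, then the recursion $R^{[a+1,1]}\propto(C_{a+1}^\dagger\otimes\1)R^{[a,1]}(u,v)R^{[1,1]}(u_{a+1},v)(C_{a+1}\otimes\1)$ with the factor $v-u_{a+1}$ divided out. The gap is in the fusion-point step, and it breaks the induction already at $a=2\to3$.

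You claim that at $v_*=(-1)^a u$ the restriction of $R^{[a,1]}$ to the sector spanned by the states $|J-y,y\rangle$ always has one-dimensional image, and you take its spanning vector as $F_{a+1}|J\rangle_{[a+1]}$. This is true for odd $a$, where $R^{[a,1]}(u,-u)=2u(a+1)\sum_K|S_K^{(a+1)}\rangle\langle A_K^{(a+1)}|$. It fails for even $a$. There $v_*=u$ removes the diagonal term, and the remaining exchange part has rank $a$. For example, take $a=2$, $K=(1,2,3)$ and $e_m=|K-k_m,k_m\rangle$. The level-two formulas then give
\[
R^{[2,1]}(u,u)\big|_{K}=2u\begin{pmatrix}0&-1&1\\-1&0&1\\-1&1&0\end{pmatrix},
\]
with columns the images of $e_1,e_2,e_3$. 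This matrix has rank two, and its kernel is spanned by $e_1+e_2+e_3$.

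Your claim about repeated colors also fails for even $a$. There $|I,c\rangle$ with $c\in I$ is an eigenvector with eigenvalue $(-1)^{n_I(c)+1}2u\neq0$, so it is neither annihilated nor moved to another channel, and it lies in the image. An image-based selection at even $a$ therefore keeps $a$ states per color set plus every repeated-color state, and ``exactly one state'' does not follow.

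The missing idea is that the selection rule alternates with parity. For odd $a$ you keep the image of the nilpotent operator $R^{[a,1]}(u,-u)$; repeated colors are annihilated because $u+v_*=0$. For even $a$ you keep the kernel instead. In each distinct-color sector the kernel equations reduce to $z_n=z_{n+1}$, so the kernel is one-dimensional. It excludes the repeated-color states because their eigenvalues are nonzero. The fused RLL relation at $v_*$ preserves this kernel, just as it preserves the image in the odd case, so both choices give legitimate fused spaces.

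In both parities the selected vector is the equal-weight state $|S_K^{(a+1)}\rangle$. Hence $C_{a+1}|K\rangle_{[a+1]}=|S_K^{(a+1)}\rangle$, and $F_a$ is the unsigned symmetrizer of Eq.~\eqref{eq:supp-higher-Fa}; that is, your $\epsilon_a\equiv1$. Your recursion step needs this explicit embedding before the coefficient bookkeeping can be done, including the pairwise cancellation of terms with two different exchanges.
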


\begin{proof}
For \(a=1\), Eqs.~\eqref{eq:supp-Ra1-repeated} and
\eqref{eq:supp-Ra1-distinct} are Eq.~\eqref{eq:supp-kall-R}
written in the two-color basis.  Assume that they hold at level \(a\).

First let \(a\) be odd, so that \(v_*=-u\).  For
\(K=(k_1<\cdots<k_{a+1})\), define the normalized symmetric ket and
alternating bra
\begin{align}
|S_K^{(a+1)}\rangle
&=\frac{1}{\sqrt{a+1}}
\sum_{m=1}^{a+1}|K-k_m,k_m\rangle,\\
\langle A_K^{(a+1)}|
&=\frac{1}{\sqrt{a+1}}
\sum_{m=1}^{a+1}(-1)^{m-1}\langle K-k_m,k_m|.
\label{eq:supp-SK-AK}
\end{align}
The two formulas reduce to
\begin{equation}
R^{[a,1]}(u,-u)
=2u(a+1)\sum_{k_1<\cdots<k_{a+1}}
|S_K^{(a+1)}\rangle\langle A_K^{(a+1)}|,
\qquad a\ \text{odd}.
\label{eq:supp-Ra1-odd}
\end{equation}
For \(a=1\), this is exactly
Eq.~\eqref{eq:supp-kall-R-factorization}.  All repeated-color states
are annihilated, while the produced states are precisely the
\(|S_K^{(a+1)}\rangle\).  Since \(a+1\) is even,
\begin{equation}
\langle A_K^{(a+1)}|S_K^{(a+1)}\rangle
=\frac{1}{a+1}\sum_{m=1}^{a+1}(-1)^{m-1}=0,
\end{equation}
so Eq.~\eqref{eq:supp-Ra1-odd} squares to zero.

Now let \(a\) be even, so that \(v_*=u\).  Then
\begin{equation}
\frac{R^{[a,1]}(u,u)}{-2u}|I,c\rangle
=
\begin{cases}
(-1)^{n_I(c)}|I,c\rangle,&c\in I,\\[1mm]
(-1)^{n_I(c)}\left[
-\displaystyle\sum_{\substack{x\in I\\x<c}}|I-x+c,x\rangle
+\displaystyle\sum_{\substack{x\in I\\x>c}}|I-x+c,x\rangle
\right],&c\notin I.
\end{cases}
\label{eq:supp-Ra1-even}
\end{equation}
The repeated-color states have nonzero eigenvalues.  In a fixed
color sector \(K=(k_1<\cdots<k_{a+1})\), put
\(e_m=|K-k_m,k_m\rangle\).  If \(\sum_m z_me_m\) is annihilated,
the coefficient of \(e_n\) gives
\begin{equation}
\sum_{m<n}(-1)^{m-1}z_m
+\sum_{m>n}(-1)^m z_m=0.
\label{eq:supp-even-zero-equations}
\end{equation}
Subtracting the equation with index \(n\) from the one with index
\(n+1\) gives \(z_n=z_{n+1}\).  Thus all \(z_m\) are equal, and
because \(a+1\) is odd this equal-weight state satisfies
Eq.~\eqref{eq:supp-even-zero-equations}.  It is the only annihilated
state in this color sector.

Both parities therefore select the same one-step embedding,
\begin{equation}
C_{a+1}|K\rangle_{[a+1]}=|S_K^{(a+1)}\rangle,
\qquad
F_{a+1}=(F_a\otimes\1)C_{a+1}.
\label{eq:supp-Ca-recursion}
\end{equation}
The fused RLL relation preserves the produced channel in the odd case
and the annihilated channel in the even case, so both define a valid
level-\(a+1\) auxiliary space.

The product in Eq.~\eqref{eq:supp-raw-Ra1} now obeys
\begin{equation}
R^{[a+1,1]}(u,v)
=\frac{1}{v-u_{a+1}}
(C_{a+1}^\dagger\otimes\1)
R^{[a,1]}(u,v)R^{[1,1]}(u_{a+1},v)
(C_{a+1}\otimes\1).
\label{eq:supp-Ra1-recursion}
\end{equation}
If numerator and denominator vanish together, their common factor is
cancelled first.  Substituting the induction hypothesis and
Eq.~\eqref{eq:supp-kall-R} shows that, before division by
\(v-u_{a+1}\), the nonzero coefficients are
\begin{equation}
\begin{array}{c|c}
\text{final state}&\text{coefficient}\\ \hline
|I,c\rangle,\ c\in I
&(v-u_{a+1})(-1)^{a+n_I(c)+2}(u+v)\\
|I,c\rangle,\ c\notin I
&(v-u_{a+1})(-1)^{a+n_I(c)+2}(u-v)\\
|I-x+c,x\rangle,\ x<c
&(v-u_{a+1})(-1)^{a+n_I(c)+2}(-2v)\\
|I-x+c,x\rangle,\ x>c
&(v-u_{a+1})(-1)^{a+n_I(c)+2}(2u).
\end{array}
\label{eq:supp-Ra1-induction-table}
\end{equation}
Terms containing two different exchanges occur twice with opposite
signs and cancel.  Division by the common first factor gives
Eqs.~\eqref{eq:supp-Ra1-repeated}--\eqref{eq:supp-Ra1-distinct}
with \(a\) replaced by \(a+1\), and simultaneously proves
Eq.~\eqref{eq:supp-Ra1-common-factor}.  This closes the induction.
\end{proof}

Iterating Eq.~\eqref{eq:supp-Ca-recursion} gives
\begin{equation}
F_a|I\rangle_{[a]}
=\frac{1}{\sqrt{a!}}\sum_{\pi\in S_a}
|i_{\pi(1)},\ldots,i_{\pi(a)}\rangle,
\qquad I=(i_1<\cdots<i_a).
\label{eq:supp-higher-Fa}
\end{equation}
Here \(S_a\) denotes the \(a!\) permutations of the \(a\) positions.
Fixing the last color gives the \((a-1)!\) permutations of the
remaining colors, so the recursion produces every permutation once.
Consequently,
\begin{equation}
\dim V^{[a]}=\binom{k}{a},
\qquad
V^{[0]}=\mathbb C,
\qquad
V^{[a]}=0\quad(a>k).
\label{eq:supp-fused-dimensions}
\end{equation}

The alternating product of fundamental Lax operators preserves the
selected channel.  Its restriction defines
\begin{equation}
L_{0j}^{[a]}(u)
=(F_a^\dagger\otimes\1_j)
L_{0_1j}^{[1]}(u_1)\cdots L_{0_aj}^{[1]}(u_a)
(F_a\otimes\1_j).
\label{eq:supp-higher-La}
\end{equation}
The fused matrix \(R^{[a,b]}(u,v)\) is obtained by projecting the
ordered \(a\times b\) product of fundamental \(R\) matrices onto
\(V^{[a]}\otimes V^{[b]}\).  The Yang--Baxter equation then gives the
fused RLL and RTT relations.  Thus the
periodic fused transfer matrices
\[
t^{[a]}(u)
=\operatorname{tr}_{V^{[a]}}
L_{0N}^{[a]}(u)\cdots L_{01}^{[a]}(u)
\]
belong to one commuting family,
\begin{equation}
[t^{[a]}(u),t^{[b]}(v)]=0.
\label{eq:supp-general-L-fused-commutativity}
\end{equation}

\begin{lemma}[Fixed-boundary commutativity]
Let
\[
T_0^{[i]}(u)
=L_{0N}^{[i]}(u)\cdots L_{01}^{[i]}(u)
\]
be the level-\(i\) fused monodromy.  Choose auxiliary boundary vectors
\(\langle\alpha_i|\), \(|\beta_i\rangle\in V^{[i]}\) such that, for
every pair \(i,j\),
\begin{align}
(\langle\alpha_i|\otimes\langle\alpha_j|)
R^{[i,j]}(u,v)
&=
\rho_{ij}(u,v)
(\langle\alpha_i|\otimes\langle\alpha_j|),
\nonumber\\
R^{[i,j]}(u,v)
(|\beta_i\rangle\otimes|\beta_j\rangle)
&=
\rho_{ij}(u,v)
(|\beta_i\rangle\otimes|\beta_j\rangle),
\label{eq:supp-compatible-boundaries}
\end{align}
with the same generically nonzero character \(\rho_{ij}\) at the two
ends.  Then
\[
\tau_{\alpha_i\beta_i}^{[i]}(u)
=\langle\alpha_i|T^{[i]}(u)|\beta_i\rangle
\]
satisfies
\begin{equation}
\left[
\tau_{\alpha_i\beta_i}^{[i]}(u),
\tau_{\alpha_j\beta_j}^{[j]}(v)
\right]=0.
\label{eq:supp-fixed-boundary-commutativity}
\end{equation}
\end{lemma}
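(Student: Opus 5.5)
The plan is the standard argument for boundary-projected RTT algebras, adapted to the fused level. The starting point is the fused RTT relation for levels $i$ and $j$, obtained from the Yang--Baxter equation as stated above:
\begin{equation}
R^{[i,j]}_{0\bar0}(u,v)\,T_0^{[i]}(u)\,T_{\bar0}^{[j]}(v)
=T_{\bar0}^{[j]}(v)\,T_0^{[i]}(u)\,R^{[i,j]}_{0\bar0}(u,v).
\end{equation}
This relation holds as an identity in $\operatorname{End}(V^{[i]}\otimes V^{[j]}\otimes\mathcal H)$. It follows from the fused RLL relation at each site by the usual train argument: push $R^{[i,j]}$ through $L_{0N}^{[i]}L_{\bar0N}^{[j]}$, then through site $N-1$, and so on down to site $1$.

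Next, I would sandwich this identity between $\langle\alpha_i|\otimes\langle\alpha_j|$ on the left and $|\beta_i\rangle\otimes|\beta_j\rangle$ on the right. On the left-hand side, the bra condition in Eq.~\eqref{eq:supp-compatible-boundaries} gives
\begin{equation}
\rho_{ij}(u,v)\,\langle\alpha_i\alpha_j|T_0^{[i]}(u)T_{\bar0}^{[j]}(v)|\beta_i\beta_j\rangle .
\end{equation}
Because $T_0^{[i]}$ and $T_{\bar0}^{[j]}$ act on different auxiliary factors, this matrix element factorizes as $\rho_{ij}\,\tau^{[i]}_{\alpha_i\beta_i}(u)\,\tau^{[j]}_{\alpha_j\beta_j}(v)$. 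The product of physical operators keeps the order of the monodromies: the auxiliary contractions are independent, and the physical-space operator products are composed in the order in which they are written. On the right-hand side, the ket condition gives $\rho_{ij}\,\tau^{[j]}_{\alpha_j\beta_j}(v)\,\tau^{[i]}_{\alpha_i\beta_i}(u)$, with the opposite order. The two eigenvalues must be the \emph{same} function $\rho_{ij}$; this is exactly why the hypothesis requires a common character at both ends. Dividing by $\rho_{ij}(u,v)$ wherever it is nonzero yields Eq.~\eqref{eq:supp-fixed-boundary-commutativity} on a dense set of $(u,v)$. Since the fused Lax operators are polynomial, and hence continuous, in the spectral parameters, the commutator is a polynomial that vanishes on a dense set. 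It therefore vanishes identically.

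The main obstacle is justifying the fused RTT relation with the normalized $R^{[i,j]}$ in the first step. The fused $R$ matrix is defined as a projection of an $i\times j$ product of fundamental $R$ matrices onto $F_i(V^{[i]})\otimes F_j(V^{[j]})$, possibly with scalar factors removed as in Eq.~\eqref{eq:supp-normalized-Ra1}. The projection is legitimate only because the products of fundamental Lax operators along the alternating strings $u_\ell$ preserve the selected channels. This is the channel-preservation property established in Lemma~\ref{lem:supp-explicit-Ra1} and in the discussion around Eq.~\eqref{eq:supp-higher-La}. Because the channel is invariant rather than merely triangular, the projection $F^\dagger(\cdots)F$ of a product equals the product of the projections. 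I would check this compatibility once for the ordered product of fundamental $R$ matrices against $L^{[i]}\otimes L^{[j]}$. Removing scalar factors is harmless, since they appear on both sides. After that, the remaining steps are purely formal.
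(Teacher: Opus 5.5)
Your proposal is correct and follows the paper's proof. You derive the fused RTT relation, sandwich it between the compatible boundary vectors so that both sides pick up the same character $\rho_{ij}$, cancel $\rho_{ij}$ generically, and extend by continuity; your added justification of the fused RTT relation fills in a step the paper simply asserts. One caveat: in the general framework the fused Lax operators are only assumed continuous, not polynomial, in the spectral parameter, but continuity is all your final extension step needs.
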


\begin{proof}
The fused local RLL relations give the fused RTT relation
\[
R^{[i,j]}(u,v)T_1^{[i]}(u)T_2^{[j]}(v)
=T_2^{[j]}(v)T_1^{[i]}(u)R^{[i,j]}(u,v).
\]
Sandwiching it between the boundary vectors in
Eq.~\eqref{eq:supp-compatible-boundaries} gives
\[
\rho_{ij}\,
\tau_{\alpha_i\beta_i}^{[i]}(u)
\tau_{\alpha_j\beta_j}^{[j]}(v)
=
\rho_{ij}\,
\tau_{\alpha_j\beta_j}^{[j]}(v)
\tau_{\alpha_i\beta_i}^{[i]}(u).
\]
Canceling the generic nonzero character proves
Eq.~\eqref{eq:supp-fixed-boundary-commutativity}. The identity extends
to its isolated zeros by continuity.
\end{proof}

For the compatible boundary family used in the Letter, we abbreviate
these operators as \(\tau_{ab}^{[i]}(u)\).  The lemma therefore gives
\[
\left[\tau_{ab}^{[i]}(u),\tau_{ab}^{[j]}(v)\right]=0,
\]
and similarly for any two compatible boundary characters.

\paragraph{Reflected pairing at fixed total level.}
Fix \(p+r=2a\), assign \(\xi_\ell=(-1)^{\ell-1}u\) to the first
fused space and \(\eta_s=(-1)^{s-1}v\) to the second, and define the
ordered product
\begin{equation}
\mathcal B_{p|r}(u,v)
=R_{p,p+r}(\xi_p,\eta_r)\cdots
R_{p,p+1}(\xi_p,\eta_1).
\label{eq:supp-Bpr}
\end{equation}
Project its input with \(F_p\otimes F_r\) and its reordered output
with \(F_{p-1}^\dagger\otimes F_{r+1}^\dagger\).  If the result
vanishes at \(v=-u\), take its first nonzero coefficient in
\(v+u\).  Removing the common nonzero scalar defines
\begin{equation}
f_{p,r}:V^{[p]}\otimes V^{[r]}\longrightarrow
V^{[p-1]}\otimes V^{[r+1]}.
\end{equation}

\begin{lemma}[Fusion-changing map]
\label{lem:supp-fusion-changing-map}
In the ordered-color basis,
\begin{equation}
f_{p,r}|I,J\rangle
=\sum_{\substack{c\in I\\c\notin J}}
(-1)^{n_J(c)}|I-c,J+c\rangle,
\label{eq:supp-color-transfer}
\end{equation}
where \(n_J(c)\) is the number of colors in \(J\) smaller than \(c\).
Moreover,
\begin{equation}
(f_{p,r}\otimes\1_j)
L_j^{[p]}(u)L_j^{[r]}(-u)
=L_j^{[p-1]}(u)L_j^{[r+1]}(-u)
(f_{p,r}\otimes\1_j),
\label{eq:supp-color-intertwining}
\end{equation}
and
\begin{equation}
f_{p-1,r+1}f_{p,r}=0.
\label{eq:supp-color-square-zero}
\end{equation}
\end{lemma}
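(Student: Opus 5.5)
We will prove the three claims in the order they are stated. Each one reduces to the single-step fusion structure established in Lemma~\ref{lem:supp-explicit-Ra1} and the embeddings \eqref{eq:supp-higher-Fa}.

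\emph{Explicit form of \(f_{p,r}\).} The idea is that \(\mathcal B_{p|r}(u,v)\) transports the last fundamental line of the first fused space through all \(r\) lines of the second. After projection with \(F_p\otimes F_r\), the only relevant factor is the one whose spectral parameters are reflected.
\begin{itemize}
\item Since \(p+r=2a\) is even, \(\xi_p\) and \(\eta_{r+1}\) are related by \(\eta_{r+1}=-\xi_p\) exactly at \(v=-u\).
\item The first nonzero coefficient in \(v+u\) therefore projects the moving line, together with \(F_r\), onto the totally symmetric level-\((r+1)\) channel of Lemma~\ref{lem:supp-explicit-Ra1}.
\end{itemize}
Concretely, write \(F_p|I\rangle\) as a sum over which color \(c\in I\) occupies the last slot, giving \(\sum_{c}F_{p-1}|I-c\rangle\otimes|c\rangle\) up to normalization. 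Then apply the normalized \(R^{[r,1]}\) at the reflected point using \eqref{eq:supp-Ra1-repeated}--\eqref{eq:supp-Ra1-distinct}. Terms with \(c\in J\) are killed by the projection \(F^\dagger_{r+1}\), since repeated colors are not in \(V^{[r+1]}\). For \(c\notin J\), the output sums the exchange terms into \(|S^{(r+1)}_{J+c}\rangle\) with an overall sign. Once the common scalar \(\chi\)-type factors are removed, this sign is \((-1)^{n_J(c)}\). I would fix the normalization by checking the case \(p=r=1\) against \eqref{eq:supp-kall-R-factorization}, and then track the sign \((-1)^{a+n_I(c)+1}\) through the induction.

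\emph{Intertwining relation \eqref{eq:supp-color-intertwining}.}
\begin{itemize}
\item Multiply the RLL relation repeatedly along the \(p+r\) fundamental lines. This gives \(\mathcal B_{p|r}\,L\cdots L = L\cdots L\,\mathcal B_{p|r}\), where the right side has the lines reordered; this is the usual train argument.
\item Sandwich it with \(F_{p-1}^\dagger\otimes F_{r+1}^\dagger\) on the left and \(F_p\otimes F_r\) on the right.
\item Use that the alternating Lax products preserve the fused channels, as used to define \eqref{eq:supp-higher-La}. The projectors can then be moved through, and the products become \(L^{[p]}(u)L^{[r]}(-u)\) and \(L^{[p-1]}(u)L^{[r+1]}(-u)\).
\end{itemize}
The second fused space carries \(\eta_s=(-1)^{s-1}v\) with \(v=-u\), so its alternating string starting at \(-u\) is exactly the string of \(L^{[r]}(-u)\). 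Taking the leading coefficient in \(v+u\) on both sides is legitimate because the Lax products are regular there. This is the step I expect to be the main obstacle. One must show that the lower-order terms in \(v+u\) do not contribute after projection; equivalently, that the projected \(\mathcal B\) vanishes to the same order on both sides. I would handle this by dividing out the polynomial prefactor before setting \(v=-u\), exactly as in \eqref{eq:supp-normalized-Ra1}.

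\emph{Square-zero property \eqref{eq:supp-color-square-zero}.} This is a direct computation in the color basis. Applying \(f\) twice moves two distinct colors \(c\neq c'\) from \(I\) into \(J\), and the two orders are added together. The order \(c\) then \(c'\) carries the sign \((-1)^{n_J(c)+n_{J+c}(c')}\), and the reverse order carries \((-1)^{n_J(c')+n_{J+c'}(c)}\). Suppose \(c<c'\). Then \(n_{J+c}(c')=n_J(c')+1\) while \(n_{J+c'}(c)=n_J(c)\), so the two signs are opposite and the terms cancel pairwise. Terms with \(c=c'\) cannot occur, because after the first move \(c\notin I-c\).
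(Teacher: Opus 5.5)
Your proposal is correct and follows the paper's proof step for step: you expand $F_p$ over the color in its last slot, remove $c\in J$ with the $F_{r+1}^\dagger$ projection, obtain a $c$-independent coefficient times $(-1)^{n_J(c)}$ for $c\notin J$, read the RLL train at the first nonzero order in $v+u$, and cancel the two orders of moving two colors pairwise. Your use of the normalized $R^{[r,1]}$ of Lemma~\ref{lem:supp-explicit-Ra1} at a reflected pair of arguments also makes the origin of the sign more explicit than the paper's appeal to reordering; this needs the $r$ symmetric lines relabelled, since $\mathcal B_{p|r}$ lists its factors in the opposite order.

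One slip to correct: at $v=-u$ one has $\eta_{r+1}=\xi_p$, equivalently $\eta_r=-\xi_p$, so $R_{p,p+r}(\xi_p,\eta_r)$ is the reflected factor; it is not $\eta_{r+1}=-\xi_p$. It is exactly $\eta_{r+1}=\xi_p$ that lets the moved line complete the alternating string of $L^{[r+1]}(-u)$ in your intertwining step.
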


\begin{proof}
Insert the expansions~\eqref{eq:supp-higher-Fa} on both sides of
Eq.~\eqref{eq:supp-Bpr}.  The last line of the first fused space
carries every \(c\in I\) with the same amplitude.  The final
\(F_{r+1}^\dagger\) projection removes \(c\in J\), since the right
color list would then repeat a color.  For \(c\notin J\), direct use
of Eq.~\eqref{eq:supp-kall-R} gives a common coefficient independent
of \(c\). After it is removed, reordering \(J+c\) contributes
\((-1)^{n_J(c)}\).  This proves Eq.~\eqref{eq:supp-color-transfer}.
Applying the fundamental RLL relation successively to the same
ordered product gives Eq.~\eqref{eq:supp-color-intertwining}. If a
common scalar vanishes, the equality is taken at its first nonzero
coefficient.  Finally, moving two distinct colors in the two possible
orders gives the same final lists with opposite signs, proving
Eq.~\eqref{eq:supp-color-square-zero}.
\end{proof}

We now show that these maps pair all neighboring levels except for one
central block.  Fix the colors \(K\) that occur on both sides and the
colors \(\Omega\) that occur on only one side.  Let \(A\) denote the
colors that occur only on the left.  When \(\Omega\) contains at least
one color, choose \(c_0\in\Omega\) and define, only inside this proof,
\begin{equation}
h|A\rangle=
\begin{cases}
0,&c_0\in A,\\[1mm]
w_{c_0}^{-1}(-1)^{n_A(c_0)}|A+c_0\rangle,&c_0\notin A,
\end{cases}
\qquad
w_c=(-1)^{n_K(c)+n_\Omega(c)}.
\label{eq:supp-backward-map}
\end{equation}
Direct substitution into Eq.~\eqref{eq:supp-color-transfer} gives
\begin{equation}
fh+hf=\1.
\label{eq:supp-two-way-transfer}
\end{equation}
Therefore every state annihilated by the next \(f\) is produced by
the preceding one.  The only unpaired groups have no color in
\(\Omega\), which means \(I=J\).  This can occur only at \(p=r=a\),
where the unpaired states span
\begin{equation}
\mathcal M_a
=\operatorname{span}\{|I,I\rangle:|I|=a\}
\subset V^{[a]}\otimes V^{[a]},
\qquad
\mathcal M_a\simeq V^{[a]}.
\label{eq:supp-locked-space}
\end{equation}

Set
\begin{equation}
X_{p,j}^{(a)}(u)
=L_j^{[p]}(u)L_j^{[2a-p]}(-u).
\label{eq:supp-Xpa}
\end{equation}

\begin{lemma}[Paired auxiliary trace]
\label{lem:supp-paired-trace}
Let \(Y_p\) act on \(V^{[p]}\otimes V^{[2a-p]}\) and satisfy
\begin{equation}
f_{p,2a-p}Y_p=Y_{p-1}f_{p,2a-p}.
\end{equation}
Then
\begin{equation}
\sum_{\substack{0\leq p\leq k\\0\leq2a-p\leq k}}
(-1)^{a-p}\operatorname{tr}Y_p
=\operatorname{tr}_{\mathcal M_a}Y_a^0,
\label{eq:supp-paired-trace}
\end{equation}
where \(Y_a^0\) is the diagonal block on \(\mathcal M_a\).
\end{lemma}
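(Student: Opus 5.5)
The plan is to treat the lemma as an Euler-characteristic statement for a chain complex, and to control its homology with the contracting homotopy of Eq.~\eqref{eq:supp-two-way-transfer}. Write $W_p=V^{[p]}\otimes V^{[2a-p]}$ for the admissible range of $p$, and let $f_p\equiv f_{p,2a-p}:W_p\to W_{p-1}$. By Lemma~\ref{lem:supp-fusion-changing-map} these maps satisfy $f_{p-1}f_p=0$, so $(W_\bullet,f_\bullet)$ is a complex. The hypothesis $f_pY_p=Y_{p-1}f_p$ says that $Y_\bullet$ is a chain map on this complex. The alternating sum of traces is then a Lefschetz number. The main work is to show that the complex is acyclic away from $\mathcal M_a$, and that the residual piece is exactly the diagonal block on $\mathcal M_a$.

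First I would split the complex using the color data. The map $f$ of Eq.~\eqref{eq:supp-color-transfer} moves one color from $I$ to $J$. It therefore preserves the multiset of colors, which determines $K=I\cap J$ and $\Omega=I\triangle J$. Hence $W_\bullet=\bigoplus_{(K,\Omega)}W_\bullet^{(K,\Omega)}$ as complexes. The operators $Y_p$ need not respect this splitting. To handle that, I would filter by the Koszul-type grading instead of splitting, since $Y$ commutes with $f$ but not with $h$.

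The approach is the standard argument: for a complex with a chain endomorphism, $\sum(-1)^p\operatorname{tr}Y_p=\sum(-1)^p\operatorname{tr}(Y_*|H_p)$. This holds because $\operatorname{tr}Y_p=\operatorname{tr}Y|_{B_p}+\operatorname{tr}Y|_{H_p}+\operatorname{tr}Y|_{W_p/Z_p}$, and $f$ intertwines $Y$ on $W_p/Z_p$ with $Y$ on $B_{p-1}$. The boundary and coboundary contributions therefore cancel pairwise in the alternating sum. So it suffices to compute the homology.

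For each block with $\Omega\neq\emptyset$, the map $h$ of Eq.~\eqref{eq:supp-backward-map} satisfies $fh+hf=\1$. That block is therefore exact, with zero homology. The blocks with $\Omega=\emptyset$ occur only at $p=a$. They have $I=J$ and are spanned by $\mathcal M_a$, with $f$ vanishing into and out of them: out of them because $c\in I$ forces $c\in J$, and into them by color counting. Hence $H_\bullet\cong\mathcal M_a$ concentrated in degree $a$.

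It remains to identify $\operatorname{tr}(Y_*|H_a)$ with $\operatorname{tr}_{\mathcal M_a}Y_a^0$. I would use the decomposition $W_a=B_a\oplus\mathcal M_a\oplus h(B_{a-1})$ and note that $Z_a=B_a\oplus\mathcal M_a$. Then $H_a=Z_a/B_a\cong\mathcal M_a$, and the induced map is $P_{\mathcal M}Y_a|_{\mathcal M_a}$ taken modulo $B_a$. I still need to check that the projection along $B_a\oplus h B_{a-1}$ agrees with the orthogonal diagonal block. I expect this to be the main obstacle. It should follow because $B_a$ and $hB_{a-1}$ consist of states with $\Omega\neq\emptyset$, which are orthogonal to $|I,I\rangle$ in the ordered-color basis. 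The sign $(-1)^{a-p}$ matches the degree convention, giving Eq.~\eqref{eq:supp-paired-trace}.
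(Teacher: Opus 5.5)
Your proposal is correct and is essentially the paper's argument: the paper's block order (produced, locked, remaining) is your flag $B_p\subseteq Z_p\subseteq W_p$, with $f_{p,2a-p}$ identifying the remaining block at $p$ with the produced block at $p-1$ so that their traces cancel, and with $fh+hf=\1$ showing that the locked quotient is nonzero only for $\mathcal M_a$ at $p=a$. The identification you flag at the end closes as you expect, and it needs no filtering of $Y$: $Y_a\mathcal M_a\subseteq Z_a=B_a\oplus\mathcal M_a$, and $B_a$ lies in the span of the states $|I,J\rangle$ with $I\neq J$, which are orthogonal to $\mathcal M_a$, so the induced map on $H_a$ is exactly the orthogonal diagonal block $Y_a^0$.
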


\begin{proof}
Equation~\eqref{eq:supp-color-intertwining} preserves the states
produced by the preceding \(f\), as well as the larger space of states
annihilated by the next \(f\).  Hence \(Y_p\) is upper block
triangular in the order
\[
(\text{produced},\text{ locked},\text{ remaining}).
\]
Choose the remaining block complementary to the annihilated states.
The map \(f_{p,2a-p}\) is one to one from this block onto the produced
block at \(p-1\), and the intertwining relation identifies their
diagonal matrices.  Their traces occur with opposite signs in
Eq.~\eqref{eq:supp-paired-trace} and cancel.  Only the locked block at
\(p=a\) remains.
\end{proof}

At \(p=a\), let
\begin{equation}
\mathcal C_j^{[a]}(u)
=P_{\mathcal M_a}X_{a,j}^{(a)}(u)P_{\mathcal M_a}
\label{eq:supp-locked-local-block}
\end{equation}
be the locked diagonal block.  The upper-triangular form implies that
the locked block of a product of local operators is the product of
their locked blocks.

\paragraph{Scalar closure.}
Introduce a third fundamental auxiliary space and set
\begin{equation}
\mathcal A_{p,r}(u,v)
=R_{13}^{[p,1]}(u,v)R_{23}^{[r,1]}(-u,v).
\label{eq:supp-Apr}
\end{equation}
For \(p+r=2a\), \(p,r\geq1\), the normalization factors satisfy
\begin{equation}
\chi_p(u,v)\chi_r(-u,v)=(v^2-u^2)^{a-1},
\label{eq:supp-A-common-normalization}
\end{equation}
which is independent of \(p\).  The fused Yang--Baxter equation and
Eq.~\eqref{eq:supp-color-intertwining} therefore give
\begin{equation}
(f_{p,r}\otimes\1_3)\mathcal A_{p,r}(u,v)
=\mathcal A_{p-1,r+1}(u,v)(f_{p,r}\otimes\1_3).
\label{eq:supp-A-intertwining}
\end{equation}
Thus \(\mathcal A_{a,a}\) has the same upper-triangular form.  Its
locked diagonal block is
\begin{equation}
\left[
R_{13}^{[a,1]}(u,v)R_{23}^{[a,1]}(-u,v)
\right]^0
=(v^2-u^2)\1_{\mathcal M_a}\otimes\1_3.
\label{eq:supp-higher-middle-scalar-train}
\end{equation}
This is the normalized convention~\eqref{eq:supp-normalized-Ra1}.
For the unrenormalized matrices~\eqref{eq:supp-raw-Ra1},
Eq.~\eqref{eq:supp-A-common-normalization} restores the factor
\((v^2-u^2)^a\).  Terms that change only one auxiliary color list lie
in the produced block and do not contribute to the locked diagonal
block.

Two fused RLL relations give
\begin{equation}
\mathcal A_{a,a}(u,v)X_{a,j}^{(a)}(u)L_{3j}^{[1]}(v)
=L_{3j}^{[1]}(v)X_{a,j}^{(a)}(u)\mathcal A_{a,a}(u,v).
\label{eq:supp-three-line-RLL-general}
\end{equation}
Taking the locked diagonal block and cancelling \(v^2-u^2\) gives,
for generic \(u,v\),
\begin{equation}
\mathcal C_j^{[a]}(u)L_{3j}^{[1]}(v)
=L_{3j}^{[1]}(v)\mathcal C_j^{[a]}(u).
\label{eq:supp-higher-middle-commutes}
\end{equation}
Continuity extends the identity to all regular spectral values.  By
local irreducibility and Schur's lemma,
\begin{equation}
\mathcal C_j^{[a]}(u)
=\mathsf S_{a,j}(u^2)\otimes\1_{\mathcal H_j}.
\label{eq:supp-higher-middle-Schur}
\end{equation}
Exchanging the two reflected auxiliary spaces sends \(u\) to \(-u\).
After Eq.~\eqref{eq:supp-higher-middle-Schur}, cyclicity of the
physical trace shows that \(\mathsf S_{a,j}(u)=\mathsf S_{a,j}(-u)\),
which justifies writing its argument as \(u^2\).

We use \(t^{[0]}(u)=\1\) and \(t^{[r]}(u)=0\) outside
\(0\leq r\leq k\).

\begin{theorem}[Finite reflected hierarchy]
\label{thm:supp-kall-finite-hierarchy}
For every finite-dimensional local realization of
Eq.~\eqref{eq:supp-framework-RLL} that is regular at the reflected
fusion points and irreducible on \(\mathcal H_j\), the fused transfer
matrices satisfy, for \(a=1,\ldots,k-1\),
\begin{equation}
\begin{aligned}
t^{[a]}(u)t^{[a]}(-u)
={}&\Phi_{a,N}(u^2)\1
+\sum_{b=1}^{\min(a,k-a)}(-1)^{b-1}
\Big[
t^{[a-b]}(u)t^{[a+b]}(-u)
+t^{[a-b]}(-u)t^{[a+b]}(u)
\Big],
\end{aligned}
\label{eq:supp-general-L-finite-hierarchy}
\end{equation}
where
\begin{equation}
\Phi_{a,N}(u^2)
=\operatorname{tr}_{\mathcal M_a}
\left[
\mathsf S_{a,N}(u^2)\cdots\mathsf S_{a,1}(u^2)
\right].
\label{eq:supp-general-scalar-source}
\end{equation}
\end{theorem}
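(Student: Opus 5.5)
The plan is to apply Lemma~\ref{lem:supp-paired-trace} to the full monodromy products, not just to single sites. For fixed $a$, set
\[
Y_p=T^{[p]}(u)\,T^{[2a-p]}(-u)
=X^{(a)}_{p,N}(u)\cdots X^{(a)}_{p,1}(u),
\]
acting on $V^{[p]}\otimes V^{[2a-p]}\otimes\mathcal H$. Here $p$ runs over $\max(0,2a-k)\le p\le\min(2a,k)$, that is, over $p=a\pm b$ with $0\le b\le\min(a,k-a)$.

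\textbf{Step 1: intertwining and trace identity.} Equation~\eqref{eq:supp-color-intertwining} is a site-local statement in the physical space. Iterating it site by site gives
\[
f_{p,2a-p}\,Y_p=Y_{p-1}\,f_{p,2a-p}.
\]
This is exactly the hypothesis of Lemma~\ref{lem:supp-paired-trace}, with operator-valued entries on $\mathcal H=\bigotimes_j\mathcal H_j$. The lemma's proof only uses block triangularity and the identification of diagonal blocks through $f$, so it carries over verbatim with traces taken over the auxiliary spaces only.

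The left side of Eq.~\eqref{eq:supp-paired-trace} is then
\[
\sum_p(-1)^{a-p}\,t^{[p]}(u)\,t^{[2a-p]}(-u),
\]
where we use that the trace over $V^{[p]}\otimes V^{[2a-p]}$ factorizes into the two fused transfer matrices. Separating the term $p=a$ and grouping $p=a-b$ with $p=a+b$, both of which carry the sign $(-1)^b$, we obtain
\[
t^{[a]}(u)t^{[a]}(-u)
+\sum_{b\ge1}(-1)^b\Bigl[t^{[a-b]}(u)t^{[a+b]}(-u)+t^{[a+b]}(u)t^{[a-b]}(-u)\Bigr].
\]
By commutativity, Eq.~\eqref{eq:supp-general-L-fused-commutativity}, we may reorder $t^{[a+b]}(u)t^{[a-b]}(-u)=t^{[a-b]}(-u)t^{[a+b]}(u)$. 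Moving the $b\ge1$ terms to the right-hand side produces the signs $(-1)^{b-1}$ in the theorem, and the boundary conventions $t^{[0]}=\1$ and $t^{[r]}=0$ for $r>k$ cover the edge cases.

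\textbf{Step 2: the locked block.} The right side of Eq.~\eqref{eq:supp-paired-trace} is $\operatorname{tr}_{\mathcal M_a}Y_a^0$. Because every local factor $X^{(a)}_{a,j}$ is upper block triangular in the same ordering (produced, locked, remaining), the locked block of the product is the product of the locked blocks, as already noted after Eq.~\eqref{eq:supp-locked-local-block}. Each local locked block is $\mathcal C^{[a]}_j(u)$, which equals $\mathsf S_{a,j}(u^2)\otimes\1$ by Eq.~\eqref{eq:supp-higher-middle-Schur}. Therefore
\[
Y_a^0=\mathsf S_{a,N}(u^2)\cdots\mathsf S_{a,1}(u^2)\otimes\1,
\]
and its trace is $\Phi_{a,N}(u^2)\1$.

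\textbf{Main obstacle.} The delicate point is making the block decomposition in Lemma~\ref{lem:supp-paired-trace} site-independent. The subspaces ``produced by the preceding $f$'' and ``annihilated by the next $f$'' are purely auxiliary, since $f$ acts trivially on $\mathcal H_j$. Hence the same decomposition works at every site, and the products remain upper triangular. The complement block can be chosen once and for all, using the contracting homotopy $h$ of Eq.~\eqref{eq:supp-two-way-transfer}.

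Two further cases need attention:
\begin{itemize}
\item If $f_{p,r}$ is only defined up to the first nonzero coefficient in $v+u$, one works at generic $u$ and extends to all $u$ by polynomial continuity.
\item Regularity at the reflected points ensures that all the $L^{[p]}(\pm u)$ are finite, so every $Y_p$ is well defined.
\end{itemize}
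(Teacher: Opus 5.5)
Your proposal is correct and follows the paper's proof. The paper also applies the paired-trace lemma to the doubled monodromies $Y_p=T^{[p]}(u)T^{[2a-p]}(-u)$, identifies the locked block with the ordered product of the $\mathsf S_{a,j}(u^2)$, and then pairs $p=a\pm b$ using commutativity. Your additional details — the site-by-site iteration of the intertwining relation and the observation that the (produced, locked, remaining) flag is purely auxiliary and hence site-independent — are exactly what the paper leaves implicit.
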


\begin{proof}
Apply Lemma~\ref{lem:supp-paired-trace} to the doubled fused
monodromies
\[
Y_p
=\prod_{j=N}^{1}
\left[
L_j^{[p]}(u)L_j^{[2a-p]}(-u)
\right].
\]
Equation~\eqref{eq:supp-color-intertwining} intertwines these
monodromies, while the locked block is the product of the local
matrices in Eq.~\eqref{eq:supp-higher-middle-Schur}.  The doubled
auxiliary trace of \(Y_p\) is
\(t^{[p]}(u)t^{[2a-p]}(-u)\).  Hence
\begin{equation}
\sum_{\substack{0\leq p,q\leq k\\p+q=2a}}
(-1)^{a-p}t^{[p]}(u)t^{[q]}(-u)
=\Phi_{a,N}(u^2)\1 .
\label{eq:supp-master-reflected-identity}
\end{equation}
Isolating \(p=q=a\), pairing \(p=a-b\) with \(p=a+b\), and using
Eq.~\eqref{eq:supp-general-L-fused-commutativity} gives
Eq.~\eqref{eq:supp-general-L-finite-hierarchy}.
\end{proof}

Comparing Eq.~\eqref{eq:supp-general-L-finite-hierarchy} with the
standard relation~\eqref{eq:supp-standard-T-system-m1} shows explicitly
what happens to the width-two term at \(q=\ii\).  In the reflected
endpoint convention,
\begin{align}
\mathcal T_2^{[a]}(u)
={}&\Phi_{a,N}(u^2)\1
+t^{[a-1]}(u)t^{[a+1]}(-u)
+\sum_{b=2}^{\min(a,k-a)}(-1)^{b-1}
\Big[
t^{[a-b]}(u)t^{[a+b]}(-u)
+t^{[a-b]}(-u)t^{[a+b]}(u)
\Big].
\label{eq:supp-width-two-folding}
\end{align}
Thus the generic width direction folds into the single-column
hierarchy plus the scalar source.

At \(a=k\), Eq.~\eqref{eq:supp-higher-Fa} leaves the unique state
containing all \(k\) colors.  Standard top-level fusion identifies
the corresponding transfer matrix \(t^{[k]}(u)\) with the quantum
determinant~\cite{KulishReshetikhinSklyanin1981,
KunibaNakanishiSuzuki1994}, whose auxiliary space is one dimensional.  The hierarchy therefore closes after
\(k-1\) nontrivial levels.  For \(k=3\),
only \(t^{[1]}\) and \(t^{[2]}\) remain, giving the two coupled
relations used below.

\section{Closure of the inhomogeneous \texorpdfstring{$k=3$}{k=3} FFD chain}
\label{sec:supp-twisted-closure}

\subsection{Commuting family and Hamiltonian}
\label{subsec:supp-ffd-commuting}

In the Lax operator of Eq.~\eqref{eq:L-FFD} the spectral parameter
occupies a single entry, so the site-dependent coupling enters only
through the product \(h_ju\),
\begin{equation}
L^{[1]}_{\mathrm F,j}(u)=\mathcal L_j(h_ju),
\qquad
\mathcal L_j(w)=
\begin{pmatrix}
\1_j&0&-w\,\sigma^x_j\\
\sigma^z_j&0&0\\
0&\sigma^z_j&0
\end{pmatrix}.
\label{eq:supp-twist-G}
\end{equation}
The \(R\) matrix~\eqref{eq:HFF-R} is homogeneous of degree one,
\(R_{0\bar0}(cu,cv)=c\,R_{0\bar0}(u,v)\), so the RLL
relation~\eqref{eq:RLL-framework} at site \(j\) is the homogeneous one
evaluated at the rescaled arguments \(h_ju\) and \(h_jv\), and the same
\(R_{0\bar0}(u,v)\) intertwines every site.  The transfer matrices
\begin{equation}
t^{[a]}_{\mathrm F}(u)
=\operatorname{tr}_{V^{[a]}}
\big[L^{[a]}_{\mathrm F,N}(u)\cdots L^{[a]}_{\mathrm F,1}(u)\big],
\qquad a=1,2,3,
\label{eq:supp-twist-transfer}
\end{equation}
therefore form one commuting family for every choice of the couplings
\(\{h_j\}_{j=1}^{N}\).

At \(u=0\) the only closed auxiliary path of the monodromy is the one
that stays in the first auxiliary state, so
\(t^{[1]}_{\mathrm F}(0)=\1\), and expanding to first order in \(u\)
gives \(t^{[1]}_{\mathrm F}(u)=\1-uH_{\mathrm F}+O(u^2)\) with
\begin{equation}
H_{\mathrm F}
=\sum_{j=1}^{N}h_j\,\sigma^z_{j-2}\sigma^z_{j-1}\sigma^x_j ,
\qquad \text{indices mod }N,
\label{eq:supp-twist-H}
\end{equation}
which is Eq.~\eqref{eq:FFD-general-Hamiltonian} written cyclically:
\(h_j\) multiplies the bond whose \(\sigma^x\) factor sits on site
\(j\).

Every bond carries its own coupling.  All \(h_j=1\) is the periodic
chain, a single vanishing coupling is the one-wrap closure, and two
cyclically adjacent vanishing couplings, \(h_1=h_2=0\), remove both
wrap-around bonds and leave the open chain of
Ref.~\cite{Fendley2019}.  Two symmetries are used below.  Conjugation
with \(\sigma^x_m\) flips the sign of the two bonds whose \(\sigma^z\)
factors touch site \(m\), that is
\((h_{m+1},h_{m+2})\mapsto(-h_{m+1},-h_{m+2})\), and conjugation with
\(\prod_j\sigma^z_j\) sends \(H_{\mathrm F}\mapsto-H_{\mathrm F}\), so
the spectrum is symmetric about zero.  A common rescaling
\(h_j\mapsto c\,h_j\) is the change of variable \(u\mapsto cu\) and
multiplies all energies by \(c\).

\subsection{The two fusion relations}
\label{subsec:supp-ffd-fusion-relations}

The construction of Sec.~\ref{sec:supp-kall-fusion} is local in the site
index and nowhere uses homogeneity, so it applies without change with
the site-dependent scalar matrices
\(\mathsf S_{a,j}(u^2)=\mathsf S_a(h_j^2u^2)\).  At the first level,
reading Eq.~\eqref{eq:L-FFD} entry by entry gives
\(L^{[1]}_{\mathrm F,ab,j}(u)L^{[1]}_{\mathrm F,ab,j}(-u)
=[\mathsf S_1(h_j^2u^2)]_{ab}\1_j\) with \(\mathsf S_1\) the level-one
scalar matrix of Eq.~\eqref{eq:FFD-scalar-matrices}, and the trace of
the block-triangular reflected monodromy gives
\begin{equation}
t^{[1]}_{\mathrm F}(u)\,t^{[1]}_{\mathrm F}(-u)
=\Phi^{\mathrm F}_{1,N}(u^2)\,\1
+t^{[2]}_{\mathrm F}(u)+t^{[2]}_{\mathrm F}(-u),
\label{eq:supp-twist-rel1}
\end{equation}
where the locked trace is the ordered product
\begin{equation}
\Phi^{\mathrm F}_{1,N}(u^2)
=\operatorname{tr}\big[\mathsf S_1(h_N^2u^2)\cdots\mathsf S_1(h_1^2u^2)\big].
\label{eq:supp-twist-Phi1}
\end{equation}

At the second level the same three-layer decomposition of
\(L^{[2]}_{\mathrm F,j}(u)L^{[2]}_{\mathrm F,j}(-u)\) gives
\begin{equation}
t^{[2]}_{\mathrm F}(u)\,t^{[2]}_{\mathrm F}(-u)
=\Phi^{\mathrm F}_{2,N}(u^2)\,\1
+t^{[3]}_{\mathrm F}(u)\,t^{[1]}_{\mathrm F}(-u)
+t^{[3]}_{\mathrm F}(-u)\,t^{[1]}_{\mathrm F}(u),
\label{eq:supp-twist-rel2-general}
\end{equation}
with
\begin{equation}
\Phi^{\mathrm F}_{2,N}(u^2)
=\operatorname{tr}\big[\mathsf S_2(h_N^2u^2)\cdots\mathsf S_2(h_1^2u^2)\big].
\label{eq:supp-twist-Phi2}
\end{equation}
Here \(\mathsf S_2(h_j^2u^2)\) is the locked block of
\(L^{[2]}_{\mathrm F,j}(u)L^{[2]}_{\mathrm F,j}(-u)\) in the basis
\((s_{12},s_{13},s_{23})\), whose entries are the products
\([L^{[2]}_{\mathrm F,j}(u)]_{I'I}[L^{[2]}_{\mathrm F,j}(-u)]_{I'I}\)
read off from Eq.~\eqref{eq:FFD-fused-Lax}, and it is the matrix of
\(2\times2\) subdeterminants of \(\mathsf S_1(h_j^2u^2)\) in the same
basis.  The top level is one dimensional.  With
\(F_3|123\rangle_{[3]}=6^{-1/2}\sum_{\pi\in S_3}|\pi(1),\pi(2),\pi(3)\rangle\),
the projection of Eq.~\eqref{eq:supp-higher-La} gives
\begin{equation}
L^{[3]}_{\mathrm F,j}(u)
=(F_3^\dagger\otimes\1_j)
L^{[1]}_{\mathrm F,0_1j}(u)L^{[1]}_{\mathrm F,0_2j}(-u)L^{[1]}_{\mathrm F,0_3j}(u)
(F_3\otimes\1_j)
=-h_ju\,\sigma^x_j ,
\label{eq:supp-twist-L3}
\end{equation}
so that
\begin{equation}
t^{[3]}_{\mathrm F}(u)=\Big(\prod_{j=1}^{N}h_j\Big)(-u)^N\mathcal X_N,
\qquad
\mathcal X_N=\prod_{j=1}^N\sigma^x_j ,
\label{eq:supp-twist-top}
\end{equation}
and Eq.~\eqref{eq:supp-twist-rel2-general} becomes
\begin{equation}
t^{[2]}_{\mathrm F}(u)\,t^{[2]}_{\mathrm F}(-u)
=\Phi^{\mathrm F}_{2,N}(u^2)\,\1
+\Big(\prod_{j=1}^{N}h_j\Big)u^N\mathcal X_N
\Big[t^{[1]}_{\mathrm F}(u)+(-1)^N t^{[1]}_{\mathrm F}(-u)\Big].
\label{eq:supp-twist-rel2}
\end{equation}
Equations~\eqref{eq:supp-twist-rel1} and \eqref{eq:supp-twist-rel2} are
the closed \(k=3\) hierarchy of the inhomogeneous chain.  The couplings
enter in two places only, through the scalar sources
\(\Phi^{\mathrm F}_{a,N}\) and through the product \(\prod_jh_j\) in
front of the feedback term.  The remaining level follows the same rule,
\(\Phi^{\mathrm F}_{3,N}
=\det\big[\mathsf S_1(h_N^2u^2)\cdots\mathsf S_1(h_1^2u^2)\big]
=\big(\prod_jh_j^2\big)(-u^2)^N\), which reproduces
\(t^{[3]}_{\mathrm F}(u)t^{[3]}_{\mathrm F}(-u)\).

\subsection{Eigenvalue form}
\label{subsec:supp-ffd-eigenvalue-form}

All \(t^{[a]}_{\mathrm F}\) commute with \(\mathcal X_N\).  In a sector
\(\mathcal X_N=\chi\), the joint eigenvalues \(\Lambda^{[1]}_\chi(u)\)
and \(\Lambda^{[2]}_\chi(u)\) obey
\begin{align}
\Lambda^{[1]}_\chi(u)\Lambda^{[1]}_\chi(-u)
&=\Phi^{\mathrm F}_{1,N}(u^2)+\Lambda^{[2]}_\chi(u)+\Lambda^{[2]}_\chi(-u),
\label{eq:supp-twist-eigen1}\\
\Lambda^{[2]}_\chi(u)\Lambda^{[2]}_\chi(-u)
&=\Phi^{\mathrm F}_{2,N}(u^2)
+\Big(\prod_{j}h_j\Big)\chi\,u^N
\Big[\Lambda^{[1]}_\chi(u)+(-1)^N\Lambda^{[1]}_\chi(-u)\Big],
\label{eq:supp-twist-eigen2}
\end{align}
which are Eqs.~\eqref{eq:FFD-eigenvalue-1} and
\eqref{eq:FFD-eigenvalue-2} of the Letter.  The coefficient of the
lowest power of \(\Lambda^{[2]}_\chi\) follows from the lowest
coefficient of \(\Phi^{\mathrm F}_{2,N}\), which is controlled by the
two sublattice products
\begin{equation}
P_{\mathrm o}=\prod_{j\ \mathrm{odd}}h_j ,
\qquad
P_{\mathrm e}=\prod_{j\ \mathrm{even}}h_j ,
\qquad
P_{\mathrm o}P_{\mathrm e}=\prod_{j=1}^{N}h_j .
\label{eq:supp-sublattice-products}
\end{equation}

\begin{lemma}[Lowest coefficient of the fused scalar source]
\label{lem:supp-psi0}
For even \(N\), the lowest power of \(u\) in
\(\Phi^{\mathrm F}_{2,N}(u^2)\) is \(u^{N}\), with coefficient
\begin{equation}
P_{\mathrm o}^2+P_{\mathrm e}^2 .
\label{eq:supp-psi0-value}
\end{equation}
\end{lemma}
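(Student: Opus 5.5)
The plan is to expand each scalar factor $\mathsf S_2(x_j)$, with $x_j=h_j^2u^2$, into its $x$-independent and $x$-linear parts and to read off the lowest power of $u$ in the trace combinatorially. From Eq.~\eqref{eq:FFD-scalar-matrices},
\[
\mathsf S_2(x)=A+xB,\qquad A=E_{21}+E_{31},\qquad B=E_{12}+E_{23}.
\]
Expanding the ordered product in Eq.~\eqref{eq:supp-twist-Phi2} gives
\[
\Phi^{\mathrm F}_{2,N}(u^2)=\sum_{\epsilon\in\{A,B\}^N}\Big(\prod_{j:\,\epsilon_j=B}h_j^2\Big)\,u^{2n_B(\epsilon)}\,\operatorname{tr}\big[\epsilon_N\cdots\epsilon_1\big],
\]
where $n_B(\epsilon)$ is the number of $B$ factors in the word $\epsilon$. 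The lowest power of $u$ is therefore $u^{2m}$, where $m$ is the smallest $n_B$ among words with nonzero trace.

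The key structural fact is $A^2=0$. Indeed, $A$ sends $e_1$ to $e_2+e_3$ and annihilates $e_2$ and $e_3$, so its image lies in its kernel. Under the trace, which is cyclic, any word containing two cyclically adjacent $A$'s therefore vanishes.

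For even $N$, a cyclic word of length $N$ with no two adjacent $A$'s contains at least $N/2$ letters $B$. This shows $n_B\ge N/2$, so no power below $u^N$ appears. Equality $n_B=N/2$ forces a strictly alternating word, and there are exactly two of them: $B$ on all even sites or $B$ on all odd sites. By cyclicity, both traces equal $\operatorname{tr}[(BA)^{N/2}]$ (or $\operatorname{tr}[(AB)^{N/2}]$, which is the same).

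Next I evaluate this trace. We have $BAe_1=B(e_2+e_3)=e_1+e_2$ and $BAe_2=BAe_3=0$. So $BA$ has the single nonzero column $e_1+e_2$, and its $1,1$ entry is $1$. Hence $(BA)^m e_1=e_1+e_2$ for every $m\ge1$, and $\operatorname{tr}[(BA)^{N/2}]=1$.

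The two alternating words then contribute $\prod_{j\ \mathrm{even}}h_j^2=P_{\mathrm e}^2$ and $\prod_{j\ \mathrm{odd}}h_j^2=P_{\mathrm o}^2$, each multiplying $u^N$. This yields the coefficient $P_{\mathrm o}^2+P_{\mathrm e}^2$ stated in Eq.~\eqref{eq:supp-psi0-value}.

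The only delicate point is the counting step: it must use the cyclic adjacency (including the pair $\epsilon_1,\epsilon_N$), which is exactly what the trace provides. It must also exclude any other minimal configurations, which the argument does, since $N/2$ non-adjacent $A$'s on a cycle of even length $N$ must alternate. When some $h_j$ vanish, the statement is read as the coefficient of $u^N$ in the polynomial, and this coefficient may then be zero.
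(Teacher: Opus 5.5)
Your proof is correct and is essentially the paper's argument in algebraic form. The paper reads \(\mathsf S_2(h_j^2u^2)\) as a weighted walk on the states \(12,13,23\), decomposes closed walks into two-cycles and three-cycles to obtain the power \(u^{N+n_3}\), and picks out the two alternating walks; this is what your split \(\mathsf S_2=A+xB\) with \(A^2=0\) (no two cyclically adjacent unweighted steps) encodes. Your direct evaluation \(\operatorname{tr}[(BA)^{N/2}]=\operatorname{tr}[(AB)^{N/2}]=1\) replaces the paper's count of one closed alternating walk from each of \(12\) and \(13\), with \(23\) not contributing at this order.
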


\begin{proof}
Read \(\mathsf S_2(h_j^2u^2)\) as the step-\(j\) transfer matrix of a
directed graph on the three states \(12\), \(13\), \(23\), with the
moves \(12\to13\) and \(13\to23\) of weight \(h_j^2u^2\) and the moves
\(13\to12\) and \(23\to12\) of weight \(1\).  A diagonal entry of the
ordered product in Eq.~\eqref{eq:supp-twist-Phi2} is the sum over closed
walks of length \(N\), the step taken at position \(j\) carrying
\(h_j^2u^2\) whenever it is one of the two weighted moves.  The graph
has two simple cycles, the two-cycle \(12\to13\to12\), which uses one
weighted move in two steps, and the three-cycle
\(12\to13\to23\to12\), which uses two weighted moves in three steps.  A
closed walk built from \(n_2\) two-cycles and \(n_3\) three-cycles has
\(2n_2+3n_3=N\) and carries \(u^{\,N+n_3}\), so for even \(N\) the
lowest power \(u^{N}\) requires \(n_3=0\), that is the alternating walk
\(12\to13\to12\to\cdots\).  There are exactly two such walks, one
starting at \(12\) and one starting at \(13\), and they take the
weighted move on the two complementary sublattices of positions, so
they contribute \(\prod_{j\ \mathrm{odd}}h_j^2\) and
\(\prod_{j\ \mathrm{even}}h_j^2\).  The state \(23\) lies only on the
three-cycle, so its diagonal entry starts at a higher power.  Summing
the three diagonal entries gives Eq.~\eqref{eq:supp-psi0-value}.
\end{proof}

For even \(N\) the lowest power of \(\Lambda^{[2]}_\chi\) is
\(u^{N/2}\) (this is shown below), so the lowest term on the left of
Eq.~\eqref{eq:supp-twist-eigen2} is \((-1)^{N/2}\kappa_\chi^2u^{N}\),
where \(\kappa_\chi\) is the coefficient of \(u^{N/2}\).  On the right,
\(\Phi^{\mathrm F}_{2,N}\) contributes
\((P_{\mathrm o}^2+P_{\mathrm e}^2)u^N\) by
Lemma~\ref{lem:supp-psi0} and the feedback term contributes
\(2P_{\mathrm o}P_{\mathrm e}\chi u^N\) because
\(\Lambda^{[1]}_\chi(0)=1\).  Hence
\begin{equation}
(-1)^{N/2}\kappa_\chi^2=\big(P_{\mathrm o}+\chi P_{\mathrm e}\big)^2 .
\label{eq:supp-twist-kappa}
\end{equation}
This holds already as an operator identity, the coefficient of
\(u^{N/2}\) in \(t^{[2]}_{\mathrm F}(u)\) squaring to
\((-1)^{N/2}(P_{\mathrm o}+P_{\mathrm e}\mathcal X_N)^2\), which we have
checked directly at \(N=6\) and \(N=8\) for random couplings.  For the
family used in Fig.~\ref{fig:main}, \(h_1=h\) and
\(h_2=\cdots=h_N=1\) with \(N\equiv0\pmod4\), it gives
\(P_{\mathrm o}=h\), \(P_{\mathrm e}=1\) and
\(\kappa_{+1}=\pm(1+h)\).  The zero-root solutions fix the sign: at
\(h=0\) the ground state has \(\kappa=+1\) with all fused roots in one
half plane and the lowest excitation has \(\kappa=-1\) with the root
closest to the origin reflected, and since \(\kappa^2=(1+h)^2\) never
vanishes for \(0\leq h\leq1\) the sign is preserved along each state,
which gives the values \(\kappa=1,-1,2,-2\) quoted in the caption of
Fig.~\ref{fig:main} (confirmed by exact diagonalization at \(N=8\) for
\(h=0,\tfrac12,1\)).  For \(\chi=-1\) and
\(P_{\mathrm o}=P_{\mathrm e}\) the coefficient vanishes and
\(\Lambda^{[2]}_{-1}\) starts at a higher power, which is visible in
Table~\ref{tab:N12-complete-zero-root}.

Two limits are visible directly in Eqs.~\eqref{eq:supp-twist-eigen1} and
\eqref{eq:supp-twist-eigen2}.  When \(\prod_jh_j=0\) the feedback term
vanishes, the second relation no longer involves \(\chi\) or
\(\Lambda^{[1]}_\chi\), and the two relations are solved one after the
other: the fused eigenvalue is fixed by
\(\Lambda^{[2]}(u)\Lambda^{[2]}(-u)=\Phi^{\mathrm F}_{2,N}(u^2)\) up to
the sign choices of its roots, and the fundamental eigenvalue is then
free-fermion like within each such choice.  This is the sector-wise free
structure of the one-wrap chain, and the two \(\chi\) sectors are
degenerate there.  When \(\prod_jh_j\neq0\) the feedback couples the two
levels, the signs of the fundamental roots are no longer free, and the
spectrum is that of an interacting chain.  

\subsection{Zero-root parametrization}
\label{subsec:supp-ffd-zero-roots}

The polynomial structure of the two eigenvalues is read off from the
auxiliary paths of the two Lax operators, and it does not depend on the
values of the couplings.  In \(L^{[1]}_{\mathrm F,j}\) of
Eq.~\eqref{eq:L-FFD} the only entry carrying \(u\) is the return
\(3\to1\), and the auxiliary state can only move along
\(1\to2\to3\to1\) or stay at \(1\).  A closed path of length \(N\)
therefore consists of \(m\) cycles of length three and \(N-3m\) rests,
and carries \(u^{m}\) together with \(m\) of the couplings, so
\(t^{[1]}_{\mathrm F}(u)\) is a polynomial of degree
\begin{equation}
d_N=\lfloor N/3\rfloor,
\qquad t^{[1]}_{\mathrm F}(0)=\1 .
\label{eq:supp-twist-degree1}
\end{equation}
In \(L^{[2]}_{\mathrm F,j}\) of Eq.~\eqref{eq:FFD-fused-Lax} the moves
are \(s_{12}\to s_{13}\) with weight \(\ii h_ju\sigma^y_j\),
\(s_{13}\to s_{12}\) with weight \(\sigma^z_j\), \(s_{13}\to s_{23}\)
with weight \(\ii h_ju\sigma^y_j\), and \(s_{23}\to s_{12}\) with weight
\(\1_j\).  The closed paths are built from the two-cycle
\(s_{12}\to s_{13}\to s_{12}\), which carries one power of \(u\), and
the three-cycle \(s_{12}\to s_{13}\to s_{23}\to s_{12}\), which carries
two.  A path with \(n_2\) two-cycles and \(n_3\) three-cycles has
\(2n_2+3n_3=N\) and carries \(u^{\,n_2+2n_3}=u^{(N+n_3)/2}\), so the
powers present in \(t^{[2]}_{\mathrm F}(u)\) run from the smallest
admissible \(n_3\), which is \(0\) for even \(N\) and \(1\) for odd
\(N\), to the largest,
\begin{equation}
s_N=\lfloor(N+1)/2\rfloor\;\leq\;\text{power}\;\leq\;f_N=\lfloor2N/3\rfloor .
\label{eq:supp-twist-degree2}
\end{equation}
These are operator statements, so every joint eigenvalue is of the form
\begin{equation}
\Lambda^{[1]}_\chi(u)=\prod_{j=1}^{d_N}\Big(1-\frac{u}{\lambda_j}\Big),
\qquad
\Lambda^{[2]}_\chi(u)=\kappa_\chi\,u^{s_N}\prod_{\alpha=1}^{n_N}\Big(1-\frac{u}{\gamma_\alpha}\Big),
\qquad
n_N=f_N-s_N ,
\label{eq:supp-twist-roots}
\end{equation}
which is the parametrization of Eqs.~\eqref{eq:FFD-fundamental-roots}
and \eqref{eq:FFD-fused-roots} of the Letter, with the understanding that
a vanishing top coefficient is a root at infinity.  Equation
\eqref{eq:supp-twist-roots} is understood as the regular root chart with
\(\kappa_\chi\neq0\): if the coefficient of \(u^{s_N}\) vanishes, we take
the first nonzero power \(s_\chi>s_N\), replace \(n_N\) by
\(f_N-s_\chi\), and use its nonzero coefficient as \(\kappa_\chi\), with
the identically vanishing eigenvalue treated separately and the same
convention applied to the FP parametrization below.
Since
\(t^{[1]}_{\mathrm F}(u)=\1-uH_{\mathrm F}+O(u^2)\), the energy is
\(E=\sum_{j}\lambda_j^{-1}\), Eq.~\eqref{eq:FFD-energy}.

On the regular chart, the unknowns are the \(d_N\) fundamental roots, the \(n_N\) fused roots,
and \(\kappa_\chi\), that is \(r_N=d_N+n_N+1\) numbers in each sector.
Comparing the coefficients of \(u\) in the two relations, both of which
are even in \(u\), Eq.~\eqref{eq:supp-twist-eigen1} gives \(d_N\)
equations from \(u^{2},\ldots,u^{2d_N}\) and
Eq.~\eqref{eq:supp-twist-eigen2} gives \(n_N+1\) equations from
\(u^{2s_N},\ldots,u^{2f_N}\), all other powers vanishing identically by
Eqs.~\eqref{eq:supp-twist-degree1} and \eqref{eq:supp-twist-degree2}.
The number of equations therefore equals the number of unknowns,
\(r_N\) in each sector, for every choice of the couplings
\(\{h_j\}\).
Table~\ref{tab:N8-twisted-zero-root} lists all solutions for \(N=8\) at
\(h_1=0.7\), \(h_2=0.5\), \(h_3=\cdots=h_8=1\), and
Table~\ref{tab:N12-complete-zero-root} all solutions for \(N=12\) at the
periodic point \(h_1=\cdots=h_{12}=1\), the size class used in
Fig.~\ref{fig:main}.  In both cases they coincide with the joint
eigenvalues of
\((t^{[1]}_{\mathrm F},t^{[2]}_{\mathrm F},\mathcal X_N)\) from exact
diagonalization, every branch being highly degenerate.

\begin{table*}[p]
\caption{Complete \(N=8\) zero-root branches of the inhomogeneous FFD
chain with \(h_1=0.7\), \(h_2=0.5\) and \(h_3=\cdots=h_8=1\).  The
fundamental eigenvalue is \(\Lambda^{[1]}_\chi(u)=\prod_{j=1}^{2}(1-u/\lambda_j)\)
and the fused eigenvalue is
\(\Lambda^{[2]}_\chi(u)=\kappa_\chi u^{4}(1-u/\gamma_1)\), with
\(\kappa_\chi=\pm(h_1+h_2\chi)\) by Eq.~\eqref{eq:supp-twist-kappa}.
Consecutive rows with \(\pm\gamma_1\) are the partners
\(\Lambda^{[2]}_\chi(u)\leftrightarrow\Lambda^{[2]}_\chi(-u)\) at equal
energy.  The energies and degeneracies of all rows agree with exact
diagonalization, and the degeneracies sum to \(256=2^{8}\).}
\label{tab:N8-twisted-zero-root}
\centering
\begingroup
\setlength{\tabcolsep}{3pt}
\renewcommand{\arraystretch}{0.86}
\begin{tabular}{c c c c c c c}
\hline
level & \(\chi\) & \(E\) & degeneracy & \(\{\lambda_j\}\) &
\multicolumn{1}{c}{\(\kappa_\chi\)} & \multicolumn{1}{c}{\(\{\gamma_\alpha/\ii\}\)}\\
\hline
1 & \(+1\) & \(-3.6411081\) & 8 & \(\{-0.6309969,-0.4863070\}\) & \(+1.2000000\) & \(\{-0.5508191\}\)\\
2 & \(+1\) & \(-3.6411081\) & 8 & \(\{-0.6309969,-0.4863070\}\) & \(+1.2000000\) & \(\{+0.5508191\}\)\\
3 & \(-1\) & \(-3.5513332\) & 8 & \(\{-0.7634606,-0.4461282\}\) & \(+0.2000000\) & \(\{+0.3124194\}\)\\
4 & \(-1\) & \(-3.5513332\) & 8 & \(\{-0.7634606,-0.4461282\}\) & \(+0.2000000\) & \(\{-0.3124194\}\)\\
5 & \(-1\) & \(-3.5118161\) & 8 & \(\{-0.8194147,-0.4364082\}\) & \(-0.2000000\) & \(\{-0.2807445\}\)\\
6 & \(-1\) & \(-3.5118161\) & 8 & \(\{-0.8194147,-0.4364082\}\) & \(-0.2000000\) & \(\{+0.2807445\}\)\\
7 & \(+1\) & \(-3.4007257\) & 8 & \(\{-0.9916370,-0.4180091\}\) & \(-1.2000000\) & \(\{+0.5887925\}\)\\
8 & \(+1\) & \(-3.4007257\) & 8 & \(\{-0.9916370,-0.4180091\}\) & \(-1.2000000\) & \(\{-0.5887925\}\)\\
9 & \(+1\) & \(-1.3838587\) & 8 & \(\{-0.4180091,+0.9916370\}\) & \(-1.2000000\) & \(\{-1.3619906\}\)\\
10 & \(+1\) & \(-1.3838587\) & 8 & \(\{-0.4180091,+0.9916370\}\) & \(-1.2000000\) & \(\{+1.3619906\}\)\\
11 & \(-1\) & \(-1.0710497\) & 8 & \(\{-0.4364082,+0.8194147\}\) & \(-0.2000000\) & \(\{-0.0951034\}\)\\
12 & \(-1\) & \(-1.0710497\) & 8 & \(\{-0.4364082,+0.8194147\}\) & \(-0.2000000\) & \(\{+0.0951034\}\)\\
13 & \(-1\) & \(-0.9316827\) & 8 & \(\{-0.4461282,+0.7634606\}\) & \(+0.2000000\) & \(\{+0.0940701\}\)\\
14 & \(-1\) & \(-0.9316827\) & 8 & \(\{-0.4461282,+0.7634606\}\) & \(+0.2000000\) & \(\{-0.0940701\}\)\\
15 & \(+1\) & \(-0.4715206\) & 8 & \(\{-0.4863070,+0.6309969\}\) & \(+1.2000000\) & \(\{+2.7989135\}\)\\
16 & \(+1\) & \(-0.4715206\) & 8 & \(\{-0.4863070,+0.6309969\}\) & \(+1.2000000\) & \(\{-2.7989135\}\)\\
17 & \(+1\) & \(+0.4715206\) & 8 & \(\{-0.6309969,+0.4863070\}\) & \(+1.2000000\) & \(\{-2.7989135\}\)\\
18 & \(+1\) & \(+0.4715206\) & 8 & \(\{-0.6309969,+0.4863070\}\) & \(+1.2000000\) & \(\{+2.7989135\}\)\\
19 & \(-1\) & \(+0.9316827\) & 8 & \(\{-0.7634606,+0.4461282\}\) & \(+0.2000000\) & \(\{-0.0940701\}\)\\
20 & \(-1\) & \(+0.9316827\) & 8 & \(\{-0.7634606,+0.4461282\}\) & \(+0.2000000\) & \(\{+0.0940701\}\)\\
21 & \(-1\) & \(+1.0710497\) & 8 & \(\{-0.8194147,+0.4364082\}\) & \(-0.2000000\) & \(\{-0.0951034\}\)\\
22 & \(-1\) & \(+1.0710497\) & 8 & \(\{-0.8194147,+0.4364082\}\) & \(-0.2000000\) & \(\{+0.0951034\}\)\\
23 & \(+1\) & \(+1.3838587\) & 8 & \(\{-0.9916370,+0.4180091\}\) & \(-1.2000000\) & \(\{+1.3619906\}\)\\
24 & \(+1\) & \(+1.3838587\) & 8 & \(\{-0.9916370,+0.4180091\}\) & \(-1.2000000\) & \(\{-1.3619906\}\)\\
25 & \(+1\) & \(+3.4007257\) & 8 & \(\{+0.4180091,+0.9916370\}\) & \(-1.2000000\) & \(\{+0.5887925\}\)\\
26 & \(+1\) & \(+3.4007257\) & 8 & \(\{+0.4180091,+0.9916370\}\) & \(-1.2000000\) & \(\{-0.5887925\}\)\\
27 & \(-1\) & \(+3.5118161\) & 8 & \(\{+0.4364082,+0.8194147\}\) & \(-0.2000000\) & \(\{+0.2807445\}\)\\
28 & \(-1\) & \(+3.5118161\) & 8 & \(\{+0.4364082,+0.8194147\}\) & \(-0.2000000\) & \(\{-0.2807445\}\)\\
29 & \(-1\) & \(+3.5513332\) & 8 & \(\{+0.4461282,+0.7634606\}\) & \(+0.2000000\) & \(\{+0.3124194\}\)\\
30 & \(-1\) & \(+3.5513332\) & 8 & \(\{+0.4461282,+0.7634606\}\) & \(+0.2000000\) & \(\{-0.3124194\}\)\\
31 & \(+1\) & \(+3.6411081\) & 8 & \(\{+0.4863070,+0.6309969\}\) & \(+1.2000000\) & \(\{+0.5508191\}\)\\
32 & \(+1\) & \(+3.6411081\) & 8 & \(\{+0.4863070,+0.6309969\}\) & \(+1.2000000\) & \(\{-0.5508191\}\)\\
\hline
\end{tabular}
\endgroup
\end{table*}

\begin{table*}[p]
\caption{Complete zero-root branches of the periodic chain
\(h_1=\cdots=h_{12}=1\) at \(N=12\), where \(d_{12}=4\), \(s_{12}=6\),
\(f_{12}=8\) and \(n_{12}=2\).  The fundamental eigenvalue is
\(\Lambda^{[1]}_\chi(u)=\prod_{j=1}^{4}(1-u/\lambda_j)\) and the fused
eigenvalue is \(\Lambda^{[2]}_\chi(u)=\kappa_\chi u^{s}\prod_\alpha(1-u/\gamma_\alpha)\),
the last column listing \(\gamma_\alpha/\ii\).  For \(\chi=+1\),
Eq.~\eqref{eq:supp-twist-kappa} gives \(\kappa_{+1}=\pm2\), \(s=6\), and
two fused roots.  For \(\chi=-1\) it gives \(\kappa_{-1}=0\), so
\(\Lambda^{[2]}_{-1}\) starts at \(u^{7}\), one fused root remains, and
the listed \(\kappa_\chi\) is the coefficient of \(u^{7}\).  The energies
and degeneracies of all \(45\) rows agree with exact diagonalization,
and the degeneracies sum to \(4096=2^{12}\).}
\label{tab:N12-complete-zero-root}
\centering
\begingroup
\setlength{\tabcolsep}{1.7pt}
\renewcommand{\arraystretch}{0.86}
\begin{tabular}{c c c c c c c}
\hline
level & \(\chi\) & \(E\) & degeneracy & \(\{\lambda_j\}\) &
\multicolumn{1}{c}{\(\kappa_\chi\)} & \multicolumn{1}{c}{\(\{\gamma_\alpha/\ii\}\)}\\
\hline
1 & \(+1\) & \(-6.1911471\) & 32 & \(\{-0.3879073,-0.5176381,-0.8593118,-1.9318517\}\) & \(-2.0000000\) & \(\{-0.2839679,-2.3476836\}\)\\
2 & \(+1\) & \(-6.1527560\) & 32 & \(\{-0.4283730,-0.4283730,-1.3477747,-1.3477747\}\) & \(+2.0000000\) & \(\{+0.2853547,-2.3362734\}\)\\
3 & \(-1\) & \(-5.9309391\) & 192 & \(\{-0.3941642,-0.4974415,-0.8174179,-6.2393120\}\) & \(+0.9077234\ii\) & \(\{+0.9077234\}\)\\
4 & \(+1\) & \(-5.6519100\) & 96 & \(\{+6.5597572,-0.3856742,-0.5476586,-0.7217417\}\) & \(-2.0000000\) & \(\{+6.6221502,-0.3020167\}\)\\
5 & \(+1\) & \(-5.5647668\) & 96 & \(\{+5.8994147,-0.4092293,-0.4573467,-0.9056883\}\) & \(+2.0000000\) & \(\{+6.5477730,+0.3054474\}\)\\
6 & \(-1\) & \(-5.0282739\) & 64 & \(\{+1.7482723,-0.3977508,-0.4793565,-1.0000000\}\) & \(+3.2736009\ii\) & \(\{-1.0912003\}\)\\
7 & \(-1\) & \(-4.1722604\) & 64 & \(\{+1.0000000,-0.3977508,-0.4793565,-1.7482723\}\) & \(+4.3118724\ii\) & \(\{-1.4372908\}\)\\
8 & \(+1\) & \(-3.6955181\) & 96 & \(\{+0.9056883,-0.4092293,-0.4573467,-5.8994147\}\) & \(+2.0000000\) & \(\{+5.0497516,+0.3960591\}\)\\
9 & \(+1\) & \(-3.1857247\) & 96 & \(\{+0.7217417,-0.3856742,-0.5476586,-6.5597572\}\) & \(-2.0000000\) & \(\{+4.6868698,-0.4267240\}\)\\
10 & \(+1\) & \(-3.1849009\) & 32 & \(\{+1.3477747,+1.3477746,-0.4283730,-0.4283730\}\) & \(+2.0000000\) & \(\{+0.4430585,-1.5046921\}\)\\
11 & \(-1\) & \(-3.1636619\) & 192 & \(\{+6.2393120,+0.8174179,-0.3941642,-0.4974415\}\) & \(+5.0981608\ii\) & \(\{+5.0981608\}\)\\
12 & \(+1\) & \(-2.8284271\) & 32 & \(\{+1.9318517,+0.8593118,-0.3879073,-0.5176381\}\) & \(-2.0000000\) & \(\{-0.4714045,-1.4142136\}\)\\
13 & \(+1\) & \(-2.3048893\) & 96 & \(\{+0.5476586,-0.3856742,-0.7217417,-6.5597572\}\) & \(-2.0000000\) & \(\{+4.1325899,-0.4839580\}\)\\
14 & \(-1\) & \(-2.0000000\) & 64 & \(\{+0.4793565,-0.3977508,-1.0000000,-1.7482723\}\) & \(+5.6568543\ii\) & \(\{-1.8856181\}\)\\
15 & \(-1\) & \(-1.5898173\) & 192 & \(\{+6.2393120,+0.4974415,-0.3941642,-0.8174179\}\) & \(+5.7855407\ii\) & \(\{+5.7855407\}\)\\
16 & \(+1\) & \(-1.5307337\) & 96 & \(\{+0.4573467,-0.4092293,-0.9056883,-5.8994147\}\) & \(+2.0000000\) & \(\{+3.7494830,+0.5334069\}\)\\
17 & \(+1\) & \(-1.2921676\) & 32 & \(\{+1.9318517,+0.5176381,-0.3879073,-0.8593118\}\) & \(-2.0000000\) & \(\{-0.6290599,-1.0597825\}\)\\
18 & \(-1\) & \(-1.1439865\) & 64 & \(\{+0.3977508,-0.4793565,-1.0000000,-1.7482723\}\) & \(+5.8899315\ii\) & \(\{-1.9633105\}\)\\
19 & \(+1\) & \(-1.0165483\) & 96 & \(\{+0.4092293,-0.4573467,-0.9056883,-5.8994147\}\) & \(+2.0000000\) & \(\{+3.5663535,+0.5607969\}\)\\
20 & \(+1\) & \(-0.7710746\) & 96 & \(\{+0.3856742,-0.5476586,-0.7217417,-6.5597572\}\) & \(-2.0000000\) & \(\{+3.5026445,-0.5709971\}\)\\
21 & \(-1\) & \(-0.5363635\) & 192 & \(\{+6.2393120,+0.3941642,-0.4974415,-0.8174179\}\) & \(+5.9759781\ii\) & \(\{+5.9759781\}\)\\
22 & \(+1\) & \(+0.0000000\) & 32 & \(\{+1.9318517,+0.3879073,-0.5176381,-0.8593118\}\) & \(-2.0000000\) & \(\{-0.8164968,-0.8164964\}\)\\
23 & \(+1\) & \(+0.0000000\) & 128 & \(\{+1.3477747,+0.4283730,-0.4283730,-1.3477747\}\) & \(+2.0000000\) & \(\{+0.8164966,-0.8164966\}\)\\
24 & \(+1\) & \(+0.0000000\) & 32 & \(\{+0.8593118,+0.5176381,-0.3879073,-1.9318517\}\) & \(-2.0000000\) & \(\{+0.8164967,+0.8164965\}\)\\
25 & \(-1\) & \(+0.5363635\) & 192 & \(\{+0.8174179,+0.4974415,-0.3941642,-6.2393120\}\) & \(-5.9759781\ii\) & \(\{-5.9759781\}\)\\
26 & \(+1\) & \(+0.7710746\) & 96 & \(\{+6.5597572,+0.7217417,+0.5476586,-0.3856742\}\) & \(-2.0000000\) & \(\{-3.5026445,+0.5709971\}\)\\
27 & \(+1\) & \(+1.0165483\) & 96 & \(\{+5.8994147,+0.9056883,+0.4573467,-0.4092293\}\) & \(+2.0000000\) & \(\{-3.5663535,-0.5607969\}\)\\
28 & \(-1\) & \(+1.1439865\) & 64 & \(\{+1.7482723,+1.0000000,+0.4793565,-0.3977508\}\) & \(-5.8899315\ii\) & \(\{+1.9633105\}\)\\
29 & \(+1\) & \(+1.2921676\) & 32 & \(\{+0.8593118,+0.3879073,-0.5176381,-1.9318517\}\) & \(-2.0000000\) & \(\{+0.6290599,+1.0597825\}\)\\
30 & \(+1\) & \(+1.5307337\) & 96 & \(\{+5.8994147,+0.9056883,+0.4092293,-0.4573467\}\) & \(+2.0000000\) & \(\{-3.7494830,-0.5334069\}\)\\
31 & \(-1\) & \(+1.5898173\) & 192 & \(\{+0.8174179,+0.3941642,-0.4974415,-6.2393120\}\) & \(-5.7855407\ii\) & \(\{-5.7855407\}\)\\
32 & \(-1\) & \(+2.0000000\) & 64 & \(\{+1.7482723,+1.0000000,+0.3977508,-0.4793565\}\) & \(-5.6568543\ii\) & \(\{+1.8856181\}\)\\
33 & \(+1\) & \(+2.3048893\) & 96 & \(\{+6.5597572,+0.7217417,+0.3856742,-0.5476586\}\) & \(-2.0000000\) & \(\{-4.1325899,+0.4839580\}\)\\
34 & \(+1\) & \(+2.8284271\) & 32 & \(\{+0.5176381,+0.3879073,-0.8593118,-1.9318517\}\) & \(-2.0000000\) & \(\{+0.4714045,+1.4142136\}\)\\
35 & \(-1\) & \(+3.1636619\) & 192 & \(\{+0.4974415,+0.3941642,-0.8174179,-6.2393120\}\) & \(-5.0981608\ii\) & \(\{-5.0981608\}\)\\
36 & \(+1\) & \(+3.1849009\) & 32 & \(\{+0.4283730,+0.4283730,-1.3477746,-1.3477748\}\) & \(+2.0000000\) & \(\{-0.4430585,+1.5046921\}\)\\
37 & \(+1\) & \(+3.1857247\) & 96 & \(\{+6.5597572,+0.5476586,+0.3856742,-0.7217417\}\) & \(-2.0000000\) & \(\{-4.6868698,+0.4267240\}\)\\
38 & \(+1\) & \(+3.6955181\) & 96 & \(\{+5.8994147,+0.4573467,+0.4092293,-0.9056883\}\) & \(+2.0000000\) & \(\{-5.0497516,-0.3960591\}\)\\
39 & \(-1\) & \(+4.1722604\) & 64 & \(\{+1.7482723,+0.4793565,+0.3977508,-1.0000000\}\) & \(-4.3118724\ii\) & \(\{+1.4372908\}\)\\
40 & \(-1\) & \(+5.0282739\) & 64 & \(\{+1.0000000,+0.4793565,+0.3977508,-1.7482723\}\) & \(-3.2736009\ii\) & \(\{+1.0912003\}\)\\
41 & \(+1\) & \(+5.5647668\) & 96 & \(\{+0.9056883,+0.4573467,+0.4092293,-5.8994147\}\) & \(+2.0000000\) & \(\{-6.5477730,-0.3054474\}\)\\
42 & \(+1\) & \(+5.6519100\) & 96 & \(\{+0.7217417,+0.5476586,+0.3856742,-6.5597572\}\) & \(-2.0000000\) & \(\{-6.6221502,+0.3020167\}\)\\
43 & \(-1\) & \(+5.9309391\) & 192 & \(\{+6.2393120,+0.8174179,+0.4974415,+0.3941642\}\) & \(-0.9077234\ii\) & \(\{-0.9077234\}\)\\
44 & \(+1\) & \(+6.1527560\) & 32 & \(\{+1.3477747,+1.3477747,+0.4283730,+0.4283730\}\) & \(+2.0000000\) & \(\{-0.2853547,+2.3362734\}\)\\
45 & \(+1\) & \(+6.1911471\) & 32 & \(\{+1.9318517,+0.8593118,+0.5176381,+0.3879073\}\) & \(-2.0000000\) & \(\{+0.2839679,+2.3476836\}\)\\
\hline
\end{tabular}
\endgroup
\end{table*}

\section{Cubic gap of the FFD chain}
\label{sec:supp-cubic-gap}

\subsection{Exact cubic gap of the one-wrap closure}
\label{subsec:supp-gap-onewrap}

An HFF inversion relation \(\Lambda(u)\Lambda(-u)=P_N(u^2)\) reduces the
spectral problem to the zeros of a recurrence polynomial.  If
\(P_N(u^2)=\prod_k(1-u^2/u_k^2)\), the inverse roots give the
single-particle energies and hence determine the many-body spectrum and
partition function.  Earlier HFF studies extracted these roots for
particular models and coupling patterns
~\cite{Fendley2019,AlcarazPimenta2020}.  Polynomial sequences generated
by cubic denominators are also a standard subject in the theory of
recursive polynomials~\cite{BerahaKahaneWeiss1975,Tran2014,TranZumba2018}.
Their zeros can be obtained by expressing the ratio of two conjugate
characteristic roots as \(e^{2\ii\theta}\), which turns the zero
condition into a trigonometric quantization equation
~\cite{Tran2014,TranZumba2018}.  We use this construction as a systematic
route from an HFF inversion relation to finite-size energies and
thermodynamics.

We demonstrate the method for the one-wrap FFD chain
\(h_1=0\), \(h_2=\cdots=h_N=1\).  The feedback term in
Eq.~\eqref{eq:supp-twist-eigen2} then vanishes and the fused eigenvalue
obeys a scalar inversion relation.  Its zeros label the nested branches,
while the first-level zeros fix the energies within each branch.  We
first take the simplest size class \(N=12L\), for which the root counting
is uniform and all parity phases disappear.  Arbitrary chain lengths
are treated in Sec.~\ref{subsec:supp-gap-alllength}.

Put \(t=u^2\).  At \(h_1=0\) the matrix \(\mathsf S_2(0)\) has a single
nonzero column, which collapses the ordered trace of
Eq.~\eqref{eq:FFD-scalar-functions} onto one diagonal entry.  At the
first level direct multiplication gives a two-term expression,
\begin{equation}
 \phi^{\mathrm w}_N(t)
 =\big[\mathsf S_1(t)^N\big]_{11}-t\big[\mathsf S_1(t)^{N-3}\big]_{11},
 \qquad
 \psi^{\mathrm w}_N(t)=\big[\mathsf S_2(t)^N\big]_{11},
 \label{eq:supp-1w-entries}
\end{equation}
where \(\phi^{\mathrm w}_N=\Phi^{\mathrm F}_{1,N}|_{h_1=0}\) and
\(\psi^{\mathrm w}_N=\Phi^{\mathrm F}_{2,N}|_{h_1=0}\).

Because \(\mathsf S_1(t)\) and \(\mathsf S_2(t)\) are three-dimensional,
the Cayley--Hamilton identity gives third-order recursions in the chain
length,
\begin{equation}
 \phi_N^{\mathrm w}=\phi_{N-1}^{\mathrm w}-t\phi_{N-3}^{\mathrm w},
 \qquad
 \psi_N^{\mathrm w}=t\psi_{N-2}^{\mathrm w}
 +t^2\psi_{N-3}^{\mathrm w}.
 \label{eq:supp-1w-recurrences}
\end{equation}
The corresponding characteristic polynomials are, by definition,
\begin{equation}
 \chi_1(\rho;t)=\det[\rho\mathbf1-\mathsf S_1(t)]
 =\rho^3-\rho^2+t,
 \qquad
 \chi_2(\rho;t)=\det[\rho\mathbf1-\mathsf S_2(t)]
 =\rho^3-t\rho-t^2.
 \label{eq:supp-1w-characteristic}
\end{equation}
Their three roots \(\rho_\ell(t)\) are the eigenvalues of the scalar
matrix.  They generate the dependence on \(N\), since every matrix
element of \(\mathsf S_a(t)^N\) is a linear combination of
\(\rho_\ell(t)^N\).  The physical zeros in the \(t\) plane arise when
these three contributions cancel.

Equivalently, Eq.~\eqref{eq:supp-1w-recurrences} gives the generating
functions
\begin{equation}
 \sum_{N\geq0}\phi^{\mathrm w}_N(t)\xi^N
 =\frac{2-\xi}{1-\xi+t\xi^3},
 \qquad
 \sum_{N\geq0}\psi^{\mathrm w}_N(t)\xi^N
 =\frac{1}{1-t\xi^2-t^2\xi^3}.
 \label{eq:supp-1w-generating}
\end{equation}
We now implement this construction for the fused polynomial
\(\psi_N^{\mathrm w}\).  Expanding the second generating function shows
that the powers of \(t\) present in \(\psi^{\mathrm w}_N\) run from \(t^{N/2}\) to \(t^{2N/3}\) with both
extreme coefficients equal to one, so for \(N=12L\)
\begin{equation}
 \psi^{\mathrm w}_N(t)=t^{6L}\,\widehat\psi_N(t),
 \qquad
 \deg\widehat\psi_N=2L,
 \qquad
 \widehat\psi_N(0)=1,
 \label{eq:supp-1w-psi-reduced}
\end{equation}
in accordance with the degree window of
Eq.~\eqref{eq:supp-twist-degree2}.

Choose the conjugate characteristic roots so that
\(\rho_+(t)/\rho_-(t)=e^{2\ii\theta}\).  In the present normalization,
put
\begin{equation}
 q=2\cos\theta,
 \qquad
 t(\theta)=\frac{q^2}{(q^2-1)^3}
 =-\gamma(\theta)^2,
 \qquad
 \gamma(\theta)=\frac{q}{(1-q^2)^{3/2}},
 \label{eq:supp-1w-angle-map}
\end{equation}
where \(\pi/3<\theta<\pi/2\) and \(\gamma\) decreases strictly from
\(+\infty\) to \(0\).  We also write \(x=\pi/2-\theta\), so that
\(0<x<\pi/6\) and
\(\gamma(x)=2\sin x/(1-4\sin^2x)^{3/2}\) increases from \(0\) to
\(\infty\).  The fused roots of Eq.~\eqref{eq:FFD-fused-roots} are
written as \(\gamma_\alpha=\ii\gamma_{a,N}\), with
\(\gamma_{a,N}>0\).  Partial fractions of the second
generating function give the exact identity
\begin{equation}
 \psi^{\mathrm w}_N(-\gamma^2)
 =\left[\frac{q}{(1-q^2)^2}\right]^{N}
 \Big\{a_0q^N+2(-1)^N\operatorname{Re}\big[a_+e^{\ii N\theta}\big]\Big\},
 \quad
 a_0=\frac{q^2}{1+2q^2},
 \quad
 a_+=\frac{e^{2\ii\theta}}{-2\sin^2\theta+3\ii\sin2\theta}.
 \label{eq:supp-1w-psi-angle}
\end{equation}
Writing \(a_+=|a_+|e^{\ii\delta(\theta)}\) with
\(|a_+|=[2\sin\theta\sqrt{1+2q^2}\,]^{-1}\) and the branch fixed by
\(\delta(\pi/2)=0\), the positive coordinates \(\gamma_{a,N}\) of the
fused roots are exactly the solutions of
\begin{equation}
 \cos\!\big[N\theta+\delta(\theta)\big]
 +(-1)^N\frac{q^{N+2}\sin\theta}{\sqrt{1+2q^2}}=0 .
 \label{eq:supp-1w-quantization}
\end{equation}
The second term has a fixed sign and modulus strictly below one
throughout the interval, and it is exponentially small away from
\(\theta=\pi/3\).

\begin{lemma}[Fused roots]
\label{lem:supp-1w-roots}
For \(N=12L\),
\begin{equation}
 \widehat\psi_N(t)=\prod_{a=1}^{2L}\big(t+\gamma_{a,N}^2\big),
 \qquad
 0<\gamma_{1,N}<\cdots<\gamma_{2L,N},
 \qquad
 \prod_{a=1}^{2L}\gamma_{a,N}=1 .
 \label{eq:supp-1w-root-product}
\end{equation}
Moreover, \(\gamma_{1,N}\) corresponds to the fused root closest to the
origin, and
\begin{equation}
 \gamma_{1,N}=\frac{\pi}{N-1}+O(N^{-3}),
 \qquad
 N\gamma_{a,N}\longrightarrow(2a-1)\pi
 \quad(a\ \text{fixed}).
 \label{eq:supp-1w-first-root}
\end{equation}
\end{lemma}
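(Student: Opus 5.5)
The plan is to read everything off the exact angle representation \eqref{eq:supp-1w-psi-angle} and the quantization condition \eqref{eq:supp-1w-quantization}. Existence of roots will come from sign changes, and the degree bound from \eqref{eq:supp-1w-psi-reduced} will force the count and the reality.

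First, the product identity. By \eqref{eq:supp-1w-psi-reduced}, $\widehat\psi_N$ has degree $2L$, constant term $\widehat\psi_N(0)=1$, and leading coefficient $1$; the latter is the coefficient of $t^{2N/3}$ in $\psi^{\mathrm w}_N$, which the generating function \eqref{eq:supp-1w-generating} shows equals one. So if I can show that all $2L$ roots are of the form $t=-\gamma_{a,N}^2$ with $\gamma_{a,N}>0$ distinct, then $\widehat\psi_N(t)=\prod_a(t+\gamma_{a,N}^2)$. Evaluating at $t=0$ gives $\prod_a\gamma_{a,N}^2=1$, and positivity gives $\prod_a\gamma_{a,N}=1$.

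Second, reality and the count. The map $\theta\mapsto\gamma(\theta)$ is a strictly monotone bijection of $(\pi/3,\pi/2)$ onto $(0,\infty)$. On this interval the prefactor $[q/(1-q^2)^2]^N$ in \eqref{eq:supp-1w-psi-angle} is nonzero, and $\gamma(\theta)>0$ makes $t^{6L}\neq0$. Hence the zeros of $\widehat\psi_N(-\gamma^2)$ for $\gamma>0$ are exactly the zeros of the left side $g(\theta)$ of \eqref{eq:supp-1w-quantization}. I would show that $g$ changes sign at least $2L$ times. Since $\widehat\psi_N$ has exactly $2L$ roots counted with multiplicity, this forces exactly $2L$ simple roots, all on the negative real $t$ axis.

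To produce the sign changes, I track the phase $\Theta(\theta)=N\theta+\delta(\theta)$. As $\theta$ runs over $(\pi/3,\pi/2)$, $N\theta$ sweeps an interval of length $N\pi/6=2L\pi$. The term $\cos\Theta$ then passes through roughly $2L$ extrema of value $\pm1$, while the perturbation $q^{N+2}\sin\theta/\sqrt{1+2q^2}$ has fixed sign and modulus strictly below one. At every point where $\Theta\in\pi\mathbb Z$ on the appropriate side, $g$ therefore has the sign of $\cos\Theta$. This yields alternating signs at consecutive multiples of $\pi$.

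The main obstacle is the end $\theta\to\pi/3$. There the perturbation is not small and $\delta$ may vary, so the last half-period has to be handled separately. I would first compute the total change $\delta(\pi/2)-\delta(\pi/3^+)$ from the explicit $a_+$. I would then check that the endpoint sign of $g$ at $\pi/3$, which is fixed by the perturbation term since $q\to1$, supplies the missing sign change. If this fails, a fallback is to count roots by the argument principle on the generating-function partial fractions instead.

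Third, the asymptotics. Put $\theta=\pi/2-x$. Expanding $a_+=e^{2\ii\theta}/(-2\sin^2\theta+3\ii\sin2\theta)$ gives $\delta(\theta)=x+O(x^3)$; the error is odd in $x$ because of the symmetry $x\to-x$ of the analytic continuation. With $N=12L$ one has $\cos(N\pi/2-(N-1)x)=\cos((N-1)x)$. The perturbation near $x=0$ is $O((2\sin x)^{N})$, which is negligible for $x=O(1/N)$. So the roots with fixed $a$ satisfy $(N-1)x_a=(2a-1)\pi/2+O(N^{-2})$. Combining this with $\gamma(x)=2\sin x/(1-4\sin^2x)^{3/2}=2x+O(x^3)$ gives
\[
\gamma_{1,N}=\frac{\pi}{N-1}+O(N^{-3}),\qquad N\gamma_{a,N}\to(2a-1)\pi .
\]
Monotonicity of $\gamma$ in $x$ orders the roots as $0<\gamma_{1,N}<\cdots<\gamma_{2L,N}$, and the smallest one gives the fused root $\ii\gamma_{1,N}$ closest to the origin.
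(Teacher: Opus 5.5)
Your proposal is correct and follows essentially the same route as the paper's proof: alternating signs of the quantization function across the $2L$ half-lobes of the phase $N\theta+\delta(\theta)$, the degree bound $\deg\widehat\psi_N=2L$ to conclude that these are all the roots and that they are simple, the unit constant and leading coefficients for $\prod_a\gamma_{a,N}=1$, and the expansions $\gamma=2x+O(x^3)$, $\delta=x+O(x^3)$ for the asymptotics. The endpoint you flag needs no separate treatment. At $\theta=\pi/3$ and $\theta=\pi/2$ one has $a_+=1/3$ and $a_+=1/2$, and the continuous branch $\delta=\arctan(3\tan x)-2x$ vanishes at both ends, so the phase sweeps exactly $[4L\pi,6L\pi]$. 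The perturbation term is at most $1/2$ even at $q=1$, so the left side of the quantization condition equals $3/2>0$ at $\theta=\pi/3$; this sign comes from $\cos=1$ there, not from the perturbation.
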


\begin{proof}
The phase \(N\theta+\delta(\theta)\) is strictly increasing, while the
second term of Eq.~\eqref{eq:supp-1w-quantization} has a fixed sign and
modulus below one.  Each of \(2L\) consecutive half-lobes therefore
contains a root.  Since \(\deg\widehat\psi_N=2L\), these are all the
roots and they are simple.  Their product is one from the constant and
leading coefficients of \(\widehat\psi_N\).  For \(x\to0\),
\(\gamma=2x+O(x^3)\) and
\(\delta=x+O(x^3)\), so for \(N=12L\) the quantization reads
\(\cos[(N-1)x+O(x^3)]=O[(2x)^N]\), whose \(a\)-th positive solution is
\((N-1)x=(2a-1)\pi/2+O(N^{-2})\).
\end{proof}

With \(\prod_jh_j=0\) the second relation of
Eq.~\eqref{eq:supp-twist-eigen2} reduces to
\(\Lambda^{[2]}(u)\Lambda^{[2]}(-u)=\psi^{\mathrm w}_N(u^2)\), whose
polynomial solutions pick one member of each reflected pair
\(\{\pm\ii\gamma_{a,N}\}\) together with one overall sign,
\begin{equation}
 \Lambda^{[2]}_{\boldsymbol\sigma,\kappa}(u)
 =\kappa\,u^{6L}\prod_{a=1}^{2L}
 \left(1-\frac{u}{\ii\sigma_a\gamma_{a,N}}\right),
 \qquad
 \sigma_a=\pm1,
 \qquad
 \kappa^2=1,
 \label{eq:supp-1w-fused-branches}
\end{equation}
the value of \(\kappa^2\) being Eq.~\eqref{eq:supp-twist-kappa} at
\(P_{\mathrm o}=0\) and \(P_{\mathrm e}=1\).  Only the even combination
\(\Lambda^{[2]}(u)+\Lambda^{[2]}(-u)\) enters
the response below, so the global flip
\(\boldsymbol\sigma\mapsto-\boldsymbol\sigma\) is a symmetry.

For the one-wrap chain, let \(\alpha=\mathrm g,1\) label the aligned
branch and the branch that reflects \(\gamma_{1,N}\), and define from the first functional
relation
\begin{equation}
 Q_{N,\alpha}(u)
 :=\Phi^{\mathrm F}_{1,N}(u^2)
 +\Lambda^{[2]}_\alpha(u)+\Lambda^{[2]}_\alpha(-u)
 =\Lambda^{[1]}_\alpha(u)\,\Lambda^{[1]}_\alpha(-u).
 \label{eq:supp-gap-response-Q}
\end{equation}
Its zeros are the selected roots \(\{\lambda_{\alpha,j}\}\) and their
reflections.  Write \(\Delta_N=E_1-E_{\mathrm g}\) for the energy
difference between the two branches, which is the quantity computed in
this subsection.  With \(E_\alpha=\sum_j1/\lambda_{\alpha,j}\), a contour
\(\Gamma_u\) enclosing \(\{\lambda_{\alpha,j}\}\) but not the origin
gives
\begin{equation}
 \Delta_N=E_1-E_{\mathrm g}
 =\frac{1}{2\pi\ii}\oint_{\Gamma_u}\frac{\dd u}{u}\,
 \partial_u\log
 \frac{Q_{N,1}(u)}{Q_{N,\mathrm g}(u)} .
 \label{eq:supp-gap-exact-contour}
\end{equation}
Deforming \(\Gamma_u\) onto the imaginary axis gives the form used
below.

Figure~\ref{fig:supp-onewrap-contour} illustrates the contour for a
representative one-wrap spectrum.  The box surrounds the selected
\(\lambda_{\alpha,j}\) roots and excludes the origin and their reflected
partners.

\begin{figure}[!t]
\centering
\includegraphics[width=0.44\columnwidth]{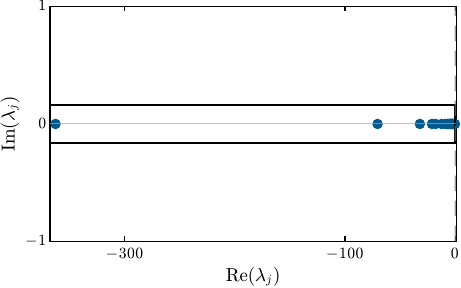}
\caption{Representative one-wrap root configuration at \(N=300\).
Filled points denote the selected \(\lambda_j\) roots, while open points
denote their reflected partners \(-\lambda_j\).  The rectangle is the
contour \(\Gamma_u\): it encloses all selected roots and excludes both
the origin and the reflected roots.}
\label{fig:supp-onewrap-contour}
\end{figure}

The branches are ordered by an exact factorization.  Label a branch by
the set \(A\) of reflected fused roots, \(\sigma_a=-1\) for \(a\in A\), and
by a further sign \(\xi=\pm1\), the two being tied to \(\kappa\) through
\begin{equation}
 \kappa_{\xi,A}=\xi\,(-1)^{3L+|A|} .
 \label{eq:supp-1w-kappa-rule}
\end{equation}
On the imaginary axis \(u=\ii y\) with \(y>0\), write
\(r_a=(y-\gamma_{a,N})/(y+\gamma_{a,N})\in(-1,1)\),
\(r_A=\prod_{a\in A}r_a\), and let \(A^{\rm c}\) be the complementary
set.  Collecting the two products in
Eq.~\eqref{eq:supp-1w-fused-branches} gives
\begin{equation}
 \frac{Q_{N,\xi,A}(\ii y)}{\phi^{\mathrm w}_N(-y^2)}
 =1+\xi\,\mathcal A_N(y)\,\big(r_A+r_{A^{\rm c}}\big),
 \qquad
 \mathcal A_N(y)=\frac{y^{6L}\prod_a\big(1+y/\gamma_{a,N}\big)}
 {\phi^{\mathrm w}_N(-y^2)} ,
 \label{eq:supp-1w-branch-value}
\end{equation}
and therefore, with \(\mathrm g\) the branch \(\xi=+1\),
\(A=\emptyset\),
\begin{equation}
 \frac{Q_{N,\mathrm g}(\ii y)-Q_{N,\xi,A}(\ii y)}
      {\phi^{\mathrm w}_N(-y^2)}
 =\mathcal A_N(y)\,(1-\xi r_A)(1-\xi r_{A^{\rm c}})\geq0 .
 \label{eq:supp-1w-branch-order}
\end{equation}
Every coefficient of \(\phi^{\mathrm w}_N(-y^2)\) is positive and every
factor of the numerator of \(\mathcal A_N\) is positive, so
\(\mathcal A_N>0\), and both factors in
Eq.~\eqref{eq:supp-1w-branch-order} are nonnegative because
\(|r_A|<1\).  The branch \(\mathrm g\) itself obeys
\(Q_{N,\mathrm g}/\phi^{\mathrm w}_N=1+\mathcal A_N(1+\prod_ar_a)>1\),
so \(Q_{N,\mathrm g}(\ii y)>\phi^{\mathrm w}_N(-y^2)>0\), while every
other branch has
\(Q_{N,\xi,A}(\ii y)=|\Lambda^{[1]}_{\xi,A}(\ii y)|^2\geq0\) because
\(\Lambda^{[1]}\) has real coefficients.  Hence \(\mathrm g\) is the
lowest branch, and it is the aligned configuration
\(\sigma_1=\cdots=\sigma_{2L}\).  The branch obtained by reflecting the
root closest to the origin has \(\xi=+1\) and \(A=\{1\}\).  Both
selected branches have their \(\lambda\) roots on
the negative real axis, so neither \(Q\) vanishes on the imaginary axis.

\begin{proposition}[Exact finite-size gap]
\label{prop:supp-1w-gap}
\begin{equation}
 \Delta_N
 =-\frac1\pi\int_0^\infty\frac{\dd y}{y^2}\,
 \log\frac{Q_{N,1}(\ii y)}{Q_{N,\mathrm g}(\ii y)} .
 \label{eq:supp-1w-exact-gap}
\end{equation}
\end{proposition}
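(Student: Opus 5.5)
The plan is to start from the contour representation \eqref{eq:supp-gap-exact-contour} and deform \(\Gamma_u\) onto the imaginary axis, then integrate by parts. Write \(R(u)=Q_{N,1}(u)/Q_{N,\mathrm g}(u)\). Both \(Q_{N,\alpha}(u)=\Lambda^{[1]}_\alpha(u)\Lambda^{[1]}_\alpha(-u)\) are even polynomials of the same degree \(2d_N\) with \(Q(0)=1\). Their zeros are \(\pm\lambda_{\alpha,j}\), and for both selected branches all \(\lambda_{\alpha,j}\) lie on the negative real axis, as established just before the proposition. Hence the selected roots fill the open left half plane and their reflections fill the right half plane. Neither \(Q\) vanishes on the imaginary axis.

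The first step is to close the contour. I take \(\Gamma_u\) to be the boundary of a large left half disk, indented around the origin by a small semicircle. This encloses every \(\lambda_{\alpha,j}\) of both branches and nothing else.

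On the large arc, \(\partial_u\log R=O(u^{-2})\), since the leading coefficients are constants and the degrees agree. Together with the \(1/u\) factor, the arc contribution therefore vanishes as the radius grows.

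Near the origin, \(R(u)=1+O(u^2)\) because both \(Q\) are even with \(Q(0)=1\). So \(\partial_u\log R=O(u)\), the integrand \(u^{-1}\partial_u\log R\) is regular, and the small indentation contributes nothing. This leaves
\[
\Delta_N=\frac{1}{2\pi\ii}\int_{\ii\infty}^{-\ii\infty}\frac{\dd u}{u}\,\partial_u\log R(u),
\]
oriented downward, as the counterclockwise boundary of the left half plane requires.

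The second step uses evenness. Because \(R(u)\) is even and real on \(u=\ii y\), with both \(Q\) values positive there, set \(g(y)=\log R(\ii y)\). This is a real, even, smooth function with \(g(y)=O(y^2)\) at \(0\). Substituting \(u=\ii y\) gives \(\dd u/u=\dd y/y\) and \(\partial_u\log R\,\dd u=g'(y)\,\dd y\). The integral becomes
\[
-\frac{1}{2\pi\ii}\int_{-\infty}^{\infty}\frac{g'(y)}{y}\,\dd y .
\]
I need to be careful here with the factor of \(\ii\). Writing \(\dd u/u\cdot\partial_u\log R\,\dd u\) as \((\dd y/y)\,\dd g\) is wrong, since \(\dd u\) appears only once. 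The correct form is \(\frac{1}{u}\partial_u\log R\,\dd u=\frac{1}{\ii y}g'(y)\,\dd y\). This yields
\[
\Delta_N=-\frac{1}{2\pi\ii}\cdot\frac{1}{\ii}\int_{-\infty}^{\infty}\frac{g'(y)}{y}\,\dd y=\frac{1}{2\pi}\int_{-\infty}^\infty\frac{g'(y)}{y}\,\dd y=\frac1\pi\int_0^\infty\frac{g'(y)}{y}\,\dd y,
\]
where the last equality holds because \(g'/y\) is even.

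The third step is an integration by parts:
\[
\int_0^\infty\frac{g'}{y}\,\dd y=\Big[\frac{g}{y}\Big]_0^\infty+\int_0^\infty\frac{g}{y^2}\,\dd y .
\]
The boundary terms vanish. At the origin \(g=O(y^2)\). At infinity \(g\to\log(\text{ratio of leading coefficients})\), a constant \(c\), so \(g/y\to0\).

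There is a subtlety here. If \(c\neq0\), then \(\int^\infty g/y^2\) still converges, but the sign bookkeeping produces \(+\frac1\pi\int g/y^2\), whereas \eqref{eq:supp-1w-exact-gap} carries a minus sign. So I will recheck the orientation. \(\Gamma_u\) encloses the roots counterclockwise, and \(\frac{1}{2\pi\ii}\oint\frac{1}{u}\partial_u\log Q\) picks up \(+\sum 1/\lambda_j\). For the left half plane, the counterclockwise boundary runs along the imaginary axis from \(+\ii\infty\) to \(-\ii\infty\), so \(y\) goes from \(+\infty\) to \(-\infty\). This gives \(-\frac{1}{2\pi}\int_{-\infty}^{\infty}\), which after integration by parts yields the stated minus sign.

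I will settle this sign carefully as the main bookkeeping point, and cross-check it on the one-root toy case \(Q=1-u^2/\lambda^2\). There \(\lambda<0\) and the formula should return \(1/\lambda\). A direct evaluation gives \(-\frac1\pi\int_0^\infty y^{-2}\log(1+y^2/\lambda^2)\,\dd y=-1/|\lambda|=1/\lambda\), which confirms the minus sign.

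The genuine obstacle is analytic rather than algebraic. The half-plane separation of roots and the absence of zeros of \(Q\) on the imaginary axis, for both branches, are what license the deformation. These are supplied by the factorization \eqref{eq:supp-1w-branch-order} and the preceding statement that the \(\lambda\) roots are negative real.
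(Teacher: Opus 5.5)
Your route (deform the contour of Eq.~\eqref{eq:supp-gap-exact-contour} onto the imaginary axis, then integrate by parts) is viable. However, the orientation step is wrong, and the correct sign is reached only through a second, compensating error.

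The positively oriented boundary of the left half plane runs \emph{upward} along the imaginary axis, from $-\ii\infty$ to $+\ii\infty$: walking upward, the region $\operatorname{Re}u<0$ is on your left. For the bounded version, the arc goes from $\ii\rho$ through $-\rho$ to $-\ii\rho$, and the segment then returns upward. If you traverse the axis downward, your closed contour becomes the clockwise boundary of the left half disk. Equivalently, it is the counterclockwise boundary of the right half disk, which picks up the reflected roots $-\lambda_{\alpha,j}$. Either way it returns $-\Delta_N$. That is exactly why your first, arithmetically correct evaluation gave $+\frac1\pi\int_0^\infty g\,y^{-2}\,\dd y$. In the ``recheck'' you restate the same downward orientation and then assert that it yields $-\frac{1}{2\pi}\int_{-\infty}^{\infty}$. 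This contradicts the computation a few lines earlier. As written, the sign is supported only by your one-root toy check, not by the contour argument.

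The repair is one line. With $u=\ii y$ and $y$ running from $-\infty$ to $+\infty$, one has $u^{-1}\partial_u\log R\,\dd u=g'(y)\,\dd y/(\ii y)$, hence
\[
\Delta_N=\frac{1}{2\pi\ii}\cdot\frac{1}{\ii}\int_{-\infty}^{\infty}\frac{g'(y)}{y}\,\dd y=-\frac1\pi\int_0^\infty\frac{g'(y)}{y}\,\dd y=-\frac1\pi\int_0^\infty\frac{g(y)}{y^2}\,\dd y .
\]
Your treatment of the arc, the origin and the boundary terms then goes through unchanged. The equal-degree assertion is not even needed: unequal degrees still give a vanishing arc and $g=O(\log y)$.

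Note that the paper's proof does not deform any contour. It writes $Q_{N,\alpha}(\ii y)=\prod_j(1+y^2/\lambda_{\alpha,j}^2)$ and applies $\pi^{-1}\int_0^\infty y^{-2}\log(1+y^2/\lambda^2)\,\dd y=-1/\lambda$ to each factor with $\lambda<0$. This gives $E_\alpha=-\pi^{-1}\int_0^\infty y^{-2}\log Q_{N,\alpha}(\ii y)\,\dd y$ for each branch separately, and it then subtracts. That is precisely your toy check summed over the factors. Promoting the cross-check to the main argument therefore gives a complete proof with no orientation bookkeeping. The contour version only buys a direct link to Eq.~\eqref{eq:supp-gap-exact-contour}.
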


\begin{proof}
Neither of the two branches has a zero on the imaginary axis, so
\(Q_{N,\alpha}(\ii y)>0\) there and the selected factor
\(\Lambda^{[1]}_\alpha\) can be taken with all its zeros in the left
half plane.  Writing
\(Q_{N,\alpha}(\ii y)=\prod_j(1+y^2/\lambda_{\alpha,j}^2)\) and using
\(\pi^{-1}\int_0^\infty y^{-2}\log(1+y^2/\lambda^2)\dd y
=-1/\lambda\) for \(\operatorname{Re}\lambda<0\), the energy
\(E_\alpha=\sum_j\lambda_{\alpha,j}^{-1}\) equals
\(-\pi^{-1}\int_0^\infty y^{-2}\log Q_{N,\alpha}(\ii y)\dd y\).
Subtract the two branches.
\end{proof}

Introduce the two ratios
\begin{equation}
 \mathcal R_N(y)=\frac{Q_{N,\mathrm g}(\ii y)}{\phi^{\mathrm w}_N(-y^2)}-1,
 \qquad
 \mathcal S_N(y)=\frac{y}{2\gamma_{1,N}}\,
 \frac{Q_{N,\mathrm g}(\ii y)-Q_{N,1}(\ii y)}
      {Q_{N,\mathrm g}(\ii y)-\phi^{\mathrm w}_N(-y^2)} ,
 \label{eq:supp-1w-RS}
\end{equation}
both positive on the axis.  Expanding the logarithm in
Eq.~\eqref{eq:supp-1w-exact-gap} to first order in the difference
\(Q_{N,\mathrm g}-Q_{N,1}\) turns it into
\begin{equation}
 \Delta_N=\frac{2\gamma_{1,N}}{\pi}\,\mathrm I_N+\delta_N,
 \qquad
 \mathrm I_N=\int_0^\infty\frac{\dd y}{y^3}\,
 \mathcal S_N(y)\,\frac{\mathcal R_N(y)}{1+\mathcal R_N(y)} .
 \label{eq:supp-gap-endpoint-response}
\end{equation}
The neglected part is fixed by the relative difference
\(d_N=1-Q_{N,1}(\ii y)/Q_{N,\mathrm g}(\ii y)\), through the factor
\(-\log(1-d_N)/d_N=1+d_N/2+\cdots\).  At the scale \(y=N/p\) with \(p\)
of order one, which is the scale that dominates the integral, the
explicit product below gives \(d_N=O(N^{-2})\), so
\(\delta_N=O(N^{-5})\).  The same two functions govern the periodic
chain treated later.

Comparing the two branches factor by factor gives the closed form
\begin{equation}
 \mathcal S_N(N/p)
 =\frac{1+X_N(p)}{1-X_N(p)+(p\gamma_{1,N}/N)[1+X_N(p)]},
 \qquad
 X_N(p)=\prod_{a\neq1}
 \frac{1-N/(p\gamma_{a,N})}{1+N/(p\gamma_{a,N})}
 =-\prod_{a\neq1}r_a .
 \label{eq:supp-1w-S-product}
\end{equation}
For \(\theta=\pi/3+\eta\),
\(\gamma(\theta)=(2\sqrt3\,\eta)^{-3/2}[1+O(\eta)]\),
so the window in which \(\gamma_{a,N}/N\) stays between two fixed
positive numbers contains of order \(N^{1/3}\) roots, each
contributing a factor of modulus bounded by a common constant below
one, while every other factor has modulus at most one.  Hence
\(|X_N(p)|\leq s^{\,cN^{1/3}}\) with \(s<1\), uniformly on every
fixed interval \(0<p_-\leq p\leq p_+<\infty\), and since
\(p\gamma_{1,N}/N=O(N^{-2})\),
\begin{equation}
 \mathcal S_N(N/p)\longrightarrow1
 \qquad\text{uniformly on }[p_-,p_+].
 \label{eq:supp-1w-S-limit}
\end{equation}

For \(\mathcal R_N\) let \(q_N(p)\in(0,1)\) solve \(\gamma=N/p\), so
that \(1-q_N^2=(p/N)^{2/3}[1+o(1)]\).  The first generating function in
Eq.~\eqref{eq:supp-1w-generating} gives
\(\phi^{\mathrm w}_N(t)=\sum_\rho\frac{2\rho-1}{3\rho-2}\rho^N\) over
the three roots of \(z^3-z^2+t\), which at \(t=-\gamma^2\) are
\(\rho_0=(1-q^2)^{-1}\) and
\(\rho_\pm=-q(1-q^2)^{-1}e^{\pm\ii\theta}\).  Therefore
\begin{equation}
 \Big(\frac pN\Big)^{2N/3}
 \phi^{\mathrm w}_N\!\Big(-\frac{N^2}{p^2}\Big)
 =q_N^{-2N/3}\left[\frac{1+q_N^2}{1+2q_N^2}+O(q_N^N)\right]
 =\frac23\,q_N^{-2N/3}\,[1+o(1)] .
 \label{eq:supp-1w-phi-hard}
\end{equation}
The product over \(\gamma_{a,N}\) supplies the remaining constant.

\begin{lemma}[Asymptotic product of the fused roots]
\label{lem:supp-1w-hard}
Uniformly on every fixed interval \(0<p_-\leq p\leq p_+<\infty\),
\begin{equation}
 q_N(p)^{2N/3}\prod_{a=1}^{2L}
 \Big(1+\frac{p\,\gamma_{a,N}}{N}\Big)
 \longrightarrow
 \frac{1}{\sqrt3}\exp\!\Big(-\frac{2p}{3\pi}\Big).
 \label{eq:supp-1w-hard-product}
\end{equation}
\end{lemma}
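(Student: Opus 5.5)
The plan is to take logarithms and treat
\[
\log\prod_{a=1}^{2L}\Big(1+\frac{p\gamma_{a,N}}{N}\Big)=\sum_{a=1}^{2L}F_N(\theta_a),
\qquad F_N(\theta)=\log\Big(1+\frac pN\,\gamma(\theta)\Big),
\]
as a Riemann sum. Here \(\theta_a\in(\pi/3,\pi/2)\) is the angle attached to \(\gamma_{a,N}\) through the map~\eqref{eq:supp-1w-angle-map}, and the nodes are fixed by the quantization condition~\eqref{eq:supp-1w-quantization}.

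\textbf{Step 1 (nodes).} By Lemma~\ref{lem:supp-1w-roots}, each \(\theta_a\) lies in its own half-lobe of the phase \(\Theta(\theta)=N\theta+\delta(\theta)\). Away from \(\theta=\pi/3\), the correction term in~\eqref{eq:supp-1w-quantization} is \(O(q^N)\), which is exponentially small. Hence \(\Theta(\theta_a)=(a-\tfrac12)\pi+O(q^N)\), up to a fixed integer offset of the labels. This makes the sum a midpoint rule in the variable \(\Theta\):
\[
\sum_a F_N(\theta_a)=\frac1\pi\int F_N\,\dd\Theta+(\text{midpoint error})+(\text{endpoint terms}).
\]

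\textbf{Step 2 (split the range).} I split the \(\theta\) range at \(\eta=\theta-\pi/3\sim N^{-2/3}\), which is the scale on which \(p\gamma/N\sim1\), since \(\gamma\simeq(2\sqrt3\,\eta)^{-3/2}\).
\begin{itemize}
\item On the outer side (\(\eta\gg N^{-2/3}\)), \(F_N\) is small and smooth. The midpoint error is \(O(N^{-2}\int|F_N''|)\), and I expect it to tend to zero.
\item On the inner scaling window \(\eta=N^{-2/3}s\), \(F_N\) varies on a scale much larger than the node spacing \(1/N\). The midpoint rule therefore still applies, with error \(O(N^{-2/3})\).
\item On the innermost region \(\eta\lesssim 1/N\), the term \(q^{N+2}\) is no longer small and the nodes are shifted by bounded phases. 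There I will not use the Riemann sum. Instead I write \(F_N=\log(p\gamma/N)+O(N/(p\gamma))\) and control \(\sum\log\gamma_a\) over these roots. I do this by combining \(\prod_a\gamma_{a,N}=1\) from~\eqref{eq:supp-1w-root-product} with the exact closed form~\eqref{eq:supp-1w-psi-angle} evaluated at \(t=-\gamma^2\). Since \(\widehat\psi_N(t)=\prod_a(t+\gamma_{a,N}^2)\), that closed form gives the full product \(\prod_a|1-\gamma_{a,N}^2/\gamma^2|\) exactly. This serves as a global check that fixes the \(O(1)\) constant.
\end{itemize}

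\textbf{Step 3 (integral).} I then evaluate \(\frac1\pi\int F_N\,(N+\delta')\,\dd\theta\) asymptotically.
\begin{itemize}
\item The \(N\)-proportional part should reproduce \(-\tfrac{2N}{3}\log q_N(p)\) plus an \(O(1)\) remainder. To see this, I differentiate in \(p\): \(\partial_p F_N=\gamma/(N+p\gamma)\). The resulting integral localizes at the point \(\gamma(\theta)=N/p\), that is, at \(q=q_N(p)\). Comparing with \(\partial_p\big(-\tfrac{2N}{3}\log q_N\big)\), computed from \(1-q_N^2\simeq(p/N)^{2/3}\), should leave a \(p\)-linear term. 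I expect that term to be \(-2p/(3\pi)\), arising from the \(O(1)\) part of the density (the \(\delta'\) term and the \(x\)-shift in \(\gamma\simeq 2x\)).
\item The \(p\)-independent constant \(-\tfrac12\log3\) is fixed by letting \(p\to0^+\) within the uniform range. Alternatively, it follows from the global identity of Step 2 together with~\eqref{eq:supp-1w-phi-hard}, where the factor \(\tfrac23=(1+q^2)/(1+2q^2)|_{q=1}\) and the amplitude \(|a_+|=[2\sin\theta\sqrt{1+2q^2}]^{-1}\) at \(q=1\) both contain \(\sqrt3\).
\end{itemize}

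\textbf{Step 4 (uniformity).} The error bounds in Steps 1–3 depend on \(p\) only through \(N/p\). They are therefore uniform for \(p\in[p_-,p_+]\).

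\textbf{Main obstacle.} I expect the hardest step to be the innermost window near \(\theta=\pi/3\), where the quantization is not a clean midpoint rule and the roots accumulate toward large \(\gamma\). Errors there contribute at \(O(1)\) and could spoil the constant \(1/\sqrt3\). My first approach is not to resolve those roots individually. Instead I pin their aggregate through the exact \(\widehat\psi_N\) identity~\eqref{eq:supp-1w-psi-angle} and the normalization \(\prod_a\gamma_{a,N}=1\), and use the Riemann sum only where \(q^N\) is negligible.
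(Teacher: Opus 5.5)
The weak point is the $p$-independent constant $1/\sqrt3$, which is the real content of the lemma. Your Riemann-sum plan can produce the $N$-dependent and $p$-linear parts, but it has no working mechanism for this constant. With your summand $F_N=\log(1+p\gamma/N)$, which vanishes as $\theta\to\pi/2$ and grows like $-\tfrac32\log\eta$ as $\eta=\theta-\pi/3\to0$, the $\delta'$ correction to the density is only $O(N^{-2/3})$. So in your bookkeeping the entire constant sits in the discrete-versus-continuum discrepancy of the $O(1)$ roots nearest $\theta=\pi/3$. That is exactly where the lattice is off by $O(1)$ phases. It is also where your claimed $O(N^{-2/3})$ midpoint error fails, since $F_N$ varies on the node spacing as $\eta\downarrow1/N$.

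None of the devices you offer closes this.
\begin{itemize}
\item Letting $p\to0^+$ lies outside the uniform range and cannot be interchanged with $N\to\infty$. At fixed $N$ the left side tends to $1$ as $p\to0^+$, since $q_N(p)\to1$ and every factor $1+p\gamma_{a,N}/N\to1$. The claimed limit instead tends to $1/\sqrt3$.
\item The closed form \eqref{eq:supp-1w-psi-angle} fixes only the even product $\prod_a(\gamma_{a,N}^2-\gamma^2)$. It does not determine the one-sided product $\prod_a(1+p\gamma_{a,N}/N)$.
\item Eq.~\eqref{eq:supp-1w-phi-hard} concerns $\phi^{\mathrm w}_N$, which does not enter this lemma. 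Its $2/3$ is combined with $1/\sqrt3$ only later, in $\mathcal R_N$, so matching factors of $\sqrt3$ is not a derivation.
\item $\prod_a\gamma_{a,N}=1$ is a statement about all roots. Using it on the inner roots only trades their sum for minus the outer sum. That outer sum has a logarithmic singularity at $\theta\to\pi/2$, and you never evaluate its $O(1)$ part.
\end{itemize}

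The missing idea is to apply $\prod_a\gamma_{a,N}=1$ to the whole product before summing. With $y=N/p$ this gives $\log\prod_a(1+p\gamma_{a,N}/N)=\sum_a\log(1+y/\gamma_{a,N})-\tfrac N6\log y$.
\begin{itemize}
\item The new summand decays like $y/\gamma$ as $\theta\to\pi/3$, so the off-lattice roots there contribute $o(1)$.
\item The logarithmic singularity moves to $x=\pi/2-\theta\to0$. There the roots lie on the explicit half-integer lattice $Nx_a-\varphi(x_a)=(a-\tfrac12)\pi$, up to exponentially small errors.
\item At that end, $\sum_{a\le n}\log(a-\tfrac12)=\log\Gamma(n+\tfrac12)-\log\Gamma(\tfrac12)$ and Stirling's formula show that the sum of $-\log x$ exceeds its integral by $-\tfrac12\log2$.
\item The phase $\varphi$ contributes $-I_\gamma(\infty)=-\tfrac12\log\tfrac32$, giving the total $-\tfrac12\log3$.
\end{itemize}

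Relatedly, you misidentify the source of $-2p/(3\pi)$. It is not a density correction; it is part of the $N$-proportional integral:
$\tfrac N\pi\int_0^{\pi/6}\log(1+\gamma/y)\,\dd x=\tfrac13N^{1/3}p^{2/3}-\tfrac{2p}{3\pi}+o(1)$.
Here $-\tfrac23$ is the finite part of $\int_0^{\pi/6}\gamma\,\dd x$, because $\gamma=\tfrac{\dd}{\dd x}\big[\tfrac23\cos x\,(1-4\sin^2x)^{-1/2}\big]$. The $N^{1/3}$ term cancels against $\tfrac{2N}{3}\log q_N$, which itself has no $O(1)$ term. Your $p$-derivative would find this if carried through; only the constant requires the flipped sum.
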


\begin{proof}
In the variable \(x\) of Eq.~\eqref{eq:supp-1w-angle-map},
Eq.~\eqref{eq:supp-1w-quantization} reads \(\cos\Psi_N(x)=-R_N(x)\)
with \(\Psi_N(x)=Nx-\varphi(x)\),
\(\varphi(x)=\arctan(3\tan x)-2x\) and
\(R_N(x)=(2\sin x)^{N+2}\cos x/\sqrt{1+8\sin^2x}\leq\tfrac12\).  Here
\(\varphi(0)=\varphi(\pi/6)=0\) and \(|\varphi'|\leq1\), so \(\Psi_N\)
increases from \(0\) to \(N\pi/6=2L\pi\) and the roots obey the
half-integer quantization
\begin{equation}
 \Psi_N(x_a)=\big(a-\tfrac12\big)\pi+\epsilon_a,
 \qquad a=1,\ldots,2L,
 \qquad |\epsilon_a|\leq2R_N(x_a).
 \label{eq:supp-1w-lattice}
\end{equation}
The displacements \(\epsilon_a\) are exponentially small outside the
region \(\pi/6-x=O(N^{-2/3})\) and contribute \(o(1)\) below.  Using
\(\prod_a\gamma_{a,N}=1\) and \(y=N/p\),
\begin{equation}
 \log\prod_a\Big(1+\frac{p\gamma_{a,N}}{N}\Big)
 =\sum_{a=1}^{2L}F_y(x_a)-\frac N6\log y,
 \qquad
 F_y(x)=\log\Big(1+\frac{y}{\gamma(x)}\Big).
 \label{eq:supp-1w-sum-form}
\end{equation}
Split \(F_y=-\log x+g_y\).  The remainder \(g_y\) is smooth with
\(\int_0^{\pi/6}|g_y''|\leq C(1+y^{2/3})\), obtained by treating the
layer \(\pi/6-x\lesssim y^{-2/3}\) separately, so midpoint summation
over the lattice \eqref{eq:supp-1w-lattice} replaces
\(\sum_ag_y(x_a)\) by
\(N\pi^{-1}\int g_y\dd x-\pi^{-1}\int g_y\varphi'\dd x\) up to
\(O[(1+y^{2/3})/N]\).  For the singular part
\(x_a=[(a-\tfrac12)\pi+O(1)]/N\) at small \(x\), and
\(\sum_{a=1}^n\log(a-\tfrac12)
=\log\Gamma(n+\tfrac12)-\log\Gamma(\tfrac12)\), so Stirling's formula
makes the midpoint sum of \(-\log x\) exceed its integral by exactly
\(-\tfrac12\log2\).  The two correction integrals combine, after one
integration by parts, into \(-I_\gamma(y)\) with
\begin{equation}
 I_\gamma(y)=\frac1\pi\int_0^{\pi/6}\varphi(x)\,
 \frac{y}{\gamma(x)+y}\,\dd\log\gamma(x),
 \qquad
 I_\gamma(\infty)=\tfrac12\log\tfrac32 .
 \label{eq:supp-1w-phase-integral}
\end{equation}
Finally the three terms proportional to \(N\), namely
\((2N/3)\log q_N\), \(N\pi^{-1}\int_0^{\pi/6}F_y\dd x\) and
\(-(N/6)\log y\), combine into \(-2p/(3\pi)+o(1)\).  Collecting the
constants, \(-\tfrac12\log2-\tfrac12\log\tfrac32=-\tfrac12\log3\),
which is Eq.~\eqref{eq:supp-1w-hard-product}.
\end{proof}

Combining Eqs.~\eqref{eq:supp-1w-phi-hard} and
\eqref{eq:supp-1w-hard-product} with the aligned orientation of
Eq.~\eqref{eq:supp-1w-fused-branches},
\begin{equation}
 \mathcal R_N(N/p)\longrightarrow
 \frac{\sqrt3}{2}\exp\!\Big(-\frac{2p}{3\pi}\Big),
 \label{eq:supp-1w-R-limit}
\end{equation}
the amplitude \(\sqrt3/2\) being \(1/\sqrt3\) from the large-\(\gamma\)
root product divided by \(2/3\) from the fundamental scalar source.

Changing variables \(y=N/p\) in
Eq.~\eqref{eq:supp-gap-endpoint-response} gives
\begin{equation}
 N^3\Delta_N
 =\frac{2N\gamma_{1,N}}{\pi}\int_0^\infty p\,
 \mathcal S_N(N/p)\,
 \frac{\mathcal R_N(N/p)}{1+\mathcal R_N(N/p)}\,\dd p
 +N^3\delta_N .
 \label{eq:supp-1w-scaled-gap}
\end{equation}
Lemma~\ref{lem:supp-1w-hard} is uniform only on fixed intervals, so the
majorant has to be established separately, and the integrand behaves
differently in three ranges of \(p\).  For \(p\leq1\) it is at most
\(Cp\), because \(0<\mathcal R_N/(1+\mathcal R_N)<1\) and
Eq.~\eqref{eq:supp-1w-S-product} bounds \(\mathcal S_N\) by a constant.
For \(1\leq p\leq N\), keeping the last cell near \(x=\pi/6\) in
Eq.~\eqref{eq:supp-1w-hard-product} unexpanded instead of passing to
the limit gives \(\mathcal S_N\mathcal R_N/(1+\mathcal R_N)\leq
Ce^{-cp}\) with \(C,c>0\) independent of \(N\) and \(p\), so the
integrand is at most \(Cpe^{-cp}\).  Beyond \(p=N\), that is for
\(y\leq1\), the exponential form is no longer the right one.  There
\(\mathcal R_N\) is simply a polynomial vanishing at the origin like
\(y^{6L}\) and \(\mathcal S_N(N/p)=O(N/p)\), so the integrand decays
like the power \(p^{-6L}\), and the whole range contributes
\(O\big(N^3\mathcal R_N(1)/L\big)\) to Eq.~\eqref{eq:supp-1w-scaled-gap},
which vanishes because \(\mathcal R_N(1)\) is of order
\(e^{-2N/(3\pi)}\).  The three pieces together give an integrable
majorant independent of \(N\).  Dominated convergence together with
Eqs.~\eqref{eq:supp-1w-first-root}, \eqref{eq:supp-1w-S-limit} and
\eqref{eq:supp-1w-R-limit} then proves the following.

\begin{theorem}[Cubic gap of the one-wrap chain]
\label{thm:supp-1w-gap}
For \(h_1=0\), \(h_2=\cdots=h_N=1\) and \(N=12L\), the energy
difference between the aligned branch and the branch that reflects the
fused root \(\ii\gamma_{1,N}\) closest to the origin satisfies
\begin{equation}
 \lim_{N\to\infty}N^3\Delta_N
 =2\int_0^\infty p\,
 \frac{(\sqrt3/2)e^{-2p/(3\pi)}}{1+(\sqrt3/2)e^{-2p/(3\pi)}}\,\dd p
 =\frac{9\pi^2}{2}
 \left[-\operatorname{Li}_2\!\Big(-\frac{\sqrt3}{2}\Big)\right]
 =32.3245029925\ldots ,
 \label{eq:supp-1w-amplitude}
\end{equation}
where \(\operatorname{Li}_2(z)=\sum_{n\geq1}z^n/n^2\) is the
dilogarithm.
\end{theorem}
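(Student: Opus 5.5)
The plan is to pass to the limit in the scaled representation \eqref{eq:supp-1w-scaled-gap} and then evaluate the resulting integral in closed form. The prefactor is handled by Lemma~\ref{lem:supp-1w-roots}: it gives $\gamma_{1,N}=\pi/(N-1)+O(N^{-3})$, so $2N\gamma_{1,N}/\pi\to2$. The remainder satisfies $\delta_N=O(N^{-5})$ from the first-order expansion of the logarithm below \eqref{eq:supp-gap-endpoint-response}, hence $N^3\delta_N\to0$. It therefore suffices to show that
\[
\int_0^\infty p\,\mathcal S_N(N/p)\frac{\mathcal R_N(N/p)}{1+\mathcal R_N(N/p)}\,\dd p\longrightarrow\int_0^\infty p\,\frac{(\sqrt3/2)e^{-2p/(3\pi)}}{1+(\sqrt3/2)e^{-2p/(3\pi)}}\,\dd p .
\]

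For pointwise convergence of the integrand, on each fixed interval $[p_-,p_+]$ we have $\mathcal S_N\to1$ uniformly by \eqref{eq:supp-1w-S-limit}. Also $\mathcal R_N\to(\sqrt3/2)e^{-2p/(3\pi)}$ uniformly by \eqref{eq:supp-1w-R-limit}, which itself combines \eqref{eq:supp-1w-phi-hard} with Lemma~\ref{lem:supp-1w-hard}. Since $r\mapsto r/(1+r)$ is Lipschitz on $r\ge0$, the integrand converges locally uniformly on $(0,\infty)$.

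Passing the limit under the integral requires an $N$-independent integrable majorant, and this is the main obstacle. I would use the three-range split described just before the theorem. For $p\le1$, the bound $\mathcal R/(1+\mathcal R)<1$ together with boundedness of $\mathcal S_N$ from \eqref{eq:supp-1w-S-product} gives a majorant $Cp$. The difficult range is $1\le p\le N$. There I would redo the proof of Lemma~\ref{lem:supp-1w-hard} non-asymptotically: keep the Euler--Maclaurin/midpoint error terms as explicit bounds of size $O((1+y^{2/3})/N)$, which stay $O(1)$ uniformly for $y=N/p\ge1$, and keep the extensive term as the exact expression $(2N/3)\log q_N+N\pi^{-1}\int F_y-\tfrac N6\log y$. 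I would then show this exact expression is bounded above by $-cp$. Since $q_N$ is an explicit function of $p/N$, this reduces to a one-variable convexity estimate on $[N^{-1},1]$ in the variable $p/N$. This yields $\mathcal R_N\le Ce^{-cp}$, while $\mathcal S_N$ stays bounded because $|X_N|<1$ and $X_N$ is bounded away from $1$. For $p\ge N$ (i.e.\ $y\le1$), $\mathcal R_N$ is a polynomial in $y$ vanishing to order $6L$, so its contribution is $O(N^3\mathcal R_N(1))$, which is exponentially small. Dominated convergence then gives the first equality in \eqref{eq:supp-1w-amplitude}.

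For the closed form, set $c=\sqrt3/2$ and $s=2p/(3\pi)$, so $p=3\pi s/2$ and $p\,\dd p=(9\pi^2/4)s\,\dd s$. Expanding $ce^{-s}/(1+ce^{-s})=-\sum_{n\ge1}(-c)^ne^{-ns}$, which converges since $c<1$, and integrating term by term with $\int_0^\infty s e^{-ns}\dd s=n^{-2}$ gives
\[
2\cdot\tfrac{9\pi^2}{4}\bigl[-\operatorname{Li}_2(-c)\bigr]=\tfrac{9\pi^2}{2}\bigl[-\operatorname{Li}_2(-\sqrt3/2)\bigr].
\]
Numerical evaluation of the dilogarithm then gives $32.3245\ldots$.
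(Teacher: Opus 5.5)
Your proposal is correct and follows the paper's own argument essentially step for step: the scaled form \eqref{eq:supp-1w-scaled-gap} with $N\gamma_{1,N}\to\pi$ and $N^3\delta_N\to0$, the locally uniform limits \eqref{eq:supp-1w-S-limit} and \eqref{eq:supp-1w-R-limit}, the same three-range majorant feeding dominated convergence, and the dilogarithm integral, which you derive by a geometric series where the paper simply quotes it; your non-asymptotic rerun of Lemma~\ref{lem:supp-1w-hard} for $1\le p\le N$ is a more explicit version of the paper's remark about keeping the last cell unexpanded. One small caution, which applies equally to the paper's wording: for $p\ge N$ the polynomial vanishing like $y^{6L}$ is $\phi^{\mathrm w}_N(-y^2)\,\mathcal R_N(y)=Q_{N,\mathrm g}(\ii y)-\phi^{\mathrm w}_N(-y^2)$, not $\mathcal R_N$ itself, so the bound by $y^{6L}\mathcal R_N(1)$ does not follow directly; that range is more safely handled by extending your exponential-rate bound to $y_0\le y\le1$ and using $\phi^{\mathrm w}_N(-y^2)\ge1$ for $y\le y_0$.
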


\begin{proof}
Use \(\int_0^\infty p\,ae^{-bp}(1+ae^{-bp})^{-1}\dd p
=-\operatorname{Li}_2(-a)/b^2\) with \(a=\sqrt3/2\) and
\(b=2/(3\pi)\), together with \(N\gamma_{1,N}\to\pi\).
\end{proof}

The cubic power combines \(\gamma_{1,N}\simeq\pi/N\) with the
\(N^{-2}\) measure in Eq.~\eqref{eq:supp-1w-scaled-gap}.  Excitations
within a fixed branch retain the open-chain dispersion
\(\varepsilon(\vartheta)=(4\cos^2\vartheta-1)^{3/2}/(2\cos\vartheta)\)
for \(0<\vartheta<\pi/3\), so their lowest scale is \(N^{-3/2}\).
Reflecting the \(a\)-th fused root instead gives
\((2a-1)\Delta_N\) at leading order.

Direct evaluation of Eq.~\eqref{eq:supp-1w-exact-gap} from the
quantized roots gives
\begin{center}
\begin{tabular}{lcccc}
\hline
\(N\) & \(36\) & \(48\) & \(60\) & \(120\)\\
\hline
\(N^3\Delta_N\) & \(34.4109659\) & \(33.9698623\) & \(33.7232567\)
& \(33.2324842\)\\
\hline
\end{tabular}
\end{center}
consistent with Eq.~\eqref{eq:supp-1w-amplitude} up to a correction of
order \(N^{-2/3}\) from the region \(\theta\to\pi/3\).  Equation
\eqref{eq:supp-1w-branch-order} establishes that the aligned branch is
the ground state.  Complete branch enumeration at \(N=36\) and exact
diagonalization at accessible sizes identify the first root-reflection
branch with the first distinct excited level.

\subsection{Arbitrary chain length}
\label{subsec:supp-gap-alllength}

The choice \(N=12L\) fixes the simplest root counting and phase
conventions.  For general \(N\), the limiting function
\(\mathcal R_N\) remains the same, while the number of fused roots and
the factor \(N\gamma_{1,N}\) depend on the length class.  The result is
\begin{equation}
 N^3\Delta_N\longrightarrow
 \begin{cases}
 \displaystyle
 \frac{9\pi^2}{2}
 \left[-\operatorname{Li}_2\!\Big(-\frac{\sqrt3}{2}\Big)\right],
 &N\ \text{even},\\[6pt]
 \displaystyle
 9\pi^2
 \left[-\operatorname{Li}_2\!\Big(-\frac{\sqrt3}{2}\Big)\right],
 &N\ \text{odd},
 \end{cases}
 \label{eq:supp-al-result}
\end{equation}
The extension changes the root
counting, the quantization phase, and the large-\(\gamma\) normalization.

\emph{Counting.}  For general \(N\) the fused eigenvalue reads
\begin{equation}
\begin{aligned}
 \Lambda^{[2]}_{\boldsymbol\sigma,\kappa}(u)
 &={}\ii^{\,s_N}\kappa\sqrt{c_N}\,u^{s_N}
 \prod_{a=1}^{\nu_N}
 \left(1+\frac{\ii\sigma_au}{\gamma_{a,N}}\right),
 \qquad
 \nu_N=f_N-s_N,
 \\
 c_N&=
 \begin{cases}
 1,&N\ \text{even},\\
 (N-1)/2,&N\ \text{odd},
 \end{cases}
 \qquad
 \sigma_a=\pm1,
 \qquad
 \kappa=\pm1.
\end{aligned}
 \label{eq:supp-al-branches}
\end{equation}
with \(s_N\) and \(f_N\) as in Eq.~\eqref{eq:supp-twist-degree2}, and the
coordinates \(\gamma_{a,N}\) are obtained from the positive roots in
\(z=\gamma^2\) of
\begin{equation}
 \sum_{j\geq0}(-z)^j\binom{m-j}{2j}\quad(N=2m),
 \qquad
 \sum_{j\geq0}(-z)^j\binom{m-j}{2j+1}\quad(N=2m+1).
 \label{eq:supp-al-height-polynomials}
\end{equation}
The normalization \(\prod_a\gamma_{a,N}=1\) of
Eq.~\eqref{eq:supp-1w-root-product} is special to even multiples of
three.  For the other length classes, the leading fused coefficient
and the displacement of \(f_N\) from \(2N/3\) enter through the single
combination
\begin{equation}
 \frac{\sqrt{c_N}}{\prod_a\gamma_{a,N}}
 \Big(\frac Np\Big)^{f_N-2N/3}
 q_N^{2N/3}\prod_{a}\Big(1+\frac{p\gamma_{a,N}}{N}\Big),
 \label{eq:supp-al-combination}
\end{equation}
which contains no residual power of \(N\).  Writing \(N=6M+\nu\), the
root counting is
\begin{equation}
\begin{array}{c|ccc}
\nu & s_N & \nu_N & c_N\\ \hline
0 & 3M & M & 1\\
1 & 3M+1 & M-1 & 3M\\
2 & 3M+1 & M & 1\\
3 & 3M+2 & M & 3M+1\\
4 & 3M+2 & M & 1\\
5 & 3M+3 & M & 3M+2
\end{array}
 \label{eq:supp-al-table}
\end{equation}

\emph{Quantization.}  In the variable \(x\) of
Eq.~\eqref{eq:supp-1w-angle-map} the phase \(\delta\) of
Eq.~\eqref{eq:supp-1w-psi-angle} is
\(\varphi(x)=\arctan(3\tan x)-2x\), and
Eq.~\eqref{eq:supp-1w-quantization} becomes, exactly,
\begin{equation}
 \cos\!\big[Nx-\varphi(x)\big]+(-1)^{N/2}R_N(x)=0
 \quad(N\ \text{even}),
 \qquad
 \sin\!\big[Nx-\varphi(x)\big]-(-1)^{(N-1)/2}R_N(x)=0
 \quad(N\ \text{odd}),
 \label{eq:supp-al-quantization}
\end{equation}
with
\(R_N(x)=(2\sin x)^{N+2}\cos x/\sqrt{1+8\sin^2x}\leq\tfrac12\).  At
small \(x\) the roots therefore obey half-integer quantization for
even \(N\) and integer quantization for odd \(N\),
\begin{equation}
 N\gamma_{a,N}\longrightarrow
 \begin{cases}
 (2a-1)\pi,& N\ \text{even},\\
 2a\pi,& N\ \text{odd},
 \end{cases}
 \label{eq:supp-al-ladder}
\end{equation}
so that \(N\gamma_{1,N}\to\pi\) for even \(N\) and \(2\pi\) for odd
\(N\).

\emph{The region \(\theta\to\pi/3\).}  The only limit that could depend
on the chain-length class is the constant contribution from this
region in Lemma~\ref{lem:supp-1w-hard}.  It is carried by the sum of
\(F_y(x_a)\) in Eq.~\eqref{eq:supp-1w-sum-form}, reached from the product
over \(\gamma_{a,N}\) by the
normalization above, and its splitting \(F_y=-\log x+g_y\).  The bulk of
the sum and the contribution of \(g_y\) do not depend on the parity of
the quantization condition, but the logarithmic part does.  Summing
\(-\log x\) over the half-integer sequence gives \(-\tfrac12\log2\),
whereas over the integer lattice it leaves \(-\tfrac12\log(Ny)\) with
\(y=N/p\).  For odd \(N\) the extra factor \(\sqrt{c_N}\) in
Eq.~\eqref{eq:supp-al-branches} contributes
\(+\tfrac12\log[(N-1)y/2]\), which cancels the \(\log N\) and restores
\(-\tfrac12\log2\).  The two parities therefore give the same constant,
Eq.~\eqref{eq:supp-1w-hard-product} holds unchanged, and since
Eq.~\eqref{eq:supp-1w-phi-hard} is controlled by the dominant root
\(\rho_0\) for any \(N\),
\begin{equation}
 \mathcal R_N(N/p)\longrightarrow
 \frac{\sqrt3}{2}\exp\!\Big(-\frac{2p}{3\pi}\Big)
 \label{eq:supp-al-R-limit}
\end{equation}
at every chain length.  The limiting function \(\mathcal R_N\) is
therefore the same for both parities.  The remaining difference in
Eq.~\eqref{eq:supp-1w-scaled-gap} is the prefactor
\(N\gamma_{1,N}\), which gives Eq.~\eqref{eq:supp-al-result}.
A representative value from each of the six length classes is
\begin{center}
\begin{tabular}{c c c c c c c}
\hline
\(N\bmod6\) & \(0\) & \(1\) & \(2\) & \(3\) & \(4\) & \(5\)\\
\(N\) & \(12000\) & \(12001\) & \(12002\) & \(12003\) & \(12004\) & \(12005\)\\
\(N^3\Delta_N\) & \(32.3991\) & \(64.4444\) & \(32.3808\) & \(64.7982\) & \(32.2222\) & \(64.7617\)\\
\hline
\end{tabular}
\end{center}
The even and odd sequences approach \(32.3245\ldots\) and
\(64.6490\ldots\), respectively.

\subsection{Periodic chain: the response with the inter-level coupling frozen}
\label{subsec:supp-gap-periodic-scalar}

For the periodic chain \(h_1=\cdots=h_N=1\) with \(N=12L\), exact
diagonalization at small sizes identifies the ground-state and
lowest-excitation root configurations, whose continuation to \(N=600\)
is shown in Figs.~\ref{fig:main}(c) and (d).  As in the one-wrap chain,
the lowest excitation is associated with the reflection of the fused
root closest to the origin and a reversal of \(\kappa\).  Under periodic
closure, both root sets readjust self-consistently.  The fused roots
remain on the imaginary axis, while the \(\lambda_j\) form
complex-conjugate configurations with imaginary parts of order
\(N^{-1}\) in the region where \(\lambda_j=O(1)\).

We first consider the scalar inversion problem obtained by omitting
the inter-level coupling term in the second functional relation, and then compare
it with the fully coupled periodic solution.

Keep \(N=12L\) and the sector \(\chi=+1\), and write
\(\phi_N=\Phi^{\mathrm F}_{1,N}\) and \(\psi_N=\Phi^{\mathrm F}_{2,N}\).
In the angle of Eq.~\eqref{eq:supp-1w-angle-map} the two traces are
\begin{equation}
 \phi_N=\frac{(-1)^N+2q^N\cos N\theta}{(q^2-1)^N},
 \qquad
 \psi_N=\frac{q^{2N}+2(-1)^Nq^N\cos N\theta}{(q^2-1)^{2N}},
 \label{eq:supp-per-traces}
\end{equation}
so after the fixed zero at the origin is removed the positive root
coordinates \(\gamma_{a,N}\) are the \(2L\) solutions of
\begin{equation}
 q^N+2\cos(N\theta)=0,
 \qquad \frac\pi3<\theta<\frac\pi2 ,
 \label{eq:supp-per-quantization}
\end{equation}
one per half-lobe exactly as before, with
\(N\gamma_{1,N}=\pi+O(N^{-2})\).  Two constants differ from the
one-wrap chain.  Their product is
\begin{equation}
 \prod_{a=1}^{2L}\gamma_{a,N}=\sqrt{\frac23},
 \label{eq:supp-per-normalization}
\end{equation}
instead of one, and the fused prefactor is \(\kappa=2\) instead of
\(\kappa=\pm1\), by Eq.~\eqref{eq:supp-twist-kappa} at
\(P_{\mathrm o}=P_{\mathrm e}=1\).  On the other side of the ratio, the
trace \(\phi_N\) is dominated by the same characteristic root
\(\rho_0\), but with amplitude \(A_\phi=1\) rather than the \(2/3\) of
Eq.~\eqref{eq:supp-1w-phi-hard}.

The large-\(\gamma\) product asymptotic is unchanged.  The periodic
quantization condition carries no phase shift, so the term \(-I_\gamma\) of
Lemma~\ref{lem:supp-1w-hard} is absent, and it is compensated exactly
by the product entering through
\(\log\prod_a\gamma_{a,N}\), leaving
\begin{equation}
 q_N(p)^{2N/3}\prod_{a=1}^{2L}
 \Big(1+\frac{p\,\gamma_{a,N}}{N}\Big)
 \longrightarrow\frac{1}{\sqrt3}\exp\!\Big(-\frac{2p}{3\pi}\Big)
 \label{eq:supp-per-hard-product}
\end{equation}
as before.  Nothing else in Sec.~\ref{subsec:supp-gap-onewrap} is used
beyond the root quantization and this dominant root, so the
calculation carries over and the closure enters only through the three
constants just listed,
\begin{equation}
 \mathcal R_N(N/p)\longrightarrow
 \frac{\kappa}{A_\phi\prod_a\gamma_{a,N}}\,
 \frac{1}{\sqrt3}\exp\!\Big(-\frac{2p}{3\pi}\Big).
 \label{eq:supp-per-R-general}
\end{equation}
For the one-wrap chain this returns the \(\sqrt3/2\) of
Eq.~\eqref{eq:supp-1w-R-limit}.  For the periodic chain it gives
\(\mathcal R^{(0)}_N(N/p)\to\sqrt2\,e^{-2p/(3\pi)}\), and the integral of
Theorem~\ref{thm:supp-1w-gap} gives
\begin{equation}
 N^3\Delta^{(0)}_N\longrightarrow
 D^{(0)}=\frac{9\pi^2}{2}
 \left[-\operatorname{Li}_2\!\big(-\sqrt2\big)\right]
 =48.6081007479\ldots
 \label{eq:supp-per-bare}
\end{equation}
This is the response the periodic chain would have if the inter-level coupling term
of Eq.~\eqref{eq:supp-twist-eigen2} were dropped.  We call it the bare
response.  The next subsection shows, numerically up to \(N=600\), that restoring
the inter-level coupling changes the amplitude but not the power of
\(N\).

\subsection{Dressing by the inter-level coupling}
\label{subsec:supp-gap-dressing}

Write the two selected eigenvalues in terms of their even and odd parts,
\(t=u^2\),
\begin{equation}
 \Lambda^{[1]}(u)=U(t)+uV(t),
 \qquad
 \Lambda^{[2]}(u)=u^{6L}\big[C(t)+\ii uG(t)\big],
 \label{eq:supp-per-UVCG}
\end{equation}
so that Eqs.~\eqref{eq:supp-twist-eigen1} and
\eqref{eq:supp-twist-eigen2} become two real polynomial identities,
\begin{align}
 U^2-tV^2&=\phi_N+2t^{3L}C,
 \label{eq:supp-per-first}\\
 C^2+tG^2&=b_N+2U,
 \qquad \psi_N=t^{6L}b_N .
 \label{eq:supp-per-second}
\end{align}
The inter-level coupling is the single term \(2U\) on the right of
Eq.~\eqref{eq:supp-per-second}.  Without it the two relations are solved
one after the other.  With it they form the self-consistent loop
\begin{equation}
 C\ \longmapsto\ (U,V)\ \longmapsto\ (C,G),
 \label{eq:supp-per-loop}
\end{equation}
and the periodic eigenvalues are fixed points of one full turn.

The scalar problem of Sec.~\ref{subsec:supp-gap-periodic-scalar}
already contains the two scales behind the cubic response.  Its fused
root closest to the origin has \(\gamma_{1,N}=O(N^{-1})\), while the contour
measure at the scale \(y\sim N\) contributes a further \(N^{-2}\).
Restoring the inter-level coupling replaces the bare ratio
\(\mathcal R_N^{(0)}\) by the self-consistent ratio \(\mathcal R_N\).
It can change the amplitude of the response.  A different exponent would
require the inter-level coupling to generate an additional power of \(N\).

We solve the exact coupled identities~\eqref{eq:supp-per-first} and
\eqref{eq:supp-per-second} for every integer \(16\leq N\leq600\),
following the aligned ground branch and the branch obtained by reflecting
the fused root closest to the origin.  Exact diagonalization for
\(N\leq36\) identifies this root-reflection branch with the first distinct
excited level at these sizes.  The finite-size comparison below shows
that the inter-level coupling changes the amplitude without introducing a new
finite-size scale.

For the periodic chain the same contour construction must enclose the
corresponding complex-conjugate \(\lambda_j\) roots.  In this case the
contour cannot be chosen as a narrow interval around the real axis.

\emph{Periodic-gap comparison.}
Figure~\ref{fig:supp-periodic-gap-parity} shows \(N^3\Delta_N\) for the
periodic FFD chain with \(60\leq N\leq600\).
In panel (a), both parities form three smooth curves indexed by
\(N\bmod3\), supporting the cubic scaling \(\Delta_N\sim N^{-3}\).
Panel (b) compares \(N^3\Delta_N/2\) for odd lengths with
\(N^3\Delta_N\) for even lengths.
Their increasingly close agreement at large \(N\) supports a factor of
two between the odd- and even-length cubic-gap amplitudes.

\begin{figure}[!htbp]
\centering
\includegraphics[width=\textwidth]{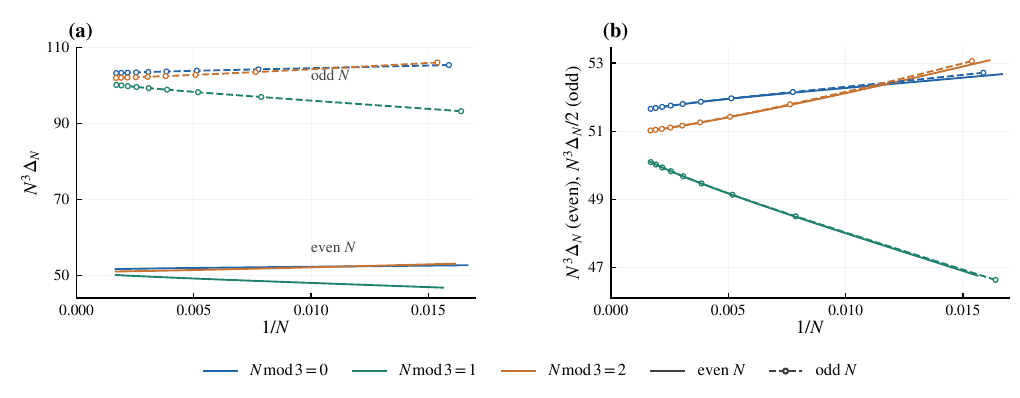}
\caption{Periodic FFD gap for \(60\leq N\leq600\).
(a) \(N^3\Delta_N\) versus \(1/N\).
(b) Comparison of \(N^3\Delta_N/2\) for odd lengths and
\(N^3\Delta_N\) for even lengths.
Colors distinguish \(N\bmod3=0,1,2\).
Solid and dashed lines denote even and odd lengths, respectively.}
\label{fig:supp-periodic-gap-parity}
\end{figure}

\clearpage

\subsection{The \texorpdfstring{$z=3/2$}{z=3/2} branch in the two closures}
\label{subsec:supp-gap-freebranch}

Both one-wrap and periodic closures preserve the \(z=3/2\) scaling
of the open FFD chain~\cite{Fendley2019}.
For the one-wrap chain, the spectrum remains free fermionic
within each fused branch. In the periodic chain the feedback removes that freedom, and
what survives is a smaller family in which the fundamental eigenvalue is
a perfect square.

\emph{One-wrap chain.}  Once the fused branch is fixed,
\(Q_{N,\xi,A}\) is a fixed polynomial and the only remaining freedom is
the sign of each of its \(d_N=\lfloor N/3\rfloor\) zero pairs
\(\pm\lambda_j\).  Every choice is a solution, so the branch carries
\(2^{d_N}\) levels with
\begin{equation}
 E=\sum_{j=1}^{d_N}\sigma_j\,\varepsilon_j,
 \qquad
 \varepsilon_j=\frac{1}{|\lambda_j|},
 \qquad \sigma_j=\pm1 .
 \label{eq:supp-fb-onewrap}
\end{equation}
This is the sector-wise free structure announced in the Letter.  The
one-particle energies are those of the open chain.  Continuing the
angle map of Eq.~\eqref{eq:supp-1w-angle-map} to positive \(t\), where
\(q=2\cos\theta\) runs over \((1,2)\),
\begin{equation}
 \varepsilon(q)=\frac{(q^2-1)^{3/2}}{q},
 \label{eq:supp-fb-dispersion}
\end{equation}
which vanishes as \((q-1)^{3/2}\) at the band edge \(q\to1\).  The
quantization spaces the angles by \(O(N^{-1})\), so the smallest
one-particle energy is of order \(N^{-3/2}\).

The band edge itself is common to all the closures.  Writing
\(r=N(\pi/3-\theta)\) for the scaled distance from the edge, the
quantization of the level-one scalar becomes, in every case,
\begin{equation}
 e^{-\sqrt3\,r}+2\cos r=0,
 \qquad
 r_1=1.6019848763\ldots,
 \label{eq:supp-fb-edge}
\end{equation}
so that
\begin{equation}
 \varepsilon_{\min}\simeq\Big(\frac{2\sqrt3\,r_1}{N}\Big)^{3/2},
 \qquad
 N^{3/2}\varepsilon_{\min}\longrightarrow
 \big(2\sqrt3\,r_1\big)^{3/2}=13.0737\ldots
 \label{eq:supp-fb-edge-scale}
\end{equation}
The open chain of Ref.~\cite{Fendley2019} realizes this directly,
because there \(t^{[2]}\) vanishes identically and \(Q\) is the
level-one scalar alone.  In the one-wrap chain the fused term of
Eq.~\eqref{eq:supp-gap-response-Q} is of the same order as
\(\phi^{\mathrm w}_N\) at large \(t\), so it displaces the edge root and
the approach is slow,
\begin{center}
\begin{tabular}{lccccc}
\hline
\(N\) & \(36\) & \(120\) & \(240\) & \(480\) & \(960\)\\
\hline
\(N^{3/2}\varepsilon_{\min}\) & \(15.543\) & \(14.751\) & \(14.409\)
& \(14.152\) & \(13.963\)\\
\hline
\end{tabular}
\end{center}
still above the value in Eq.~\eqref{eq:supp-fb-edge-scale} at the
largest size we have reached.

\emph{Periodic chain.}  Here the sign of an individual \(\lambda_j\)
cannot be flipped, because the two relations
\eqref{eq:supp-per-first} and \eqref{eq:supp-per-second} are coupled.
One family nevertheless survives for every even \(N=2M\), and it comes
from an identity between the scalar polynomials of the chain and those
of the half chain,
\begin{equation}
 \phi_M(t)^2=\phi_N(t)+2\psi_M(t) .
 \label{eq:supp-fb-identity}
\end{equation}
This is the elementary relation
\(({\rm tr}A)^2={\rm tr}A^2+2\,e_2(A)\) applied to
\(A=\mathsf S_1(t)^M\), together with the fact that \(\mathsf S_2\) is
the matrix of \(2\times2\) subdeterminants of \(\mathsf S_1\), so that
\(e_2(\mathsf S_1^M)={\rm tr}\,\mathsf S_2^M=\psi_M\).

Comparing Eq.~\eqref{eq:supp-fb-identity} with the first functional
relation shows that the choice
\begin{equation}
 \Lambda^{[1]}(u)\Lambda^{[1]}(-u)=\phi_M(u^2)^2,
 \qquad
 \Lambda^{[2]}(u)+\Lambda^{[2]}(-u)=2\psi_M(u^2)
 \label{eq:supp-fb-choice}
\end{equation}
solves it identically.  Writing
\(\phi_M(t)=\prod_a(1-\varepsilon_a^2t)\), every spectral factor of
\(\phi_M\) gives an admissible fundamental eigenvalue, and the family is
\begin{equation}
 \Lambda^{[1]}_{\boldsymbol\eta}(u)
 =\prod_{a=1}^{\lfloor N/6\rfloor}
 \big(1-\eta_a\varepsilon_au\big)^2,
 \qquad
 E_{\boldsymbol\eta}=2\sum_a\eta_a\varepsilon_a,
 \qquad \eta_a=\pm1 .
 \label{eq:supp-fb-periodic}
\end{equation}
Each \(\eta_a\) reverses a coincident pair of \(\lambda\) roots, so the
level spacing is \(4\varepsilon_a\) rather than \(2\varepsilon_a\), and
the number of modes is halved.  The \(\varepsilon_a\) are the energies
of the half chain, so they obey the same dispersion
\eqref{eq:supp-fb-dispersion} and the same edge condition
\eqref{eq:supp-fb-edge}, but with \(M=N/2\) in place of \(N\).  The
coefficient is therefore larger by \(2^{3/2}\),
\begin{equation}
 \varepsilon_{\min}\simeq
 \Big(\frac{4\sqrt3\,r_1}{N}\Big)^{3/2},
 \qquad
 N^{3/2}\varepsilon_{\min}\longrightarrow
 2^{3/2}\big(2\sqrt3\,r_1\big)^{3/2}=36.9759 .
 \label{eq:supp-fb-periodic-scale}
\end{equation}
Direct evaluation gives
\begin{center}
\begin{tabular}{lcccccc}
\hline
\(N\) & \(24\) & \(48\) & \(96\) & \(192\) & \(384\) & \(768\)\\
\hline
\(N^{3/2}\varepsilon_{\min}\) & \(33.643\) & \(35.203\) & \(36.056\)
& \(36.507\) & \(36.739\) & \(36.857\)\\
\hline
\end{tabular}
\end{center}
in agreement with Eq.~\eqref{eq:supp-fb-periodic-scale}.  We have checked
Eqs.~\eqref{eq:supp-fb-identity} and \eqref{eq:supp-fb-periodic}
against the complete physical spectrum at \(N=6,8,10,12,16,18\) and
\(20\), where every predicted square factor is realized and the
energies agree with exact diagonalization.

\subsection{Low-temperature thermodynamics of the one-wrap chain}
\label{subsec:supp-thermo-onewrap}

The tower of Sec.~\ref{subsec:supp-gap-onewrap} reverses one fused root at
a time.  We call the coordinate \(\gamma_{a,N}>0\) of a fused root its
height.  Reversing a finite fraction of them at once produces a band of
\(2^{2L}\) states whose width is \(O(N^{-1})\), and the competition of
that width against the entropy of the band fixes a temperature scale
\(T\propto N^{-2}\).  This subsection works out the resulting limit for
\(N=12L\).
For brevity in the thermodynamic formulas below, denote the
coefficient in Eq.~\eqref{eq:supp-1w-amplitude} by
\begin{equation}
 D_{\mathrm{1w}}
 :=\lim_{N=12L\to\infty}N^3\Delta_N
 =\frac{9\pi^2}{2}
 \left[-\operatorname{Li}_2\!\Big(-\frac{\sqrt3}{2}\Big)\right].
 \label{eq:supp-thermo-D1w}
\end{equation}

\emph{Energy of a many-root reversal.}  Let \(A\) be the set of reversed
heights.  On the imaginary axis at \(y=N/p\), the fused product of the
aligned branch is multiplied by
\begin{equation}
 R_{A,N}(p)=\prod_{a\in A}
 \frac{N/p-\gamma_{a,N}}{N/p+\gamma_{a,N}},
 \label{eq:supp-th-ratio}
\end{equation}
which is \(\prod_{a\in A}r_a\) in the notation of
Eq.~\eqref{eq:supp-1w-branch-order}.  Introduce the budget
\begin{equation}
 x_N=\frac1N\sum_{a\in A}\gamma_{a,N} .
 \label{eq:supp-th-budget}
\end{equation}
If \(x_N\to x\) with the largest reversed height still \(o(N)\), then
\(\log R_{A,N}(p)=-2px_N+o(1)\) at fixed \(p\), the remainder being
controlled by \(\sum_A(\gamma_{a,N}/N)^3\).  Inserting this and the
limit \(\mathcal R_N(N/p)\to a\,e^{-cp}\) of
Eq.~\eqref{eq:supp-1w-R-limit}, with
\begin{equation}
 a=\frac{\sqrt3}{2},
 \qquad
 c=\frac{2}{3\pi},
 \label{eq:supp-th-ac}
\end{equation}
into the energy integral \eqref{eq:supp-1w-exact-gap} gives
\begin{equation}
 N\big(E_A-E_{\mathrm g}\big)\longrightarrow
 e(x)=\frac1\pi\int_0^\infty
 \log\frac{1+a\,e^{-cp}}{1+a\,e^{-(c+2x)p}}\,\dd p
 =\frac{D_{\mathrm{1w}}}{\pi}\,\frac{x}{1+3\pi x} ,
 \label{eq:supp-th-energy}
\end{equation}
the integral being the same dilogarithm as in
Theorem~\ref{thm:supp-1w-gap}.  Within the controlled regime
\(\max_{a\in A}\gamma_{a,N}=o(N)\), the continuum energy functional
therefore saturates at
\begin{equation}
 e(\infty)=\frac{D_{\mathrm{1w}}}{3\pi^2}
 =1.0917189679\ldots\equiv J,
 \label{eq:supp-th-saturation}
\end{equation}
which identifies \(J/N\) as the saturation scale of this mesoscopic
reversal functional.  A bound on the width of the complete finite-size
collective band, including endpoint roots with \(\gamma_{a,N}=O(N)\) or
larger, requires a separate endpoint estimate, which we leave for future work.

Equation~\eqref{eq:supp-th-energy} ties the three regimes together.  A
single reversal of a fixed height, where
\(\gamma_{a,N}\simeq(2a-1)\pi/N\), costs
\(D_{\mathrm{1w}}(2a-1)/N^3\), which is the tower of
Theorem~\ref{thm:supp-1w-gap}.  A single reversal on a background
\(x\) costs \(e'(x)\gamma/N^2\).  Reversing \(O(N)\) heights at fixed
\(x\) costs \(e(x)/N\).  The three cannot be obtained from one another
by addition, because
\begin{equation}
 e'(x)=\frac{D_{\mathrm{1w}}/\pi}{(1+3\pi x)^2},
 \qquad
 e''(x)=-\frac{6D_{\mathrm{1w}}}{(1+3\pi x)^3}<0,
 \label{eq:supp-th-derivatives}
\end{equation}
so two reversals at finite heights have a negative connected energy
\(e''(x)\gamma_{a,N}\gamma_{b,N}/N^3\).  At fixed heights this is the
\(N^{-5}\) correction that made the tower look additive.  With \(O(N)\)
reversals it is a leading effect.

\emph{Entropy and the temperature scale.}  Label the heights by the
angle of Eq.~\eqref{eq:supp-1w-angle-map} and let \(n(\theta)\in[0,1]\)
be the local reversed fraction.  With
\begin{equation}
 \dd\mu=\frac{\dd\theta}{\pi},
 \qquad
 \int_{\pi/3}^{\pi/2}\dd\mu=\frac16,
 \qquad
 x[n]=\int\gamma(\theta)\,n(\theta)\,\dd\mu ,
 \label{eq:supp-th-measure}
\end{equation}
the number of sign words with profile \(n\) is \(\exp[N\,s_\gamma[n]]\)
to leading exponential order, where
\begin{equation}
 s_\gamma[n]=\int
 \big[-n\log n-(1-n)\log(1-n)\big]\dd\mu,
 \qquad
 \sup_n s_\gamma=\frac{\log2}{6} .
 \label{eq:supp-th-entropy}
\end{equation}
An entropy of order \(N\) competing with an energy of order \(N^{-1}\)
fixes \(T=\tau/N^2\), and the rate function is
\begin{equation}
 \psi(\tau)=\sup_{0\leq n\leq1}
 \left\{s_\gamma[n]-\frac{e(x[n])}{\tau}\right\}.
 \label{eq:supp-th-rate}
\end{equation}
At fixed \(x\) the entropy is maximized by a Lagrange multiplier
\(t\), and the stationary profile is
\begin{equation}
 n_\tau(\theta)=\frac{1}{1+e^{t_\tau\gamma(\theta)}},
 \qquad
 x_\tau=\int\frac{\gamma\,\dd\mu}{1+e^{t_\tau\gamma}},
 \qquad
 t_\tau=\frac{e'(x_\tau)}{\tau} .
 \label{eq:supp-th-saddle}
\end{equation}
The profile has the shape of a Fermi factor, but \(t_\tau\) is fixed by
the occupation itself through \(e'\), so these are not independent
fermions with a fixed set of one-particle energies.  The map
\(t\mapsto\tau\) is strictly monotone, so the solution is unique and
there is no transition at finite \(\tau\).

\emph{Result.}  With
\begin{equation}
 \mathcal V(t)=\int\gamma^2\,
 \frac{e^{t\gamma}}{(1+e^{t\gamma})^2}\,\dd\mu ,
 \label{eq:supp-th-V}
\end{equation}
the entropy and the specific heat converge uniformly on every interval
\(0<\tau_-\leq\tau\leq\tau_+<\infty\) to
\begin{equation}
 \frac{S_N}{N}\longrightarrow\frac12\log2+s_\gamma(\tau),
 \qquad
 \frac{C_N(\tau/N^2)}{N}\longrightarrow
 c_\gamma(\tau)=
 \frac{t_\tau^2\,\mathcal V(t_\tau)}
 {1+e''(x_\tau)\mathcal V(t_\tau)/\tau} ,
 \label{eq:supp-th-result}
\end{equation}
the denominator staying above \(1/3\).  The \(\tfrac12\log2\) is the
residual entropy of the one-wrap chain, and the collective band adds up
to \(\tfrac16\log2\) on top of it.  The two ends and the maximum are
\begin{equation}
 c_\gamma(\tau)\simeq\frac{\pi^2}{12D_{\mathrm{1w}}}\,\tau
 =0.0254440736\,\tau
 \quad(\tau\to0),
 \qquad
 c_\gamma(\tau)\simeq\frac{K}{\tau^2}
 \quad(\tau\to\infty),
 \label{eq:supp-th-ends}
\end{equation}
with
\begin{equation}
 K=\frac{J^2}{9\pi^2A^3}=16.2428224120\ldots,
 \qquad
 A=\frac{\Gamma(1/3)\big(1-2^{2/3}\big)\zeta(1/3)}{3\sqrt3\,\pi}
 =0.0938294824\ldots,
 \label{eq:supp-th-K}
\end{equation}
and
\begin{equation}
 \tau_{\rm p}=4.6049355402\ldots,
 \qquad
 c_\gamma(\tau_{\rm p})=0.0313367021\ldots
 \label{eq:supp-th-peak}
\end{equation}
The peak of the specific heat therefore sits at
\(T_{\rm p}=\tau_{\rm p}N^{-2}\) and is extensive,
\(C_{\rm p}=0.0313367\,N\).  A single boundary bond thus produces a
peak whose height grows with the volume.

The free branch of Sec.~\ref{subsec:supp-gap-freebranch} is separated from
this window.  At \(T=\tau/N^2\) its lowest mode obeys
\(\varepsilon_{\min}/T\simeq(c_\varepsilon/\tau)\sqrt N\) with
\(c_\varepsilon\) the coefficient of
Eq.~\eqref{eq:supp-fb-edge-scale}, so the \(\lambda\) sector is exponentially
suppressed as \(e^{-(c_\varepsilon/\tau)\sqrt N}\) in
Eq.~\eqref{eq:supp-th-result}.  Conversely, at fixed \(T\) the
collective band is saturated and the \(\lambda\) sector gives the bulk
\(T^{2/3}\) law.  The two scales are separated by a window that widens
with \(N\).  Figure~\ref{fig:supp-periodic-specific-heat} shows this at
\(N=120\): the one-wrap peak is confined to the \(T\sim N^{-2}\) window,
while at larger \(T\) the open and one-wrap curves approach the same bulk
specific heat.

\begin{figure}[!htbp]
\centering
\includegraphics[width=0.38\columnwidth]{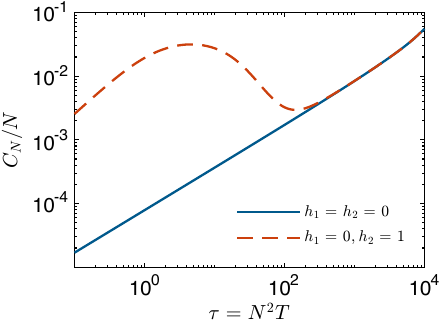}
\caption{Specific heat of the $N=120$ chain on a double-logarithmic scale.
The solid curve has $h_1=h_2=0$ and the dashed curve has $h_1=0$, $h_2=1$.
The latter shows the one-wrap collective peak, while both curves approach
the same bulk tail at large $\tau=N^2T$.}
\label{fig:supp-periodic-specific-heat}
\end{figure}

\subsection{Arbitrary chain length}
\label{subsec:supp-thermo-alllength}

The extension to all chain lengths uses the root counting of
Sec.~\ref{subsec:supp-gap-alllength} and one uniform endpoint bound, stated
at the end of this subsection.  Under that bound the same limit function
\(c_\gamma(\tau)\) is obtained along all integer lengths.

The counting of Sec.~\ref{subsec:supp-gap-alllength} replaces \(2L\) by
\(\nu_N\) and puts back the normalization \(\sqrt{c_N}\), which is what
makes the odd chains share the same amplitude.  The angular labels are
those of Eq.~\eqref{eq:supp-al-quantization}, which shift the lattice by
half a cell between even and odd \(N\) but leave the limiting measure
\(\dd\mu=\dd\theta/\pi\) and its total weight \(1/6\) unchanged.  Since
\(\dd\mu\) and \(e(x)\) are the only inputs of
Eqs.~\eqref{eq:supp-th-rate}--\eqref{eq:supp-th-result}, the limit
function is independent of the size class,
\begin{equation}
 \sup_{\tau_-\leq\tau\leq\tau_+}
 \left|\frac{C_N(\tau/N^2)}{N}-c_\gamma(\tau)\right|
 \longrightarrow0
 \qquad\text{along all integer lengths for which the uniform endpoint
 bound holds.}
 \label{eq:supp-th-alllength}
\end{equation}
In particular \(N^2T_{\rm p}\to\tau_{\rm p}\) and
\(C_N(T_{\rm p})/N\to c_\gamma(\tau_{\rm p})\) with the constants of
Eq.~\eqref{eq:supp-th-peak}.

The freezing of the \(\lambda\) sector has to be checked class by
class, since the residue of \(N\) modulo three changes the endpoint
equation.  The bound needed is that every nonzero one-particle energy
satisfies \(\varepsilon\geq0.1\,N^{-3/2}\) in every physical sector.  It
is supported by the three endpoint laws and by the root-sum estimates in
the corresponding classes.

\subsection{The periodic chain}
\label{subsec:supp-thermo-periodic}

Finite-size continuation suggests that the periodic chain may possess a
collective band with the same sign-word combinatorics and temperature scale,
but with a distinct energy functional.  The exact energy kernel keeps the
same form.  Write \(\boldsymbol\eta\) for the sign word of the second
level and \(\boldsymbol\eta_0\) for its ground configuration.  Since
\(Q_{N,\boldsymbol\eta}(u)\) is an even polynomial, we write
\begin{equation}
 Q_{N,\boldsymbol\eta}(u)
 =\widehat Q_{N,\boldsymbol\eta}(u^2),
 \label{eq:supp-th-per-Qhat}
\end{equation}
where \(\widehat Q_{N,\boldsymbol\eta}(t)\) is a polynomial in \(t=u^2\).
Then, along the imaginary axis \(u=\ii N/p\),
\begin{equation}
 e_N(\boldsymbol\eta)
 :=N\big(E_{N,\boldsymbol\eta}-E_{N,\boldsymbol\eta_0}\big)
 =\frac1\pi\int_0^\infty
 \log\frac{\widehat Q_{N,\boldsymbol\eta_0}(-N^2/p^2)}
 {\widehat Q_{N,\boldsymbol\eta}(-N^2/p^2)}\,\dd p .
 \label{eq:supp-th-per-kernel}
\end{equation}
The difference from Eq.~\eqref{eq:supp-th-energy} is that each
\(\widehat Q_{N,\boldsymbol\eta}\) must now be obtained from the fixed
point of the loop~\eqref{eq:supp-per-loop}, so the periodic feedback
produces a different functional.  This defines a formal periodic energy
functional \(\mathcal E_{\rm PBC}[n]\); its explicit closed form, and the
existence and uniqueness of the corresponding fixed point for every
macroscopic profile, are left for future work.  Conditional on these properties,
the rate function is
\begin{equation}
 \psi_{\rm PBC}(\tau)=\sup_n
 \left\{s_\gamma[n]-\frac{\mathcal E_{\rm PBC}[n]}{\tau}\right\},
 \label{eq:supp-th-per-rate}
\end{equation}
and, if the periodic fixed-point branches converge uniformly on every interval
\(0<\tau_-\leq\tau\leq\tau_+<\infty\) and the states outside the collective band are
uniformly suppressed, then
\begin{equation}
 \frac{C_N(\tau/N^2)}{N}\longrightarrow
 c_{\rm PBC}(\tau)=2\tau\psi_{\rm PBC}'(\tau)
 +\tau^2\psi_{\rm PBC}''(\tau) .
 \label{eq:supp-th-per-C}
\end{equation}

The finite-size relations already determine two robust thermodynamic
consequences.  First, the band contains \(2^{\nu_N}\) sign words, giving
\begin{equation}
 \int_0^\infty\frac{c_{\rm PBC}(\tau)}{\tau}\,\dd\tau
 =\frac{\log2}{6} ,
 \label{eq:supp-th-per-sumrule}
\end{equation}
which fixes a nonzero integrated response.  Second, if two fixed values
\(0<\tau_-<\tau_+<\infty\) capture half of that entropy, then
\begin{equation}
 c_{\rm PBC}(\tau_{\rm p})\geq
 \frac{\log2}{12\log(\tau_+/\tau_-)}>0
 \label{eq:supp-th-per-peak}
\end{equation}
for some \(\tau_{\rm p}\) in that interval.  Under the fixed-point and
band-counting assumptions above, the periodic chain therefore has an
extensive collective peak at \(T\propto N^{-2}\).  Its detailed position
and shape are determined by the periodic energy functional.

\section{Further examples}
\label{sec:supp-examples}

We finally illustrate the range of the construction beyond the three-site
FFD Hamiltonian.  The examples probe three directions: higher-rank
auxiliary spaces, irreducible local representations, and face-type Lax
operators.  A scalar auxiliary closure gives the hidden free-fermion
reduction, whereas retaining the coupled fusion levels gives nested or
interacting spectral equations.  We first consider the arbitrary-\(k\)
Perk--Schultz realization, then an FFD logarithmic charge with claws, an
irreducible \(D=4\) representation, and the Fendley--Pozsgay model.

\subsection{The arbitrary-\texorpdfstring{$k$}{k} non-Hermitian
Perk--Schultz chain}
\label{subsec:supp-PSk}

The generic twisted and inhomogeneous Perk--Schultz chain is already
solved by the nested algebraic Bethe ansatz (ABA) and its fused transfer-matrix
relations~\cite{KulishReshetikhinSklyanin1981,KirillovReshetikhin1987,
KunibaNakanishiSuzuki1994,WangYangCaoShi2015}.  Our question here is
different: we reconstruct the same spectrum directly from the reflected
fusion hierarchy of Secs.~\ref{sec:supp-kall-fusion} and
\ref{sec:supp-twisted-closure}, without using a Bethe reference state.
This places the arbitrary-$k$ Perk--Schultz family inside the same
ABA-free polynomial framework as the HFF realizations studied above and
makes the root-of-unity and Jordan-block limits part of the same derivation.

Take the local Lax operator to be
\begin{equation}
 L_{0j}^{[1]}(u)=R_{0j}^{(k)}(u,\xi_j),
 \label{eq:supp-PSk-L}
\end{equation}
with the $R$ matrix of Eq.~\eqref{eq:supp-kall-R}.  For a diagonal auxiliary
boundary twist $\Omega=\operatorname{diag}(\omega_1,\ldots,\omega_k)$,
let $\Omega^{[a]}$ denote the twist induced on $V^{[a]}$ and define
\begin{equation}
 t_{\Omega}^{[a]}(u)=\operatorname{tr}_{V^{[a]}}
 \left[\Omega^{[a]}T^{[a]}(u)\right],
 \qquad a=1,\ldots,k,
 \label{eq:supp-PSk-fused-transfer}
\end{equation}
where $T^{[a]}$ is the fused monodromy constructed by the recursive
embeddings of Sec.~\ref{subsec:supp-finite-hierarchy}.  In the basis labelled by
$ I=\{i_1<\cdots<i_a\}$, the induced twist has eigenvalue
$\omega_I=\prod_{i\in I}\omega_i$.  Diagonal twists preserve every reflected
fusion channel, so the paired-trace proof applies without modification.

Writing $\Lambda_a(u)$ for a joint eigenvalue of $t_{\Omega}^{[a]}(u)$,
we obtain the finite reflected hierarchy
\begin{equation}
\begin{aligned}
\Lambda_a(u)\Lambda_a(-u)
={}&\Phi_{a,N}^{\Omega}(u^2)
+\sum_{b=1}^{\min(a,k-a)}(-1)^{b-1}\\[-1mm]
&\times\big[\Lambda_{a-b}(u)\Lambda_{a+b}(-u)
+\Lambda_{a-b}(-u)\Lambda_{a+b}(u)\big],
\end{aligned}
\label{eq:supp-PSk-fusion-spectrum}
\end{equation}
with $\Lambda_0(u)=1$.  The scalar source is
\begin{equation}
 \Phi_{a,N}^{\Omega}(u^2)=
 \operatorname{tr}_{\mathcal M_a}\!\left[
 (\Omega^{[a]})^2\,\mathsf S_{a,N}(u^2)\cdots
 \mathsf S_{a,1}(u^2)\right].
 \label{eq:supp-PSk-source}
\end{equation}
For the fundamental level one finds directly
$\mathsf S_{1,j}(u^2)=(\xi_j^2-u^2)\mathbf1_k$, and hence
\begin{equation}
 \Phi_{1,N}^{\Omega}(u^2)=
 \left(\sum_{a=1}^{k}\omega_a^2\right)
 \prod_{j=1}^{N}(\xi_j^2-u^2).
 \label{eq:supp-PSk-source1}
\end{equation}
The higher $\mathsf S_{a,j}$ are obtained recursively from the locked blocks
of the fused local products.

The top level has a one-dimensional auxiliary space.  We denote its central eigenvalue by
$d_\chi(u)$, where $\chi$ labels the corresponding color/central sector,
and set $\Lambda_k(u)=d_\chi(u)$.  Keeping $d_\chi$ explicit is essential at
$q=\ii$, where the quantum determinant may carry a nontrivial color-diagonal
central factor.  The hierarchy therefore closes on the $k-1$ unknown
polynomials $\Lambda_1,\ldots,\Lambda_{k-1}$ in each $\chi$ sector.
Their degrees, normalization at the regular point, and large-$u$ coefficients,
together with Eq.~\eqref{eq:supp-PSk-fusion-spectrum}, give a finite polynomial
system.  Equivalently, at every zero of a selected polynomial the right-hand
side must vanish.  The energy of the homogeneous untwisted chain is read off
from the first level as
\begin{equation}
 E=2\xi\,\frac{\Lambda_1'(\xi)}{\Lambda_1(\xi)}-N,
 \qquad (\Omega=\mathbf1,\ \xi_j=\xi).
 \label{eq:supp-PSk-fusion-energy}
\end{equation}
This gives an ABA-free finite polynomial formulation of the joint spectrum.

\paragraph{The $k=4$ specialization.}
For $k=4$ the nontrivial levels are $a=1,2,3$, and the hierarchy reads
\begin{align}
\Lambda_1(u)\Lambda_1(-u)
 &=\Phi_1^\Omega(u^2)+\Lambda_2(u)+\Lambda_2(-u),
 \label{eq:supp-PS4-fusion1}\\
\Lambda_2(u)\Lambda_2(-u)
 &=\Phi_2^\Omega(u^2)+\Lambda_1(u)\Lambda_3(-u)
 +\Lambda_1(-u)\Lambda_3(u)
 -\Lambda_4(u)-\Lambda_4(-u),
 \label{eq:supp-PS4-fusion2}\\
\Lambda_3(u)\Lambda_3(-u)
 &=\Phi_3^\Omega(u^2)+\Lambda_2(u)\Lambda_4(-u)
 +\Lambda_2(-u)\Lambda_4(u).
 \label{eq:supp-PS4-fusion3}
\end{align}
Here $\Lambda_4=d_\chi$ is the central top-level eigenvalue.  The first
source is
$\Phi_1^\Omega=(\omega_1^2+\omega_2^2+\omega_3^2+\omega_4^2)
\prod_j(\xi_j^2-u^2)$.  The sources $\Phi_2^\Omega$ and $\Phi_3^\Omega$ follow
from the same locked-block recursion.  Equations~\eqref{eq:supp-PS4-fusion1}--
\eqref{eq:supp-PS4-fusion3}, together with the polynomial degree and
regularity constraints,
provide a finite algebraic formulation for the $k=4$ chain.

For the $k=4$ hierarchy we parameterize the three nontrivial joint
transfer-matrix eigenvalues by
\begin{equation}
\Lambda_1(u)=c_1\prod_{j=1}^{d_1}\left(1-\frac{u}{\lambda_j}\right),\qquad
\Lambda_2(u)=c_2\prod_{\alpha=1}^{d_2}\left(1-\frac{u}{\gamma^{(1)}_\alpha}\right),\qquad
\Lambda_3(u)=c_3\prod_{\beta=1}^{d_3}\left(1-\frac{u}{\gamma^{(2)}_\beta}\right),
\label{eq:supp-PS4-root-param}
\end{equation}
where the generic degrees satisfy $d_a\leq aN$ and are fixed by the
large-$u$ normalization.  Degree-deficient branches are retained.  

\paragraph{Weight, regularity, and completeness constraints.}
The fused equations are supplemented by three independent sets of
constraints.  First, the Perk--Schultz chain preserves the diagonal color
charges
\begin{equation}
 {\cal Q}_a=\sum_{j=1}^{N}E_{aa}^{(j)},\qquad
 [H_{\rm PS}^{(k)},{\cal Q}_a]=0,\qquad
 \sum_{a=1}^{k}{\cal Q}_a=N\mathbf1 .
 \label{eq:supp-PSk-weight-charges}
\end{equation}
We therefore solve the fused polynomial system separately in each weight
sector
\[
 {\bf n}=(n_1,\ldots,n_k),\qquad
 n_a\in{\mathbb Z}_{\geq0},\qquad
 \sum_{a=1}^{k}n_a=N ,
\]
whose Hilbert-space dimension is
\[
 \dim{\cal H}_{\bf n}=\frac{N!}{n_1!\cdots n_k!}.
\]
The weight sector fixes the corresponding asymptotic coefficients of the
fused eigenvalues and excludes solutions belonging to a different color
content.

Second, regularity at \(u=\xi\) fixes the translation quantum numbers.  For
the homogeneous untwisted chain,
\begin{equation}
 t^{[1]}(\xi)=(2\xi)^N{\cal U}^{-1},
 \qquad
 \Lambda_1(\xi)=(2\xi)^N e^{-\ii P},
 \label{eq:supp-PSk-regular-point}
\end{equation}
where \({\cal U}\) is the one-site translation operator.  The fused regular
values \(\Lambda_a(\xi)\) are fixed analogously by the translation eigenvalues
in the fused auxiliary representations.  These conditions determine the
remaining normalization constants \(c_a\) and remove spurious polynomial
branches.

Finally, the fused relations are imposed as polynomial identities after
substitution of Eq.~\eqref{eq:supp-PS4-root-param}.  Completeness is tested
sector by sector: the number of admissible solutions, counted with
algebraic multiplicity, must reproduce
\[
 \sum_{\bf n}\dim{\cal H}_{\bf n}=k^N .
\]
At the root-of-unity point, degree-deficient solutions and coalescing roots
are retained by continuation from generic twists and inhomogeneities. 

\subsection{Periodic FFD logarithmic charge with claws}
\label{subsec:supp-claw}

Let
\begin{equation}
 h_j=\sigma_j^z\sigma_{j+1}^z\sigma_{j+2}^x,
 \qquad j=1,\ldots,N,
 \qquad h_{j+N}=h_j ,
 \label{eq:supp-claw-h}
\end{equation}
and expand the same fundamental periodic FFD transfer matrix used in the
Letter as
\begin{equation}
 t_{\mathrm F}^{[1]}(u)
 =\1-uQ_1+u^2Q_2-u^3Q_3+\cdots ,
 \qquad Q_1=H_{\mathrm F}=\sum_{j=1}^{N}h_j .
 \label{eq:supp-claw-transfer-expansion}
\end{equation}
Its logarithmic charges are defined by
\begin{equation}
 \log t_{\mathrm F}^{[1]}(u)
 =-\sum_{r\geq1}\frac{u^r}{r}\mathcal J_r .
 \label{eq:supp-claw-logarithmic-charges}
\end{equation}
The first nontrivial odd charge is
\begin{equation}
 \mathcal J_3
 =Q_1^3-3Q_1Q_2+3Q_3
 =-\frac12
 \left.\partial_u^3\log t_{\mathrm F}^{[1]}(u)\right|_{u=0}.
 \label{eq:supp-claw-J3-Newton}
\end{equation}
Using \(h_j^2=\1\), the anticommutation of generators at separations one
and two, and commutativity at larger separations gives the local periodic
form
\begin{equation}
\begin{aligned}
 \mathcal J_3
 ={}&5\sum_{j=1}^{N}h_j
 \\
 &+\sum_{j=1}^{N}\left(
 h_jh_{j+1}h_{j+3}
 {}+h_jh_{j+2}h_{j+3}
 {}+h_jh_{j+2}h_{j+4}
 \right).
\end{aligned}
\label{eq:supp-claw-J3-local}
\end{equation}
All products in Eq.~\eqref{eq:supp-claw-J3-local} are ordered cyclically.
They are Hermitian, and their local frustration graph contains the claw
substructures of Ref.~\cite{FukaiVonaPozsgay2025}.  A convenient
one-parameter stitching with unit coefficient for the fundamental terms is
\begin{equation}
\begin{aligned}
 H_{\rm claw}(g)
 &=
 \mathcal J_1+g(\mathcal J_3-5\mathcal J_1)
 \\
 &=
 \sum_{j=1}^{N}h_j
 +g\sum_{j=1}^{N}\left(
 h_jh_{j+1}h_{j+3}
 {}+h_jh_{j+2}h_{j+3}
 {}+h_jh_{j+2}h_{j+4}
 \right).
\end{aligned}
\label{eq:supp-claw-stitched-H}
\end{equation}
The charge is generated by \(t_{\mathrm F}^{[1]}(u)\), so it commutes with
the complete fused transfer family and uses the same periodic functional
relations and root selection rules as \(H_{\mathrm F}\).  The claw model
therefore provides a higher conserved charge within the same interacting
periodic FFD realization.

Indeed, on a physical joint transfer character,
\begin{equation}
 \Lambda_\chi^{[1]}(u)
 =\prod_{j=1}^{d_N}\left(1-\frac{u}{\lambda_j}\right)
 \quad\Longrightarrow\quad
 E_{\mathcal J_r}=\sum_{j=1}^{d_N}\lambda_j^{-r},
 \qquad d_N=\left\lfloor\frac N3\right\rfloor .
 \label{eq:supp-claw-logarithmic-spectrum}
\end{equation}
Consequently, the strict periodic solution of the pure claw charge is
\begin{equation}
 H_{\rm claw}=\mathcal J_3,
 \qquad
 E_{\rm claw}=\sum_{j=1}^{d_N}\lambda_j^{-3},
 \label{eq:supp-claw-periodic-solution}
\end{equation}
where the \(\lambda_j\), the fused roots, and their physical selection rules
are precisely those of the periodic FFD solution.  For the stitched
Hamiltonian in Eq.~\eqref{eq:supp-claw-stitched-H}, the same roots give
\begin{equation}
 E_{\rm claw}(g)
 =\sum_{j=1}^{d_N}
 \left[(1-5g)\lambda_j^{-1}+g\lambda_j^{-3}\right].
 \label{eq:supp-claw-stitched-energy}
\end{equation}

\subsection{An irreducible \texorpdfstring{$D=4$}{D=4} range-five model}
\label{subsec:supp-D4}

We next construct a different solution of the same RLL equation, with an
irreducible four-dimensional quantum space and the same three-dimensional
auxiliary space.

Let the local quantum space be
\(\mathcal H_j\simeq\mathbb C^2\otimes\mathbb C^2\), and introduce
\begin{equation}
 P=\sigma^z\otimes\1,\qquad
 D_2=\sigma^x\otimes\1,\qquad
 Q=\1\otimes\sigma^z,\qquad
 D_3=\1\otimes\sigma^y,\qquad
 C=\sigma^y\otimes\sigma^x .
 \label{eq:supp-D4-generators}
\end{equation}
The first four matrices already generate
\(\operatorname{End}(\mathbb C^4)\), so the physical representation is
irreducible.  The \(k=3\) Lax operator
\begin{equation}
{
 L_j^{[1]}(u)=
 \begin{pmatrix}
 \rho_{1,j}\1_j&0&-u\alpha_j C_j\\
 z_{1,j}P_j&\rho_{2,j}D_{2,j}&0\\
 0&z_{2,j}Q_j&\rho_{3,j}D_{3,j}
 \end{pmatrix}}
\label{eq:supp-D4-L}
\end{equation}
satisfies the RLL relation
\eqref{eq:supp-framework-RLL} with the same three-state
Perk--Schultz \(R\) matrix.

A model using both diagonal channels is obtained from the two-site unit
cell
\begin{equation}
\begin{array}{c|ccccc}
 &\rho_1&\rho_2&\rho_3&z_1&z_2\\
\hline
 S&1&1&1&0&0\\
 T&1&0&0&1&1
\end{array},
\qquad \alpha_S=\alpha_T=1 .
\label{eq:supp-D4-cells}
\end{equation}
The two local tensors are therefore
\begin{equation}
 L_S(u)=
 \begin{pmatrix}
 \1&0&-uC\\
 0&D_2&0\\
 0&0&D_3
 \end{pmatrix},
 \qquad
 L_T(u)=
 \begin{pmatrix}
 \1&0&-uC\\
 P&0&0\\
 0&Q&0
 \end{pmatrix}.
 \label{eq:supp-D4-ST-L}
\end{equation}
Place \(S\) on odd sites and \(T\) on even sites, and choose the fixed
auxiliary color \(|1\rangle\):
\begin{equation}
 \tau_4(u)
 =
 \langle1|
 L_1^{[1]}(u)L_2^{[1]}(u)\cdots L_N^{[1]}(u)
 |1\rangle,
 \qquad
 H_4=-\tau_4'(0).
 \label{eq:supp-D4-transfer}
\end{equation}
There are two possible closed auxiliary paths containing one
\((-uC)\) vertex.  They give
\begin{align}
 h_{2n-1}
 &=
 C_{2n-1}Q_{2n}D_{2,2n+1}P_{2n+2},
 \label{eq:supp-D4-h-odd}\\
 h_{2n}
 &=
 C_{2n}D_{3,2n+1}Q_{2n+2}
 D_{2,2n+3}P_{2n+4},
 \label{eq:supp-D4-h-even}
\end{align}
and hence
\begin{equation}
{
 H_4
 =
 \sum_{\substack{n\geq1\\2n+2\leq N}}h_{2n-1}
 +
 \sum_{\substack{n\geq1\\2n+4\leq N}}h_{2n}.}
\label{eq:supp-D4-H}
\end{equation}
The two families have ranges four and five, respectively.  Because both
\(D_2\) and \(D_3\) occur, this representation combines the sitewise
sectors of the three-site realization into a single irreducible model.

The boundary pair \(\langle11|\cdots|11\rangle\) selects the scalar
diagonal channel of the reflected product derived above.  Therefore
\begin{equation}
 [\tau_4(u),\tau_4(v)]=0,\qquad
 \tau_4(u)\tau_4(-u)=P_N^{(4)}(u^2)\1 .
 \label{eq:supp-D4-inversion}
\end{equation}
For orientation, the first even-length scalar polynomials are
\begin{equation}
 P_4^{(4)}(x)=1-x,\qquad
 P_6^{(4)}(x)=1-3x,\qquad
 P_8^{(4)}(x)=1-5x+x^2,\qquad
 P_{10}^{(4)}(x)=1-7x+5x^2 .
 \label{eq:supp-D4-P-examples}
\end{equation}
Writing
\begin{equation}
 P_N^{(4)}(u^2)
 =\prod_{a=1}^{d_4}(1-u^2\epsilon_{a,N}^2)
 \label{eq:supp-D4-P-factor}
\end{equation}
gives the joint eigenvalues
\begin{equation}
 \Lambda_{\boldsymbol\eta}^{(4)}(u)
 =\prod_{a=1}^{d_4}(1-u\eta_a\epsilon_{a,N}),
 \qquad
 E_{\boldsymbol\eta}=\sum_{a=1}^{d_4}\eta_a\epsilon_{a,N},
 \qquad \eta_a=\pm1 .
\label{eq:supp-D4-solution}
\end{equation}
We now close the same staggered row periodically.  Let \(N=2M\), read all
physical indices modulo \(N\), and define
\begin{equation}
 t_{4,\mathrm{per}}^{[a]}(u)
 =\operatorname{tr}_{V^{[a]}}
 L_{1}^{[a]}(u)L_{2}^{[a]}(u)\cdots L_{N}^{[a]}(u),
 \qquad a=1,2,3.
 \label{eq:supp-D4-periodic-transfers}
\end{equation}
The fundamental trace is regular and generates the periodic Hamiltonian,
\begin{equation}
 t_{4,\mathrm{per}}^{[1]}(0)=\1,\qquad
 H_{4}^{\mathrm{per}}
 =-\left.\partial_u t_{4,\mathrm{per}}^{[1]}(u)\right|_{u=0}
 =\sum_{n=1}^{M}\bigl(h_{2n-1}+h_{2n}\bigr),
 \label{eq:supp-D4-periodic-H}
\end{equation}
where Eqs.~\eqref{eq:supp-D4-h-odd} and
\eqref{eq:supp-D4-h-even} are understood cyclically.  Coincident wrapped
terms at the shortest sizes are combined.

The first fused local tensors obtained from
Eq.~\eqref{eq:supp-higher-La} are
\begin{equation}
 L_{4,S}^{[2]}(u)=
 \begin{pmatrix}
 D_2&0&\ii u\,\sigma^z\otimes\sigma^x\\
 0&D_3&0\\
 0&0&\sigma^x\otimes\sigma^y
 \end{pmatrix},
 \qquad
 L_{4,T}^{[2]}(u)=
 \begin{pmatrix}
 0&-\ii u\,\sigma^x\otimes\sigma^x&0\\
 Q&0&\ii u\,\sigma^y\otimes\sigma^y\\
 \sigma^z\otimes\sigma^z&0&0
 \end{pmatrix}.
 \label{eq:supp-D4-periodic-L2}
\end{equation}
The one-dimensional top channel is especially simple:
\begin{equation}
 L_{4,S}^{[3]}(u)=\sigma^x\otimes\sigma^y,\qquad
 L_{4,T}^{[3]}(u)=-u\sigma^x\otimes\sigma^y,\qquad
 t_{4,\mathrm{per}}^{[3]}(u)=(-u)^M\mathcal W_N,
 \label{eq:supp-D4-periodic-top}
\end{equation}
where
\begin{equation}
 \mathcal W_N=\prod_{j=1}^{N}(\sigma^x\otimes\sigma^y)_j,\qquad
 \mathcal W_N^2=\1.
 \label{eq:supp-D4-periodic-closing-operator}
\end{equation}
Thus \(\mathcal W_N\) is the global closing operator of the periodic fusion
hierarchy.

The two scalar returns can be evaluated over one \(ST\) cell.  In a common
locked-channel gauge they are generated by
\begin{equation}
 \mathsf A_1^{(4)}(x)=
 \begin{pmatrix}
 1&0&-2x\\
 1&0&-x\\
 0&1&0
 \end{pmatrix},\qquad
 \mathsf A_2^{(4)}(x)=
 \begin{pmatrix}
 0&x&0\\
 1&0&2x\\
 1&0&x
 \end{pmatrix},\qquad x=u^2,
 \label{eq:supp-D4-periodic-scalar-matrices}
\end{equation}
so that
\begin{equation}
 \Phi_{a,N}^{(4)}(u^2)
 =\operatorname{tr}\left[\mathsf A_a^{(4)}(u^2)^M\right],
 \qquad a=1,2.
 \label{eq:supp-D4-periodic-sources}
\end{equation}
Equivalently, the two finite scalar recursions are fixed by
\begin{align}
 \det\left[z\1-\mathsf A_1^{(4)}(x)\right]
 &=z^3-z^2+xz+x,
 \nonumber\\
 \det\left[z\1-\mathsf A_2^{(4)}(x)\right]
 &=z^3-xz^2-xz-x^2.
 \label{eq:supp-D4-periodic-spectral-curves}
\end{align}

Substitution into the finite \(k=3\) hierarchy gives the closed operator
relations
\begin{align}
 t_{4,\mathrm{per}}^{[1]}(u)t_{4,\mathrm{per}}^{[1]}(-u)
 ={}&\Phi_{1,N}^{(4)}(u^2)\1
 +t_{4,\mathrm{per}}^{[2]}(u)
 +t_{4,\mathrm{per}}^{[2]}(-u),
 \label{eq:supp-D4-periodic-first-fusion}\\
 t_{4,\mathrm{per}}^{[2]}(u)t_{4,\mathrm{per}}^{[2]}(-u)
 ={}&\Phi_{2,N}^{(4)}(u^2)\1
 +u^M\mathcal W_Nt_{4,\mathrm{per}}^{[1]}(u)
 \nonumber\\
 &+(-u)^M\mathcal W_Nt_{4,\mathrm{per}}^{[1]}(-u).
 \label{eq:supp-D4-periodic-second-fusion}
\end{align}
All three transfer matrices commute.  On a simultaneous eigenstate with
\(\mathcal W_N=\chi=\pm1\), their two nontrivial eigenvalue polynomials
satisfy
\begin{align}
 \Lambda_{\chi}^{[1]}(u)\Lambda_{\chi}^{[1]}(-u)
 ={}&\Phi_{1,N}^{(4)}(u^2)
 +\Lambda_{\chi}^{[2]}(u)+\Lambda_{\chi}^{[2]}(-u),
 \label{eq:supp-D4-periodic-eigenvalue-1}\\
 \Lambda_{\chi}^{[2]}(u)\Lambda_{\chi}^{[2]}(-u)
 ={}&\Phi_{2,N}^{(4)}(u^2)
 +\chi u^M\left[
 \Lambda_{\chi}^{[1]}(u)
 +(-1)^M\Lambda_{\chi}^{[1]}(-u)
 \right].
 \label{eq:supp-D4-periodic-eigenvalue-2}
\end{align}
The loop lengths in Eq.~\eqref{eq:supp-D4-periodic-L2} fix the polynomial
supports.  In the same parametrization as the FFD chain, write
\begin{equation}
 \begin{aligned}
 \Lambda_{\chi}^{[1]}(u)
 &=\prod_{j=1}^{d_{4,N}}\left(1-\frac{u}{\lambda_j}\right),\\
 \Lambda_{\chi}^{[2]}(u)
 &=\kappa_\chi u^{s_{4,N}}
 \prod_{\alpha=1}^{n_{4,N}}
 \left(1-\frac{u}{\gamma_\alpha}\right),\\
 d_{4,N}&=\left\lfloor\frac N4\right\rfloor,
 \qquad s_{4,N}=\left\lceil\frac N4\right\rceil,
 \qquad n_{4,N}=\frac N2-s_{4,N}.
 \end{aligned}
 \label{eq:supp-D4-periodic-polynomials}
\end{equation}
Matching all powers of \(u\) in
Eqs.~\eqref{eq:supp-D4-periodic-eigenvalue-1} and
\eqref{eq:supp-D4-periodic-eigenvalue-2} is therefore a finite polynomial
system for the periodic spectrum.  Finally,
\begin{equation}
 E=\sum_{j=1}^{d_{4,N}}\frac1{\lambda_j}
 \label{eq:supp-D4-periodic-energy}
\end{equation}
gives the energy of \(H_4^{\mathrm{per}}\).  The fixed-boundary spectrum is
the HFF scalar closure of this irreducible \(D=4\) representation.  The
periodic construction retains the coupled fusion levels and gives the
corresponding interacting polynomial system.

\subsection{The Fendley--Pozsgay face model}
\label{subsec:supp-FP}

The Perk--Schultz \(R\) matrix also admits
face-type (dynamical) extensions~\cite{Felder1994,EtingofVarchenko1998}.
This enlarges the class of integrable chains by encoding the local face
configuration through an invertible operator \(G_j(x)\) acting on the quantum
space.  The resulting local intertwining relation is
\begin{equation}
 R_{12}(u,v)L^{\rm face}_{1j}(u;x)G_j(x)^{-1}L^{\rm face}_{2j}(v;x)
 =L^{\rm face}_{2j}(v;x)G_j(x)^{-1}L^{\rm face}_{1j}(u;x)R_{12}(u,v).
 \label{eq:supp-FP-RLGL-transition}
\end{equation}
Here \(R_{12}\) is the Perk--Schultz \(R\) matrix defined in
Eq.~\eqref{eq:HFF-R}.  The face Lax operator is related to an ordinary Lax
operator by
\begin{equation}
 L^{\rm face}_{aj}(u;x)=G_j(x)L_{aj}(u),
 \label{eq:supp-FP-face-L-general}
\end{equation}
where \(L_{aj}(u)\) satisfies the ordinary relation
\begin{equation}
 R_{12}(u,v)L_{1j}(u)L_{2j}(v)
 =L_{2j}(v)L_{1j}(u)R_{12}(u,v).
 \label{eq:supp-FP-RLL-general}
\end{equation}
The operators \(G_j(x)\) act only on the quantum space and generally do not
commute for neighboring sites.  Their composition is constrained by
\begin{equation}
 G_j(x)G_{j+1}(z)G_j(y)
 =
 G_{j+1}(y)G_j(z)G_{j+1}(x),
 \label{eq:supp-FP-G-consistency-transition}
\end{equation}
where \(z\) is the intermediate face parameter.  Together with the local
relation above, this allows the color lines to be propagated through the
whole chain.  Writing
\(\mathbb T(u)\) for the resulting face monodromy matrix, the periodic
transfer matrix and its Hamiltonian are defined by
\begin{equation}
 t^{\rm face}(u)=\operatorname{tr}_{\rm aux}^{\rm face}\mathbb T(u),
 \qquad
 H=-\left.\partial_u t^{\rm face}(u)\right|_{u=0},
 \label{eq:supp-FP-general-face-transfer}
\end{equation}
 up to an additive normalization.  The transfer matrix \(t^{\rm face}(u)\)
 can also serve as a parent transfer family for HFF Hamiltonians.  We illustrate this
 construction with the Fendley--Pozsgay realization.  Its closed two-level
 polynomial hierarchy
is derived below from the equivalent doubled-vertex realization.

For the Fendley--Pozsgay model~\cite{FendleyPozsgay2024}, define
\begin{equation}
 A_j=\sigma^x_{j-1}\sigma^x_j\sigma^z_{j+1},\qquad
 B_j=b^2\sigma^z_{j-2}\sigma^y_{j-1}\sigma^y_j,\qquad
 C_j=b\sigma^z_{j-1}\sigma^z_{j+1},
 \qquad b\neq0 .
 \label{eq:supp-FP-tiles}
\end{equation}
With all site indices understood periodically, the Hamiltonian is
\begin{equation}
 H_{\rm FP}
 =
 \sum_j
 \left(
 \sigma^x_j\sigma^x_{j+1}\sigma^z_{j+2}
 +b\sigma^z_{j-1}\sigma^z_{j+1}
 +b^2\sigma^z_{j-1}\sigma^y_j\sigma^y_{j+1}
 \right),
\label{eq:supp-FP-H}
\end{equation}
Here we give an RLL-based integrable construction of this model together
with a complete exact solution of its periodic spectrum.
The local identity introduced in Ref.~\cite{FendleyPozsgay2024},
\begin{equation}
 A_jB_j=-C_{j-1}C_j
 \label{eq:supp-FP-local-identity}
\end{equation}
allows the local terms to be factorized.  Introduce operators
\(\widehat p_j,\widehat q_j\) with
\begin{equation}
\begin{gathered}
 \widehat p_j^2=\widehat q_j^2=\1,\qquad
 [\widehat p_j,\widehat q_j]
 =
 [\widehat p_j,\widehat p_{j+1}]
 =
 [\widehat q_j,\widehat q_{j+1}]=0,\\
 \{\widehat p_j,\widehat q_{j+1}\}
 =
 \{\widehat q_j,\widehat p_{j+1}\}=0 ,
\end{gathered}
\label{eq:supp-FP-half-edge-algebra}
\end{equation}
Choose them such that
\begin{equation}
 A_j=\widehat p_j\widehat p_{j+1},
 \qquad
 b^{-2}B_j=\widehat q_j\widehat q_{j+1},
 \qquad
 b^{-1}C_j=\widehat p_j\widehat q_j.
 \label{eq:supp-FP-half-edge}
\end{equation}
For the physical normalization set
\(p_j=\widehat p_j\), \(q_j=b\widehat q_j\), and \(c_j=p_jq_j\).
The corresponding Lax operator is
\begin{equation}
 L_j^{\rm FP}(u)
 =
 \begin{pmatrix}
 \1&0&-u c_jc_{j+1}\\
 p_j&q_{j+1}&0\\
 0&q_j&p_{j+1}
 \end{pmatrix},
\label{eq:supp-FP-L}
\end{equation}
This \(L_j^{\rm FP}(u)\) satisfies the ordinary RLL relation with the
Perk--Schultz \(R\) matrix in Eq.~\eqref{eq:HFF-R} for \(k=3\).

The two local face configurations associated with
Eq.~\eqref{eq:supp-FP-local-identity} are encoded by
\begin{equation}
 W_j=
 \frac{
 \1+\widehat p_j\widehat p_{j+1}+\widehat q_j\widehat q_{j+1}
 +\widehat p_j\widehat q_j\widehat p_{j+1}\widehat q_{j+1}}{2},
 \qquad
 e_j=\frac{\1-W_j}{2}
 =
 \frac{(\1-\widehat p_j\widehat p_{j+1})
 (\1-\widehat q_j\widehat q_{j+1})}{4}.
 \label{eq:supp-FP-projector}
\end{equation}
Here \(W_j^2=\1\) and \(e_j^2=e_j\).  For \(x\neq-1\), define
\begin{equation}
 G_j(x)=\1+xe_j,\qquad
 G_j(x)^{-1}=\1-\frac{x}{1+x}e_j.
 \label{eq:supp-FP-G}
\end{equation}

The face-type consistency equation itself appears when two neighboring
face configurations overlap.  Put
\begin{equation}
 D_j=\widehat p_j\widehat p_{j+1}\widehat q_j\widehat q_{j+1},
 \qquad D_j^2=\1,
 \label{eq:supp-FP-D}
\end{equation}
and define the operator-valued composition law
\begin{align}
 z(D;x,y)
 &=
 z_-(x,y)\frac{\1-D}{2}
 +z_+(x,y)\frac{\1+D}{2},
 \label{eq:supp-FP-zD}\\
 z_-(x,y)&=x+y+xy,
 \\
 z_+(x,y)&=\frac{2(x+y+xy)}{2-xy}.
 \label{eq:supp-FP-zpm}
\end{align}
The face-type Yang--Baxter equation is
\begin{equation}
{
\begin{aligned}
 &(\1+xe_j)
 [\1+z(D_j;x,y)e_{j+1}]
 (\1+ye_j)
 =
 (\1+ye_{j+1})
 [\1+z(D_{j+1};x,y)e_j]
 (\1+xe_{j+1}).
\end{aligned}}
\label{eq:supp-FP-face-YBE}
\end{equation}
Thus the intermediate parameter is \(z_-(x,y)\) in the \(D=-1\)
sector and \(z_+(x,y)\) in the \(D=+1\) sector.  Equation
\eqref{eq:supp-FP-face-YBE} states that the two orders of resolving three
overlapping faces give the same operator.

Let \(d=\pm1\) label the two face sectors.  Denote by
\(L_j^{d'd}(u)\) the component of \(L_j^{\rm face}(u)\) that maps an
incoming label \(d\) to an outgoing label \(d'\).  The corresponding operator
on the color and face-label spaces is
\begin{equation}
 \mathbb L_j(u)
 =
 \sum_{d,d'=\pm1}
 E_{d'd}^{(h)}\otimes L_j^{d'd}(u).
 \label{eq:supp-FP-height-L}
\end{equation}
Ordinary matrix multiplication now sums over the intermediate face labels.
Equations~\eqref{eq:supp-FP-RLGL-transition} and
\eqref{eq:supp-FP-face-YBE} allow the two color lines to pass through every
site.  Closing both the color line and the face-label path gives the face
monodromy matrix
\begin{equation}
\mathbb T(u)=\mathbb L_N(u)\cdots\mathbb L_1(u),
 \label{eq:supp-FP-face-monodromy}
\end{equation}
and the corresponding transfer matrix
\begin{equation}
 t_{\rm FP}^{[1],\rm per}(u)
 =
 \operatorname{tr}_{V\otimes h}^{\rm face}
 \mathbb T(u)
 \label{eq:supp-FP-periodic-transfer}
\end{equation}
with
\begin{equation}
 [t_{\rm FP}^{[1],\rm per}(u),
 t_{\rm FP}^{[1],\rm per}(v)]=0,
 \qquad
 t_{\rm FP}^{[1],\rm per}(0)=\1,
 \qquad
 -\left.\partial_u t_{\rm FP}^{[1],\rm per}(u)\right|_{u=0}
 =H_{\rm FP}.
 \label{eq:supp-FP-periodic-integrability}
\end{equation}

The face-type \(RLL\) relation admits a vertex realization on an enlarged
auxiliary space~\cite{Felder1994,EtingofVarchenko1998}.  In this realization
the six-state Lax operator \(L_{\rm FP}^{[1]}(u)\) and the corresponding
\(36\times36\) intertwiner \(R_b(u,v)\) obey an ordinary RLL relation, written
explicitly below.
Introduce the six-dimensional auxiliary space
\begin{equation}
 V_6=\mathbf3\oplus\bar{\mathbf3}
 =
 \operatorname{span}
 \{a^+,b^+,c^+,b^-,c^-,a^-\}.
 \label{eq:supp-FP-V6}
\end{equation}
In this ordered basis, the FP Lax operator is
\begin{equation}
 L_{\rm FP}^{[1]}(u)=
 \begin{pmatrix}
 \1&-u\sigma^z&0&0&-ub^2\sigma^y&0\\
 0&0&\sigma^x&b\1&0&0\\
 \sigma^x&0&0&0&0&0\\
 \sigma^z&-u\1&0&0&0&0\\
 0&0&0&\sigma^y&0&u\sigma^x\\
 0&0&\sigma^y&0&0&0
 \end{pmatrix} .
\label{eq:supp-FP-L6}
\end{equation}
All omitted entries vanish.  Closing the auxiliary line in this six-state
representation gives
\begin{equation}
 t_{\rm FP}^{[1],\rm per}(u)
 =
 \operatorname{tr}_{V_6}
 [L_{{\rm FP},N}^{[1]}(u)\cdots
  L_{{\rm FP},1}^{[1]}(u)] .
 \label{eq:supp-FP-L6-transfer}
\end{equation}
The loop \(a^+\to c^+\to b^+\to a^+\) generates the
\(\sigma^x\sigma^x\sigma^z\) term, the coin-flip loop generates
\(b\sigma^z\sigma^z\), and the loop crossing the two three-state sectors
generates \(b^2\sigma^z\sigma^y\sigma^y\).  Thus
Eq.~\eqref{eq:supp-FP-L6-transfer} reproduces
Eq.~\eqref{eq:supp-FP-H} at first order.

The corresponding \(36\times36\) auxiliary intertwiner is also explicit.
It is convenient to display it first in the color--coin order
\begin{equation}
 \widetilde V_6=(a^+,c^+,b^+,a^-,c^-,b^-).
 \label{eq:supp-FP-R-color-basis}
\end{equation}
In the induced coin-pair decomposition
\((++),(+-),(-+),(--)\), one has at \(b=1\)
\begin{equation}
 \widetilde R_1(u,v)=
 \begin{pmatrix}
 R^{\rm PS}(u,v)&0&0&0\\
 0&M_{00}(u,v)&M_{01}(u,v)&0\\
 0&M_{10}(u,v)&M_{11}(u,v)&0\\
 0&0&0&R^{\rm PS}(u,v)^{\mathsf T}
 \end{pmatrix},
 \label{eq:supp-FP-R36-block}
\end{equation}
where \(R^{\rm PS}\) is the three-color matrix in
Eq.~\eqref{eq:HFF-R}.  To specify the four crossed blocks, order the color
pairs as
\begin{equation}
 \mathcal C=(aa,ac,ab,ca,cc,cb,ba,bc,bb),
 \qquad
 s=u+v,
 \qquad
 d_-=v-u,
 \qquad
 d_+=u-v.
 \label{eq:supp-FP-R36-color-pairs}
\end{equation}
On the off-diagonal subspace
\(\mathcal O=(ac,ab,ca,cb,ba,bc)\),
\begin{equation}
 M_{00}|_{\mathcal O}=M_{11}|_{\mathcal O}
 =\operatorname{diag}(d_-,d_-,d_+,d_-,d_+,d_+).
 \label{eq:supp-FP-R36-crossed-diagonal}
\end{equation}
On the diagonal-color subspace
\(\mathcal D=(aa,cc,bb)\), the four blocks are
\begin{equation}
\begin{aligned}
 M_{00}|_{\mathcal D}
 &=
 \begin{pmatrix}
 s&-2u&2\ii u\\
 -2v&s&-2\ii u\\
 -2\ii v&2\ii v&s
 \end{pmatrix},
 &
 M_{11}|_{\mathcal D}
 &=
 \begin{pmatrix}
 s&-2v&-2\ii v\\
 -2u&s&2\ii v\\
 2\ii u&-2\ii u&s
 \end{pmatrix},\\[3pt]
 M_{01}|_{\mathcal D}
 &=
 \begin{pmatrix}
 0&2u&2\ii u\\
 2u&0&-2\ii v\\
 2\ii u&-2\ii v&4v
 \end{pmatrix},
 &
 M_{10}|_{\mathcal D}
 &=
 \begin{pmatrix}
 0&2v&-2\ii v\\
 2v&0&2\ii u\\
 -2\ii v&2\ii u&4u
 \end{pmatrix}.
\end{aligned}
\label{eq:supp-FP-R36-diagonal-core}
\end{equation}
The remaining off-diagonal entries of the coin-flip blocks are
\begin{align}
 (M_{01})_{ac,ca}=(M_{01})_{ab,ba}
 &=(M_{01})_{ca,ac}=(M_{01})_{ba,ab}=2u, 
 (M_{01})_{cb,bc}=(M_{01})_{bc,cb}=2v,
 \label{eq:supp-FP-R36-M01}\\
 (M_{10})_{ac,ca}=(M_{10})_{ab,ba}
 &=(M_{10})_{ca,ac}=(M_{10})_{ba,ab}=2v, 
 (M_{10})_{cb,bc}=(M_{10})_{bc,cb}=2u.
 \label{eq:supp-FP-R36-M10}
\end{align}
All entries not fixed by
Eqs.~\eqref{eq:supp-FP-R36-crossed-diagonal}--
\eqref{eq:supp-FP-R36-M10} vanish.  For nonzero \(b\), the corresponding
matrix is obtained by the diagonal color gauge
\begin{equation}
 G_b=\operatorname{diag}(1,b^{-1/2},1,1,b^{-1/2},1),
 \qquad
 \widetilde R_b
 =(G_b\otimes G_b)\widetilde R_1(G_b\otimes G_b)^{-1}.
 \label{eq:supp-FP-R36-b-gauge}
\end{equation}
The permutation \(\pi=(1,3,2,6,5,4)\) converts
Eq.~\eqref{eq:supp-FP-R-color-basis} to the basis in
Eq.~\eqref{eq:supp-FP-V6}.  Applying the same permutation to both auxiliary
factors gives the matrix \(R_b(u,v)\) used with
Eq.~\eqref{eq:supp-FP-L6}.

Direct multiplication gives the ordinary relations
\begin{align}
 R_{b,12}(u,v)L_1^{[1]}(u)L_2^{[1]}(v)
 &=L_2^{[1]}(v)L_1^{[1]}(u)R_{b,12}(u,v),
 \label{eq:supp-FP-R36-RLL}\\
 R_{b,12}(u,v)R_{b,13}(u,w)R_{b,23}(v,w)
 &=R_{b,23}(v,w)R_{b,13}(u,w)R_{b,12}(u,v).
 \label{eq:supp-FP-R36-YBE}
\end{align}
The same structure becomes especially transparent in the checked
convention \(\check R_1=P_6\widetilde R_1\):
\begin{equation}
 \check R_1(u,v)
 =(u+v)\mathcal K_{\rm H}-\ii(u-v)\mathcal E_{\rm dbl}
 =u\mathcal B+v\mathcal B^{-1},
 \qquad
 \mathcal B=\mathcal K_{\rm H}-\ii\mathcal E_{\rm dbl},
 \label{eq:supp-FP-R36-Baxterization}
\end{equation}
with
\begin{equation}
 \mathcal K_{\rm H}^2=\1,
 \qquad
 \mathcal E_{\rm dbl}^2=0,
 \qquad
 [\mathcal K_{\rm H},\mathcal E_{\rm dbl}]=0,
 \qquad
 \operatorname{rank}\mathcal E_{\rm dbl}=14.
 \label{eq:supp-FP-R36-algebra}
\end{equation}
Moreover, \(\mathcal B\) satisfies the constant braid relation.  At the
reflected point,
\begin{equation}
 \check R_1(u,-u)=-2\ii u\mathcal E_{\rm dbl},
 \label{eq:supp-FP-R36-reflected}
\end{equation}
so it has rank fourteen.  If \(F_{14}\) denotes the normalized embedding
of the corresponding fourteen-dimensional subspace, define
\begin{equation}
 L_{\rm FP}^{[2]}(u)
 =
 (F_{14}^{\dagger}\otimes\1)
 L_{\rm FP,1}^{[1]}(u)L_{\rm FP,2}^{[1]}(-u)
 (F_{14}\otimes\1),
 \qquad
 t_{\rm FP}^{[2]}(u)
 =
 \operatorname{tr}_{14}
 \prod_{j=N}^{1}L_{{\rm FP},j}^{[2]}(u).
 \label{eq:supp-FP-L14}
\end{equation}
This is the non-scalar transfer operator retained by the periodic closure.

For a fixed boundary face label, the corresponding open Hamiltonian is
\begin{equation}
 H_{\rm FP}^{\rm open}
 =
 C_1+\sum_{j=2}^{L}(A_j+B_j+C_j),
 \label{eq:supp-FP-open-H}
\end{equation}
where the exterior Pauli variables are fixed by the boundary cap.  Let
\(\tau_{\rm FP}^{[1]}(u)\) denote its projected transfer operator.  It obeys
\begin{equation}
 \tau_{\rm FP}^{[1]}(0)=\1,\qquad
 -\left.\partial_u\tau_{\rm FP}^{[1]}(u)\right|_{u=0}
 =H_{\rm FP}^{\rm open},\qquad
 \tau_{\rm FP}^{[1]}(u)\tau_{\rm FP}^{[1]}(-u)
 =P_L^{\rm FP}(u^2)\1 .
 \label{eq:supp-FP-open-inversion}
\end{equation}
If
\begin{equation}
 P_L^{\rm FP}(u^2)
 =\prod_{a=1}^{d_{\rm FP}}
 (1-u^2\epsilon_{a,L}^2),
 \label{eq:supp-FP-open-P}
\end{equation}
the joint eigenvalues are
\begin{equation}
 \Lambda_{\boldsymbol\eta}^{\rm open}(u)
 =\prod_{a=1}^{d_{\rm FP}}
 (1-u\eta_a\epsilon_{a,L}),
 \qquad
 E_{\boldsymbol\eta}=\sum_{a=1}^{d_{\rm FP}}\eta_a\epsilon_{a,L},
 \qquad \eta_a=\pm1 .
\label{eq:supp-FP-open-solution}
\end{equation}
For even \(L\), all sign choices occur.  For odd \(L\), the fixed face label
imposes one central parity constraint on the allowed signs.

We now give the closed periodic spectral equations.  The scalar source of
the first fusion relation is generated by the six-dimensional matrix
\begin{equation}
 A_6(x)=
 \begin{pmatrix}
 1&0&0&-b^2x&-b^4x&-x\\
 1&0&0&0&0&-x\\
 1&0&0&0&-b^4x&0\\
 1&0&0&0&0&0\\
 0&1&0&0&0&0\\
 0&0&1&0&0&0
 \end{pmatrix}.
 \label{eq:supp-FP-A6}
\end{equation}
Its characteristic polynomial is
\begin{equation}
\begin{aligned}
 \chi_{\rm FP}(\lambda,x)
 ={}&\lambda^6-\lambda^5+b^2x\lambda^4
 +(1+b^4)x\lambda^3 -b^4x^2\lambda^2-b^4x^2\lambda-b^6x^3 .
\end{aligned}
\label{eq:supp-FP-spectral-curve}
\end{equation}
Introduce the eight-dimensional scalar matrix
\begin{equation}
 A_8(u)=A_6(u^2)\oplus(bu)\oplus(-bu)
 \label{eq:supp-FP-A8}
\end{equation}
and define
\begin{equation}
 P_N^{\rm FP}(u^2)=\operatorname{tr}A_8(u)^N,\qquad
 \Xi_N^{\rm FP}(u)
 =
 \frac{P_N^{\rm FP}(u^2)^2-P_{2N}^{\rm FP}(u^2)}{2}.
 \label{eq:supp-FP-scalar-sources}
\end{equation}
The first identity is
\begin{equation}
{
 t_{\rm FP}^{[1]}(u)t_{\rm FP}^{[1]}(-u)
 =
 P_N^{\rm FP}(u^2)\1
 +t_{\rm FP}^{[2]}(u)
 +t_{\rm FP}^{[2]}(-u).}
\label{eq:supp-FP-first-fusion}
\end{equation}
Here and below the superscript \(\mathrm{per}\) is suppressed.

Two four-dimensional companion matrices determine the remaining scalar
returns.  Define \(M_a(u)\) and \(M_b(u)\) by
\begin{align}
 \det[\mu\1-M_a(u)]
 &=
 \mu^4-b^2\mu^2-b^3(1+b^2)u\mu-b^6u^2,
 \label{eq:supp-FP-Ma}\\
 \det[\mu\1-M_b(u)]
 &=
 \mu^4+(1+b^2)\mu^3+b^2\mu^2-b^6u^2,
 \label{eq:supp-FP-Mb}
\end{align}
and set
\begin{align}
 A_{+,N}^{\rm FP}(u)
 &=
 u^N\left\{
 \operatorname{tr}[M_a(u)^N]
 +(-1)^N\operatorname{tr}[M_b(u)^N]\right\},
 \label{eq:supp-FP-Aplus}\\
 A_{-,N}^{\rm FP}(u)
 &=
 u^N\left\{
 \operatorname{tr}[M_a(u)^N]
 +\operatorname{tr}[M_b(u)^N]\right\}.
 \label{eq:supp-FP-Aminus}
\end{align}
With
\begin{equation}
 \mathcal P_z=\prod_{j=1}^{N}\sigma_j^z,\qquad
 \mathcal P_z^2=\1,
 \label{eq:supp-FP-parity}
\end{equation}
the second reflected relation closes entirely on the fundamental and first
fused transfers:
\begin{equation}
{
\begin{aligned}
 t_{\rm FP}^{[2]}(u)t_{\rm FP}^{[2]}(-u)
 ={}&
 \Xi_N^{\rm FP}(u)\1
 +A_{+,N}^{\rm FP}(u)\mathcal P_zt_{\rm FP}^{[1]}(u) +
 A_{-,N}^{\rm FP}(u)\mathcal P_zt_{\rm FP}^{[1]}(-u) +
 (bu)^N[t_{\rm FP}^{[1]}(u)]^2
 +(-bu)^N[t_{\rm FP}^{[1]}(-u)]^2 .
\end{aligned}}
\label{eq:supp-FP-second-fusion}
\end{equation}
For even \(N\), \(A_{+,N}^{\rm FP}=A_{-,N}^{\rm FP}\).  Their distinction is
essential for odd \(N\).  Equations~\eqref{eq:supp-FP-first-fusion} and
\eqref{eq:supp-FP-second-fusion} are the finite FP hierarchy.

To state its exact spectral form, take a simultaneous eigenstate with
\(\mathcal P_z=\eta=\pm1\), and denote the two eigenvalue polynomials by
\(\Lambda_\eta^{[1]}(u)\) and \(\Lambda_\eta^{[2]}(u)\).  Their degree
structure follows from the loop expansion of the two Lax operators.  In the
same parametrization as the FFD chain, write:
\begin{equation}
 \Lambda_\eta^{[1]}(u)
 =\prod_{j=1}^{d_N}\left(1-\frac{u}{\lambda_j}\right),
 \qquad
 \Lambda_\eta^{[2]}(u)
 =\kappa_\eta u^{s_N}
 \prod_{\alpha=1}^{n_N}\left(1-\frac{u}{\gamma_\alpha}\right),
 \qquad
 d_N=\left\lfloor\frac N2\right\rfloor,
 \quad s_N=\left\lceil\frac N2\right\rceil,
 \quad n_N=2d_N-s_N,
 \label{eq:supp-FP-fundamental-roots}
\end{equation}
with \(E=\sum_j\lambda_j^{-1}\).  The lowest power of
\(\Lambda_\eta^{[2]}\) is set by the two-site loops of \(L^{[2]}\), one
factor of \(u\) per two sites.  The highest power is \(2d_N\) by
Eq.~\eqref{eq:supp-FP-first-fusion}.

Matching all powers of \(u\) in Eqs.~\eqref{eq:supp-FP-first-fusion} and
\eqref{eq:supp-FP-second-fusion} gives a polynomial system for the
\(d_N\) roots \(\lambda_j\), the \(n_N\) roots \(\gamma_\alpha\), and
the coefficient \(\kappa_\eta\).  The first relation determines the even
part of \(\Lambda_\eta^{[2]}\), while the second determines its odd part.
Counting shows that this system is square: the number of independent
coefficient equations equals the number of unknowns
(\(5\) for \(N=4\), \(7\) for \(N=6\)).  The physical joint transfer
characters are selected by the extremal coefficients, which are explicit
local operators.  The highest coefficient of
\(t^{[1]}_{\rm FP}\) is generated by the two-site loops
\(b^+\!\to b^-\!\to b^+\) and the four-site loops
\(c^-\!\to a^+\!\to c^+\!\to a^-\!\to c^-\) tiling the ring, and the lowest
coefficient of \(t^{[2]}_{\rm FP}\) by the two two-site loops of
\(L^{[2]}\).  For even \(N\), with the alternating string
\begin{equation}
 \mathcal A_N
 =\sigma_1^x\sigma_2^y\sigma_3^x\sigma_4^y\cdots
 \sigma_{N-1}^x\sigma_N^y
 +\sigma_1^y\sigma_2^x\sigma_3^y\sigma_4^x\cdots
 \sigma_{N-1}^y\sigma_N^x,
 \qquad
 \mathcal A_N^2=2+2\mathcal P_z,
 \label{eq:supp-FP-alternating}
\end{equation}
one finds
\begin{equation}
{
\begin{aligned}
 [u^{N/2}]\,t^{[1]}_{\rm FP}
 &=2(-b)^{N/2}\1-2b^{N/2}\,\mathbf 1_{\{N\,{\rm divisible\ by}\,4\}}\,\mathcal A_N,
 \\
 [u^{N/2}]\,t^{[2]}_{\rm FP}
 &=\ii^{N/2}(1+b^N)\,\mathcal A_N,
 \\
 [u^{N}]\,t^{[2]}_{\rm FP}
 &=\tfrac12\big([u^{N/2}]t^{[1]}_{\rm FP}\big)^2
 -\tfrac12[P_N^{\rm FP}]_{N/2}\1 ,
\end{aligned}}
\label{eq:supp-FP-extremal}
\end{equation}
where the mixed tilings are excluded by color continuity, so that the
\(\mathcal A_N\) term of \(t^{[1]}\) appears only when \(N\) is divisible by four.  Since
\(\mathcal A_N\) takes the values \(\pm2\) on \(\mathcal P_z=+1\) and
\(0\) on \(\mathcal P_z=-1\), the extremal coefficients of every joint
character are quantized:
\begin{equation}
{
 \begin{aligned}
 \eta=-1:\quad
 &\left([u^{N/2}]\Lambda_{\eta}^{[1]},[u^{N/2}]\Lambda_{\eta}^{[2]}\right)
 =\big(2(-b)^{N/2},0\big),\\
 \eta=+1:\quad
 &\left([u^{N/2}]\Lambda_{\eta}^{[1]},[u^{N/2}]\Lambda_{\eta}^{[2]}\right)
 =\big(2(-b)^{N/2}\mp4b^{N/2}\mathbf 1_{\{N\,{\rm divisible\ by}\,4\}},\
 \pm2\,\ii^{N/2}(1+b^N)\big).
 \end{aligned}}
\label{eq:supp-FP-selection}
\end{equation}
For odd \(N\) the extremal coefficients are again finite sums of Pauli
strings, and their joint spectrum
plays the same role. 

At \(b=1\), the fermionic spin structure is obtained from the
parity-graded closure
\begin{equation}
 \mathcal P_\pm=\frac{1\pm\mathcal P_z}{2},
 \qquad
 t_D(u)=\mathcal P_-t^{[1]}_{\rm FP}(u)
 +\mathcal P_+t^{[1]}_{{\rm FP},g_{\rm s}}(u),
 \qquad
 g_{\rm s}=\operatorname{diag}(1,1,1,1,-1,-1).
 \label{eq:supp-FP-DFNR-closure}
\end{equation}
This closure combines the two fermionic spin structures rather than using
an ordinary RTT twist.  Its first derivative gives the supersymmetric chain
of Ref.~\cite{DFNR2016}.  The parity factor in
Eq.~\eqref{eq:supp-FP-second-fusion} is absorbed into this closure, while
the two terms proportional to \(A_{\pm,N}^{\rm FP}\) acquire a minus sign.  The
three extremal coefficients reduce to the scalars
\(2(-1)^{N/2}\), \(0\) and \(-2\).

\end{document}